\documentclass[11pt]{amsart}
\usepackage[foot]{amsaddr}
\usepackage{setspace}
\usepackage{bbm}
\usepackage{apxproof}
\usepackage{mystyle}
\usepackage{proof}
\usepackage[normalem]{ulem}
\usepackage{amsmath}
\usepackage{amsfonts}
\usepackage{latexsym}
\usepackage[all]{xy}  
\usepackage{tikz}
\usepackage{xcolor}
\usepackage{xfrac}
\usepackage{mathtools}
\usepackage{microtype}

\usepackage{amssymb}
 \usepackage{mathrsfs}
 \usepackage{stmaryrd}
 \usepackage[shortlabels]{enumitem}
 \usepackage{booktabs}
 \usepackage{float}  
 \usepackage{natbib}
 \usepackage{hyperref}
  \usepackage{scalerel}

  \usepackage{orcidlink}
\usepackage[toc,title]{appendix}

\usepackage[nameinlink,capitalise]{cleveref}

\setcitestyle{round}

\definecolor{linkcolor}{RGB}{87,41,40} 

\hypersetup{
    colorlinks=true,
    linkcolor=linkcolor,
    citecolor=linkcolor,   
    urlcolor=linkcolor
}
 
 \makeatletter

\newcommand{\nonproof}[1]{\relax}
 
\newcommand{\vast}{\bBigg@{3}}
\newcommand{\Vast}{\bBigg@{4}}
\newcommand{\VVast}{\bBigg@{5}}
\newcommand{\vastl}{\mathopen\vast}

\newcommand{\vastr}{\mathclose\vast}
\newcommand{\Vastl}{\mathopen\Vast}

\newcommand{\Vastr}{\mathclose\Vast}

\makeatother
 
 \DeclareMathOperator{\unarymost}{\boldsymbol{\mathsf{M}}}
 \DeclareMathOperator{\unaryhalf}{\boldsymbol{\mathsf{H}}}
  \DeclareMathOperator{\unarysome}
 {\boldsymbol{\mathsf{E}}}
 \DeclareMathOperator{\unaryexists}
 {\boldsymbol{\mathsf{E}}}
  
 \DeclareMathOperator{\unaryforall}{\boldsymbol{\mathsf{A}}}
 \DeclareMathOperator{\unarymostorhalf}{\boldsymbol{\mathsf{L}}}

\def\smath#1{\text{\scalebox{.8}{$#1$}}}

\def\sfrac#1#2{\smath{\frac{#1}{#2}}}

\definecolor{defcolor}{RGB}{14,45,97}

\theoremstyle{plain}
\newtheorem{thm}{Theorem}[section]
\newtheorem{lem}[thm]{Lemma}
\newtheorem{prop}[thm]{Proposition}

\usetikzlibrary{calc}

\newtheorem*{prop*}{Proposition}
\newtheorem*{cor*}{Corollary}
\newtheorem*{subclaim*}{Claim}

\newtheorem{conjecture}[thm]{Conjecture}

\theoremstyle{definition}
\newtheorem{defn}[thm]{Definition}
\newtheorem{exa}[thm]{Example}

\newtheorem{rmk}[thm]{Remark}
\newtheorem*{rmk*}{Remark}

\newtheorem*{defn*}{Definition}
\newtheorem*{exa*}{Example}
\newtheorem*{fct*}{Fact} 

\theoremstyle{remark}

 \DeclareMathSymbol{\Nset}{\mathbin}{AMSb}{"4E}
 \DeclareMathSymbol{\Zset}{\mathbin}{AMSb}{"5A}
 \DeclareMathSymbol{\Rset}{\mathbin}{AMSb}{"52}
 \DeclareMathSymbol{\Qset}{\mathbin}{AMSb}{"51}
  \DeclareMathSymbol{\Fset}{\mathbin}{AMSb}{"46}
 \DeclareMathSymbol{\Cset}{\mathbin}{AMSb}{"43}
 \DeclareMathSymbol{\Kset}{\mathbin}{AMSb}{"4B}
 
 \DeclareMathSymbol{\Sset}{\mathbin}{AMSb}{"53}
 
 \newcommand\ind[1]{\ensuremath{\mathbf{1}_{#1}}}

\DeclareFontFamily{U}{matha}{\hyphenchar\font45}
\DeclareFontShape{U}{matha}{m}{n}{ <-6> matha5 <6-7> matha6 <7-8>
matha7 <8-9> matha8 <9-10> matha9 <10-12> matha10 <12-> matha12 }{}
\DeclareSymbolFont{matha}{U}{matha}{m}{n}
\DeclareFontFamily{U}{mathx}{\hyphenchar\font45}
\DeclareFontShape{U}{mathx}{m}{n}{ <-6> mathx5 <6-7> mathx6 <7-8>
mathx7 <8-9> mathx8 <9-10> mathx9 <10-12> mathx10 <12-> mathx12 }{}
\DeclareSymbolFont{mathx}{U}{mathx}{m}{n}

\DeclareMathDelimiter{\ldbrack} {4}{matha}{"76}{mathx}{"30}
\DeclareMathDelimiter{\rdbrack} {5}{matha}{"77}{mathx}{"38}

\DeclareSymbolFont{CMsymbols}{OMS}{cmsy}{m}{n}
\SetSymbolFont{CMsymbols}{bold}{OMS}{cmsy}{b}{n}
\DeclareMathSymbol{\sim}{\mathrel}{CMsymbols}{"18}

\newlength{\mathfrwidth}
\newsavebox{\mathfrbox}
\newenvironment{mathframe}
    {\begin{lrbox}{\mathfrbox}\begin{minipage}{\mathfrwidth}\begin{center}}
    {\end{center}\end{minipage}\end{lrbox}\noindent\fbox{\usebox{\mathfrbox}}}
    
\usetikzlibrary{arrows,shapes,automata,backgrounds,petri} 
\makeindex

\newcommand{\Q}{\mathbb{Q}}
\renewcommand{\VV}{\mathbb{V}}
\newcommand{\conv}{\mathsf{conv}}
\newcommand{\cone}{\mathsf{cone}}
\newcommand{\structure}{\mathscr{M}}  

\newcommand{\cal}{\mathcal}

\newcommand{\term}[1]{\textbf{#1}}

   \newcommand{\excolor}[1]{\noindent\textcolor{linkcolor}{#1}}
\newcommand\xqed[1]{%
  \leavevmode\unskip\penalty9999 \hbox{}\nobreak\hfill
  \quad\hbox{{#1}}}
\newcommand\qeddef{\xqed{${\scriptstyle\spadesuit}$}}
\newcommand\qedthm{\xqed{${\scriptstyle\blacksquare}$}}
\newcommand\qedexa{\xqed{$\scriptstyle\blacklozenge$}}

\newcommand{\Abar}{\overline{A}}

\newcommand{\Ybar}{\overline{Y}}
\newcommand{\Zbar}{\overline{Z}}

\renewcommand{\Model}{\mathfrak{M}}

\newcommand{\dm}{\ensuremath{}}
\newcommand{\sems}[1]{\semantics{$\ensuremath{#1$}}}

\definecolor{dfcond}{RGB}{14,45,97}

\newlength{\depthofsumsign}
\newcommand{\pt}[1]{\textup{({\textcolor[RGB]{14,45,97}{\textsc{$\dm$}#1}})}}
\newcommand{\Nbar}{\overline{N}}
\newcommand{\Pbar}{\overline{P}}

\renewcommand{\H}{\mathcal{H}}
\renewcommand{\M}{\mathcal{M}}

\renewcommand{\H}{\mathcal{H}}

\makeatletter
\renewcommand{\email}[2][]{%
  \ifx\emails\@empty\relax\else{\g@addto@macro\emails{,\space}}\fi%
  \@ifnotempty{#1}{\g@addto@macro\emails{\textrm{(#1)}\space}}%
  \g@addto@macro\emails{#2}%
}
\makeatother

\begin{document}

\title{The Measurable Majority}

\author[L.S. Moss]{\orcidlinki{Lawrence S. Moss}{0000-0002-9908-5774}$^\dagger$}
\address[$\dagger$]{Dept.~of Mathematics, Indiana University, Bloomington.}
\email[$\dagger$]{lmoss@iu.edu}

\author[A.P. Pedersen]{\orcidlinki{Arthur Paul Pedersen}{0000-0002-2164-6404}$^{\ddagger\,\ast}$}
\address[$\ddagger$]{Dept.~of Computer Science \& the Intel Investigations Lab, the City College of New York; \break the Graduate Center \& Remote Sensing Earth Systems Institute, the City University of New York.}
\email[$\ddagger$]{appedersen@ccny.cuny.edu}

\noindent\thanks{($\ast$) The authors are grateful to Sam Alexander, Clint Davis-Stober, Michael Grossberg, Conor Mayo-Wilson, Kris Patel, Rohit Parikh, Marcus Pivato, Andrew Powell, David Schrittesser, Jack Stecher, and Max Stinchcombe, as well as  participants of the \textit{Indiana University Logic Seminar}, for valuable comments and discussion on the subject matter of this paper.}

\begin{abstract}
This paper studies strict majority reasoning in finite electorates using \textit{social decision frames}: finite sets of voters equipped with distinguished families of coalitions interpreted as those voting blocs evaluated to form a strict majority. A coherence criterion for qualitative majority judgments is identified and shown to give an exact characterization for representability of strict majorities by finitely additive measures. In addition, a minimal natural logic for reasoning about strict majorities is shown to be sound and complete. These developments motivate examination of associated combinatorial questions concerning incoherence in finite families of sets; partial results and a conjecture are given. Finally, the results of this paper are applied to correct a classical representation theorem for weak qualitative probability structures due to Patrick Suppes and to establish a May-type characterization for ordinary strict majority rule for social decision frames.
\end{abstract}

\keywords{strict majorities; majority rule; more likely than not; more probable than not; social decision frames; measurability; representability; majority spaces; coherence; Bruno de Finetti; representation of majorities; Patrick Suppes; weak qualitative probability structures; Kenneth O. May; May's Theorem; logic of most; logic of majorities; natural logic; completeness; decidability; combinatorics of axiomatizability}

 \maketitle
\section{Introduction}

Expressions such as \emph{most}, \emph{more likely than not}, and \emph{the majority of} occupy a central place in everyday and scientific reasoning. Still, even as they arise naturally across logic, probability theory, and social choice,  their formal treatment has been limited.  Majority statements are weaker than full probabilistic assertions, yet stronger than purely ordinal or comparative claims, occupying an intermediate position that complicates standard representation results. They invite numerical interpretation, but do not obviously require it.

This paper studies \emph{strict majority reasoning in finite settings}. We introduce \emph{social decision frames}: finite structures consisting of a universe together with a distinguished family of subsets interpreted as those propositions that count as holding ``for most.'' The problem we address is to identify formal conditions under which such qualitative majority judgments admit representation by a finitely additive measure, and when they do not. Our focus on finiteness is deliberate: many of the phenomena we isolate arise only  from finite combinatorial structures rather than from infinitary or measure‑theoretic considerations. While other foundational treatments approach problems of representability by relaxing  mathematical requirements built into standard orthodox doctrine --- such as dropping Archimedean,  commensurability, or even binary comparability conditions \citep{Pedersen2014,PedersenAlexander:2025} --- the present analysis  focuses on recovering elements of standard machinery.

Our main contributions in this paper are twofold. One of them is the identification of a formal criterion of \emph{coherence} that exactly characterizes representability. Coherence is formulated entirely in terms of the frame structure itself, without reference to numbers or measures, yet it is shown to be equivalent to the existence of a finitely additive measure that represents strict majority. Our formal criterion is reminiscent of classical coherence principles in the foundations of probability, most notably those associated with \citet{deFinetti:1931a},  \citet*{KPS}, and \citet{Scott1964}, but it applies directly to majority judgments rather than to comparative or numerical probabilities.  In this sense, measurability emerges as a structural property of majority reasoning rather than as a primitive assumption. This orientation treats representability not as an unexamined convention, but as a problem of formal coordination and measurement validity, whereby the scale type of a qualitative attribute is determined by abstract requirements that the attribute’s measurement imposes on its representation \citep{Pedersen2025Discourse}. 

One motivation for isolating such a condition comes from an analysis of work in qualitative probability that treats ``more likely than not'' as a primitive notion related to judgments of certainty. The apparent simplicity of this setting suggests that a correspondingly simple axiomatization might suffice to characterize measurability. \citet[Theorem 1]{Suppes74} states such a representation result connecting these judgments to probability measures. That result, however, is false, as we show. Even in the finite case, the axioms given do not suffice to guarantee representability:  they might be satisfied while no representing measure exists. The obstruction lies not in infinitary effects or technical subtleties of measure theory, but in finite combinatorial configurations. Drawing on lessons from \citet*{KPS}, coherence precisely rules out these configurations.

A second source of motivation comes from social choice theory. Strict majority rule enjoys powerful characterization results under symmetry and neutrality assumptions, most notably in May's theorem. From this perspective, majority rule is not merely a convenient aggregation procedure but a structurally distinguished one. Our framework clarifies this point. Under a natural symmetry condition --- requiring invariance under permutations of the underlying universe ---every coherent frame collapses to the standard strict majority rule. This yields a May-type characterization result within the present setting and helps to isolate the assumptions that distinguish majority rule from alternative aggregation mechanisms.

Our second main contribution is logical. Building on the coherence characterization, we develop a simple \emph{term logic} designed to capture statements of the form ``most of everything is an $X$,'' together with Boolean operations at both the predicate and sentence levels. In this respect, the system 
aligns with the core objective of the natural logic program: to isolate low-complexity, decidable 
fragments of natural language reasoning that bypass the full machinery of standard first-order 
logic \citep{Moss2015NaturalLogic,moss2018reasoning}.

To give some sense of the kind of reasoning can be captured in this logic, let us consider an example. Suppose a society has six important voting blocs:
non-high-school graduates (NHS),
women (W),
suburban or rural (SR),
church attendees (C),
fiscal conservatives (FC),
and
social liberals (SL).
Each of these blocs can claim at least half of the voters.
At the same time, everyone belongs to at most three of these blocs.
Our logic incorporates a \emph{coherence axiom} entailing under the foregoing assumptions that
each bloc must contain \emph{exactly half} of the voters and that everyone must belong to \emph{exactly three of the six} blocs.
See \cref{ex:gloss}.  Our logic is the smallest one which includes
 coherence axioms like this.

The semantics of the logic is given by frames, and its axioms are directly motivated by coherence. 
We prove that our logic is \emph{sound and complete} with respect to measurable frames. Unlike earlier work on majority logics that relies on infinite models, our completeness result is obtained entirely in the finite setting and yields the finite model property and decidability.

Finally, the paper opens a line of combinatorial investigation. Coherence is an infinite scheme, and we show that it cannot be reduced to any small finite fragment in a straightforward way. To analyze failures of measurability, we introduce the notion of an \emph{incoherence index}, measuring the minimal complexity of incoherent configurations, and we derive partial results and conjectures linking coherence to extremal problems concerning \emph{balanced families} of sets. Taken together, the results of this paper give a unified account of strict majority reasoning in finite settings, clarifying its measure-theoretic foundations, its logical structure, and its combinatorial limits. 

The rest of this paper is organized as follows. \cref{sec:measurable} introduces social decision frames and their measurability, examining their connection to weak qualitative probabilities; \cref{sec:coherence} formulates the coherence criterion and derives its main consequences (including a May-type result); \cref{sec:first-logic} presents the logic of ``most of all'' and proves soundness and completeness; and \cref{section-combinatorial} explores the combinatorial structure of coherence and incoherence. Basic definitions and  technical results, supplementary discussion, and proofs appear in the \hyperref[app:appendix]{Appendices}.

\section{Social Decision Frames \& Weak Qualitative Probabilities}
\label{sec:measurable}

 A \term{(social decision) frame} is a pair $\structure=(W,\mathcal{M})$ where (1)
$W$ is a nonempty set whose elements are called \term{voters} and
(2) $\mathcal{M}\subseteq 2^W$ is a designated family of \term{voting blocs}. A frame $\structure$ is said to be  \term{finite} if $W$ is finite. 
All frames in this paper are finite. From a doxastic standpoint, a frame may be viewed as encoding a qualitative acceptance rule, where the family $\mathcal{M}$ isolates those propositions warranting outright acceptance \citep{Pedersen2012Belief}. 

A frame  $\structure=(W,\mathcal{M})$  is said to be a \term{majority space}\footnote{The original 
source of the terminology is~\cite{PacuitSalame06}.  But there the authors were almost exclusively interested in \emph{infinite} spaces.
We shall compare and contrast our motivation and results with theirs at several points.}  if it is \term{measurable} --- that is to say, if there exists a finitely additive set function $\mu:\tpow{W}\to\mathbb{N^+}$ such 
 for every $A\in \tpow{W}$:
\begin{equation}\label{eq:measurability}
\mbox{$A\in \M$ \quad if and only if  \quad $\mu(A)\,>\,\frac{1}{2}\mu(W)$.}
\end{equation}
A finitely additive set function $\mu$ satisfying condition (\ref{eq:measurability}) is said to  \term{represent} $\structure$. We adopt the convention of dropping brackets around singletons in the argument of  a finitely additive set function $\mu$, thereby writing $\mu(w)$ instead of $\mu(\{w\})$. 

We turn to examples of majority spaces, starting with a canonical example.

\begin{exa}[\excolor{Strict Majorities}]
\label[exa]{exa:strict_majorities}
Let $n=|W|$, and let $\M$ be the family of subsets of $W$ with
more than $n/2$ elements.  Then the frame $\structure=(W,\M)$ is measurable and represented by the counting measure $\mu$, whereby
$\mu(w) = 1$ for all $w\in W$. \qedexa
\end{exa}

\begin{exa}[\excolor{Majority Voting with a Tie-Breaker}]\label[example]{exa:majority_tie_breaker} Suppose $|W| = 2n $, and let  $w^*\in W$ be a fixed \emph{tie-breaker} for $W$ in the sense that for any subset $A$ of $W$ with exactly $n/2$ elements, we stipulate that $A\in \M$ just in case $w^*\in A$. When $A$ has more or less than $n/2$ elements, we stipulate that $A\in \mathcal{M}$ as in  \cref{exa:strict_majorities}.
Then the space $\structure=(W,\M)$  is measurable.
\qedexa
\end{exa}

\begin{exa}[\excolor{Principal Ultrafilters}]\label[example]{exa:principal-ultrafilters}
 Let $W$ be any finite set, let  $w^*\in W$, and consider the set $\M = \set{A\in\tpow{W} : w^*\in A}$.
Then the space $\structure=(W,\M)$ is measurable and represented by measure $\mu$ satisfying  $\mu(w^*)=|W|$ and $\mu(w)=1$ for $w\neq w^*$.
\qedexa
\end{exa}

The next example illustrates a social decision frame that fails to be a majority space.

\begin{exa}[\excolor{Oddly Even Social Decisions}]\label[example]{exa:sixpoints}
Let $W = [6] = 
\set{1,2,3,4,5,6}$, and let 

\begin{align*}
\M \quad& =\quad  \bigl\{A \in\tpow{W} : \mbox{either }|A|\geq 4,\mbox{ or else }|A| = 3\mbox{ and  }\sum_{i\in A}i\mbox{ is an even number}\bigr\}.
\end{align*}
Since use is made of this example at several points in this paper, we
 make a few observations about it. The family
$\M$ is closed under supersets.  
Also,  $\sum_{i\in W} i = 21$, an odd number, so 
 for all subsets $X\subseteq W$ of size $3$,
 either $X$ or $W\setminus X$  is a member of
$\M$ but not both. For all sets $X\subseteq W$,
  $X\in\M$ iff $W\setminus X\notin \M$.\qedexa

   What holds for $W =  [6]$ also applies to
$W = [4n+2]= \set{1,2, \ldots, 4n+2}$ for any positive integer $n$ and family 
$\M$ consisting of subsets of $W$ whose size is $\geq 2n+2$ together with those subsets of size exactly $2n+1$ whose sum is even.  Call any such a social decision frame \emph{oddly even}.
   \end{exa}

\begin{prop}\label[prop]{prop:six-points} No oddly even social decision frame
 is measurable.\qedthm
\end{prop}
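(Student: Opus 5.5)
The plan is a double-counting argument. I will exhibit a finite list (with repetitions allowed) of sets $A_1,\dots,A_k\in\M$ such that every voter of $W=[4n+2]$ lies in exactly $k/2$ of them. If some finitely additive $\mu$ represented the frame, finite additivity would give
\[
\sum_{i=1}^k \mu(A_i) \;=\; \sum_{w\in W} \bigl|\{i : w\in A_i\}\bigr|\,\mu(w) \;=\; \tfrac{k}{2}\,\mu(W).
\]
By (\ref{eq:measurability}), however, each $\mu(A_i)>\frac12\mu(W)$, so the left side is strictly greater than $\frac k2\mu(W)$. This is a contradiction.

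The only real work is finding such a balanced family inside the size-$(2n+1)$ members of $\M$. A set of size $2n+1$ has even sum exactly when it contains an even number of odd elements. Write the odd elements of $W$ as $o_1,\dots,o_{2n+1}$ and the even elements as $e_1,\dots,e_{2n+1}$. I will use two kinds of sets:
\begin{itemize}
\item for $i=1,\dots,2n+1$, the set $B_i=(\{o_1,\dots,o_{2n+1}\}\setminus\{o_i\})\cup\{e_i\}$, which has $2n$ odd elements and so lies in $\M$;
\item the set $E=\{e_1,\dots,e_{2n+1}\}$, which has $0$ odd elements and so lies in $\M$.
\end{itemize}
The family is $B_1,\dots,B_{2n+1}$ together with $2n-1$ copies of $E$, for a total of $k=4n$ sets. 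Each odd $o_j$ lies in every $B_i$ with $i\neq j$ and in no copy of $E$, so in $2n$ sets. Each even $e_j$ lies in $B_j$ and in all $2n-1$ copies of $E$, so also in $2n$ sets. Since $2n=k/2$, the family is balanced as required.

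For $n=1$ the family is $\{1,3\}\cup\{e\}$-type sets together with the single set $\{2,4,6\}$, giving four sets with every point covered twice. This matches the $[6]$ case. The main obstacle is choosing the multiplicities so that the odd and even counts agree. Once the family above is written down, the rest is the routine additivity computation.
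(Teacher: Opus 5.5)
Your argument is correct, and it differs from the proof the paper gives. Your family checks out. With $o_1,\dots,o_{2n+1}$ and $e_1,\dots,e_{2n+1}$ the odd and even points of $[4n+2]$, each $B_i$ has $2n$ odd elements and $E$ has none, so all $k=4n$ sets lie in $\M$. Every point lies in exactly $2n=k/2$ of them, so $\sum_i\mu(A_i)$ is forced to be both $>\frac{k}{2}\mu(W)$ and $=\frac{k}{2}\mu(W)$.

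The paper's appendix treats only $W=[6]$, and does so in two other ways. The first is an extremal swap argument: take a $\mu$-minimal point $i^*$ and a $3$-set $A\in\M$ containing it, then trade $i^*$ for a point $j\notin A$ of opposite parity. This gives a set $B$ with $\mu(B)\geq\mu(A)>\frac{1}{2}\mu(W)$ but odd sum. The second is the single identity $\{2,4,6\}+W=\{1,2,4\}+\{3,4,6\}+\{2,5,6\}$, which plays one member of $\M$ against three non-members.

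Your route is instead the mechanism of Lemma~\ref{lem:measurable_implies_coherent-nonconditional}, applied to an explicit sequence satisfying (\textsc{c1}) and (\textsc{c4}) but violating (\textsc{c3}). For $n=1$ your family has the same shape as the witness in Example~\ref{exa:sixpoints-redux}. What your approach buys is a uniform, mechanically checkable certificate of incoherence for every $n$, sitting squarely inside the paper's coherence framework. Like the paper's arguments, it never uses positivity of $\mu$ on points. Its cost is that you must design a balanced family.

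The swap argument needs no design at all and extends verbatim to $[4n+2]$. A $(2n+1)$-set contains only $2n$ points besides $i^*$, while there are $2n+1$ points of the opposite parity. However, the paper never writes out the general case, so your proposal establishes the proposition as stated more completely than the written proof does.
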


Our notion of measurability imposes the requirement that all points have \emph{positive} measure. 
 The proof of \cref{prop:six-points} (see \cref{app:proof-proposition-six-points}) in fact establishes the  stronger result that $\structure$ fails to be representable even by a measure
that is not subject to this requirement.

\begin{defn}
Let $\structure=(W,\M)$ be a  frame.
Define:
\begin{align*}
\H \quad&=\quad \Bigl\{A \in \tpow{W} :  A\notin \M \;\mbox{ and } \;  W\setminus A\notin \M\Bigr\}. 
\end{align*}
\qeddef
\end{defn}
In what follows, we shall adopt standard  notational convention of writing $A^c$ to abbreviate the set-theoretic (relative) complement of the set $A$ in $W$. When the context in clear, we shall likewise write $\Abar$ to abbreviate sequences of sets.\footnote{Thus, $A^c$ denotes $\set{w\in W: w\notin A}$, while $\Abar$ denotes $A_{1},\ldots,A_{n}$.} 
 
 The notation $\H$ is a mnemonic device for ``half,'' as the next examples help to explain.
 
\begin{exa}\label[exa]{exa:H}
If $\structure=(W,\M)$ is represented by $\mu$, then $\H = \set{A\subseteq W : \mu(A) = \frac{1}{2}\mu(W)}$.
Hence, in \cref{exa:strict_majorities}, $\H = \set{A\subseteq W : |A| = \frac{1}{2}|W|}$, while in \cref{exa:majority_tie_breaker,exa:principal-ultrafilters,exa:sixpoints}, $\H = \varnothing$.\qedexa
\end{exa}

\subsection{Counterexample to a Representation Theorem for Weak Qualitative Probabilities}

\citet{Suppes74} defines a \emph{weak qualitative probability structure} to be a triple of the form $\pair{X, C, M}$, where $X$ is a nonempty set and $C$ and $M$ are families of events (subsets of $X$). Events in $C$ are said to be \emph{certain}, while events in $M$ are said to be \emph{more likely than not}. An event is thereby defined to be \emph{impossible} just in case its complement is certain, to be \emph{as likely as not} just in case the event is neither certain nor more likely than not and its complement is neither certain nor more likely than not, and  to be
\emph{less likely than not} if and only if its complement is
more likely than not.

Suppes requires weak qualitative probability structures to satisfy six axioms \citep[pp. 166-167; notation adapted]{Suppes74}:

\begin{itemize}[itemsep=.4em,leftmargin=8em,labelsep=.25in,topsep=1em]
\item[\emph{Axiom 1.}]  $X$ is certain.

\item[\emph{Axiom 2.}]   If $A$ implies $B$ and $A$ is certain, then $B$ is certain.

 \item[\emph{Axiom 3.}]  
 If $A$ implies $B$  and $A$ is more likely than not, then $B$ is more likely than
not.

 \item[\emph{Axiom 4.}] 
If $A$ implies $B$ but $B$ does not imply $A$ and $A$ is as likely as not, then $B$ is more likely than not.

 \item[\emph{Axiom 5.}]  
 If $A$ is certain, then $A^{c}$ is  impossible.

 \item[\emph{Axiom 6.}]  
If $A$ is more likely than not, then  $A^{c}$ is less likely than not.

\end{itemize}

Theorem 1 of \citep[p. 167]{Suppes74} states that
if $X$ is finite or countable and $\pair{X,C,M}$ is a weak qualitative probability structure, then there is a probability measure $P$
defined on the power set of $X$ such that:
\begin{enumerate}[itemsep=5pt,leftmargin=5em,labelsep=.25in,topsep=1em]
\item[(i)] $P(A) \,=\, 1$ \quad if and only if \quad  $A$ is certain; \quad and
\item[(ii)] $P(A) \,>\, \frac{1}{2}$ \quad if and only if \quad $A$ is more likely than not.
\end{enumerate}
While Suppes forgoes proving his Theorem 1, \cref{exa:sixpoints} shows that any attempt to do so, at least so formulated, would fail.  

To see this, consider the triple $\pair{X,C,M}$ for which $X = [6]$, $C=\{[6]\}$, and $M=\M$, 
the oddly even social decision space from \cref{exa:sixpoints}.  Let us check that $\pair{X,C,M}$ is a weak qualitative probability structure in the sense of \citet{Suppes74}.
Axioms 1 and 2 are satisfied in view of the fact that the only certain set is the whole space.
Since $\M$ is closed under supersets, Axiom 3 is  also satisfied.  Moreover, Axiom 4 is vacuously satisfied because there are no events which are as likely as not (see
\cref{exa:H}: $\H = \varnothing$).  Finally, that Axiom 5 and Axiom 6 are satisfied follows immediately from the definitions for \emph{impossible} and \emph{less likely than not}.
By \cref{prop:six-points}, the pair $\pair{X,\M}$ is a social decision frame for which there is no 
 representing measure.  This contradicts (i) and (ii) above of Suppes's Theorem 1 \citeyearpar[p. 167]{Suppes74}.

 Axiom 6 might be reformulated to 
 something like the following requirement: 
 \begin{itemize}[itemsep=1em,leftmargin=8em,labelsep=.25in,topsep=1em]
 \item[$\emph{Axiom 6}^{\,'}$.]  
If $A$ is more likely than not, then   $A^{c}$ is not also more likely than not and $A^{c}$ is not  as likely as not.
\end{itemize}
Going further, Suppes's
axioms for weak qualitative probability structures might be supplemented with the requirements expressed in \cref{prop:easy} stated  below. All this notwithstanding, the triple above based on \cref{exa:sixpoints} fulfills these additional requirements, undercutting the existence of a probability measure $P$ on the power set of $X$ satisfying Suppes' conditions (i) and (ii).  Measurability, as we show in this paper, requires adherence to \emph{coherence.}

\section{Majority Coherence and Majority Measures}
\label{sec:coherence}

We have seen the definition of a majority space $\structure=(W,\M)$ in terms of measurability.   
The definition of 
coherence we introduce in this paper is an equivalent condition that may be checked by looking exclusively at $\M$ and sequences of 
subsets of $W$.  A crucial distinction between measurablity and coherence is that measurablity is of the form
``there is a function $\ldots$'' while coherence is of the form ``for all sequences of subsets $\ldots$''
Our coherence criterion is akin in form to classical coherence principles from the foundations of probability \citep[see, e.g.,][]{deFinetti:1931a,KPS,Scott1964} and subsequent generalizations of them introduced in areas of study like imprecise probabilities \citep[see, e.g.,][]{Walley:1991}.

\begin{rmk} Following ordinary convention,  the same notation used to denote a set $E\subseteq W$ will also be  used to denote its \emph{indicator} (or \emph{characteristic}) function $\ind{E}$ such that for every
$w\in W$:
        \begin{align*}
\ind{E}(w)\quad&= \quad \begin{cases}
1 & \mbox{if }w\in E;\\
0 &\mbox{otherwise.}
\end{cases}
\end{align*} 
Thus, the notation for denoting a set $E$ in $\mathscr{P}(W)$ will be used to identify the set's indicator function $E:W\to\{0,1\}$ in $\tpow{W}$. In this way, the the set of all subsets of $W$ is embedded in the rational linear space $\mathbb{Q}^W$ as the set of Boolean functions $\tpow{W}$ and so admits pointwise scalar and arithmetic operations.  The sum $E+F$ of events $E,F\in\tpow{W}$ is accordingly the function $E+F:W\to\mathbb{Q}$ given by pointwise addition $\bigl( E+F\bigr)(w)= E(w)+F(w)$. Likewise, a function in $\mathbb{Q}^{W}$ that assumes a constant numerical value $n$ on $W$ is denoted by
the constant value its assumes.
\qedexa
\end{rmk}

\begin{defn}[\excolor{Coherence}]
\label[definition]{defn:coherence}
A frame  $\structure=(W,\M)$ is  said to be \term{coherent} if:
\begin{itemize}[leftmargin=35pt,topsep=15pt,itemsep=10pt,labelsep=10pt]
\item[\pt{c}] For every positive integer $n$ 
         and sequence of sets $A_1,\ldots, A_n\in\tpow{W}:$
         \item[] If \quad\quad  \pt{c1}\;
$A_{i}\in \M\cup \H$ \quad for each $i=1,\ldots, n$ \qquad and \qquad \pt{c2}\quad  $\displaystyle\frac{n}{2} \;\geq \; \displaystyle\sum_{i=1}^n A_{i}$$:$
\item[] then\quad \pt{c3}\;
 $A_i\in \H$\qquad\quad\;  for each $i=1,\ldots,n$ \qquad and \qquad \pt{c4}\quad 
$\displaystyle  \frac{n}{2}\;=\; \sum_{i=1}^n A_{i}$. 
\end{itemize}
The frame is said to be \term{incoherent} if it fails to be coherent.
\qeddef
\end{defn}

\begin{lem}\label[lem]{lem:measurable_implies_coherent-nonconditional}
Every measurable social decision frame $\structure=(W,\M)$ is coherent.\qedthm
\end{lem}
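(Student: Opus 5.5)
Fix a measure $\mu:\tpow{W}\to\mathbb{N}^+$ representing $\structure$. The plan is to integrate the pointwise inequality (c2) against $\mu$ and compare the result with the lower bounds on $\mu(A_i)$ supplied by (c1). By \cref{exa:H}, membership in $\M\cup\H$ translates into $\mu(A)\geq\frac12\mu(W)$, with equality exactly when $A\in\H$. So (c1) gives $\mu(A_i)\geq\frac12\mu(W)$ for each $i$, hence
\[
\sum_{i=1}^n \mu(A_i)\;\geq\;\frac{n}{2}\,\mu(W),
\]
and the inequality is strict unless every $A_i\in\H$.

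For the other direction I use finite additivity: for any $E\subseteq W$, $\mu(E)=\sum_{w\in W}E(w)\mu(w)$. Summing over $i$ gives
\[
\sum_{i=1}^n\mu(A_i)\;=\;\sum_{w\in W}\Bigl(\sum_{i=1}^n A_i(w)\Bigr)\mu(w)\;\leq\;\sum_{w\in W}\frac{n}{2}\,\mu(w)\;=\;\frac{n}{2}\,\mu(W),
\]
where the inequality is (c2) applied pointwise, weighted by the nonnegative numbers $\mu(w)$.

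The two displays force $\sum_i\mu(A_i)=\frac n2\mu(W)$. Equality in the first display means no $A_i$ satisfies $\mu(A_i)>\frac12\mu(W)$, so each $A_i\in\H$; this is (c3). Equality in the second display gives $\sum_{w}\bigl(\frac n2-\sum_iA_i(w)\bigr)\mu(w)=0$, a sum of nonnegative terms. Since every point has $\mu(w)>0$ (the range of $\mu$ is $\mathbb{N}^+$), each coefficient $\frac n2-\sum_iA_i(w)$ must vanish, which is (c4).

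The only delicate point is this last step. Positivity of $\mu$ on singletons is exactly what upgrades the integrated equality to pointwise equality, and without it (c4) could fail on null points. Everything else is routine bookkeeping.
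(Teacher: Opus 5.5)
Your proof is correct and follows essentially the same route as the paper. In both, the double sum $\sum_{w\in W}\sum_{i\leq n}\mu(w)A_i(w)$ is bounded below via ({\sc c1}) and above via ({\sc c2}); equality is then forced, ({\sc c3}) is read off from $\mu(A_i)=\frac{1}{2}\mu(W)$, and positivity of $\mu$ on singletons converts the aggregate equality into the pointwise equality ({\sc c4}).
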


The definition of coherence does not refer to a numerical measure; this is the point.
 To fix ideas, it is useful to see what the definition says 
in a measurable majority space. Condition
 \pt{c1} says that $\mu(A_i) \geq \frac{1}{2}\mu(W)$ for each  of the  $A_{i}$; 
\pt{c2} says that each element of $W$ is counted by at most half of the total number of  $A_i$ ---
this means that each $w\in W$ belongs to \textit{at most} half of the sets in the sequence, counting repeats;
\pt{c3} says that $\mu(A_i) = \frac{1}{2}\mu(W)$ for each of the  $A_{i}$; and 
\pt{c4} says that each element of $W$ is counted by \textit{exactly} half of the total number of the  $A_i$.

\begin{exa}[\excolor{Incoherence of Oddly Even Social Decisions}]\label[example]{exa:sixpoints-redux} 
The oddly even social decision frame $\structure=(W,\M)$ from \cref{exa:sixpoints} is incoherent.  To see this, let $A_1, A_2, A_3, A_4=$ 
$\set{2,4,6}$, $\set{1,3,2},$ $\set{3,5,4},$ $\set{1,5,6}.$
Each set $A_i$ belongs to $\M$.  Also $\sum_i A_i = 2 = \frac{4}{2}$.  That is, each number $i\in W$ appears in exactly two of the four sets above.
So this sequence satisfies ({\sc c1}) and  ({\sc c4}) (hence also ({\sc c2})), but not  ({\sc c3}). \qedexa
\end{exa}

The next example illustrates 
why the formulation of coherence allows for repeated sets in  sequences $\Abar= A_1, \ldots, A_n$ 

\begin{exa} \label[exa]{exa:threepoints-again}
Let $W = \set{1,2,3}$  and 
$\M = \set{\set{1,2},\set{1,2,3}}$.
We show that 
$\structure$ is incoherent. 
Consider the sequence $A_1, A_2, A_3, A_4 = \set{1}, \set{1}, \set{2,3}, \set{2}$.
This is a sequence of sets $A_i$ in $\H$.
Then $1$ and $2$ each belong to half of the sets in the sequence, and $3$ belongs to less than half.
   So ({\sc c1}), ({\sc c2}), and ({\sc c3}) hold, but ({\sc c4}) does not.
   \qedexa 
\end{exa}

We gather together some useful elementary properties of coherent social decision frames.

\begin{prop}\label[prop]{prop:easy}
 Let $\structure=(W,\M)$ be a finite frame.  Suppose $\structure$ is coherent. Then$:$

\begin{enumerate}[label=\textup{(\arabic*)}, ref=\textup{\arabic*}]
    \item \label{basicone}
  If $A\in \M$ and $B\in \M$, then $A\cap B \neq \varnothing$.  
       \item \label{basictwo} 
  If $A\in \M$ and $B\in \H$, then $A\cap B \neq \varnothing$.         
      \item \label{basicthree} 
    If $A\in \M$ and $A\subseteq B$, then $B\in \M$.  

\item \label{basicseven}
If $A\in\H$, then $A\neq\varnothing$.\qedthm
\end{enumerate}    
\end{prop}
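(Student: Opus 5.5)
Each of the four items is proved directly from the coherence scheme \pt{c}, with no appeal to measurability: in each case we pick a short sequence of sets (length $2$ or $4$, allowing repeats) whose members lie in $\M\cup\H$ and which covers every voter at most half the time. Coherence then forces \pt{c3} and \pt{c4}, and we read off a contradiction. Repetition is allowed, as in \cref{exa:threepoints-again}.

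\emph{Items (1) and (2).} Suppose $A\in\M$ and $B\in\M\cup\H$ with $A\cap B=\varnothing$. Take $n=2$ and $A_1=A$, $A_2=B$. Condition \pt{c1} holds. Since $A$ and $B$ are disjoint, $A+B\le 1=\frac{2}{2}$ pointwise, so \pt{c2} holds. Coherence then gives \pt{c3}, in particular $A\in\H$. But $\H$ consists of sets not in $\M$, while $A\in\M$, a contradiction. This proves both (1) and (2).

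\emph{Item (3).} Let $A\in\M$ and $A\subseteq B$, and suppose $B\notin\M$. Consider $B^c$. If $B^c\in\M$, then $A$ and $B^c$ are disjoint members of $\M$, contradicting (1). Otherwise $B^c\notin\M$ and $(B^c)^c=B\notin\M$, so $B^c\in\H$, and $A\cap B^c=\varnothing$ contradicts (2). Hence $B\in\M$.

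\emph{Item (4).} Suppose $\varnothing\in\H$. Take $n=2$ with $A_1=A_2=\varnothing$. Then \pt{c1} holds and $\sum_i A_i=0\le 1$, so \pt{c2} holds. Coherence yields \pt{c4}, namely $0=1$ pointwise on the nonempty set $W$, which is absurd. (Even $n=1$ works: $0\le\frac12$ but $0\neq\frac12$.)

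The only step needing any care is (3): one has to notice that the complement $B^c$ must be sorted into either $\M$ or $\H$, which is exactly why (2) is proved alongside (1). Everything else is an immediate instance of \pt{c}.
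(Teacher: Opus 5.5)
Your proof is correct and follows the paper's approach. The paper's argument for (1) is exactly your two-term sequence $A,B$ satisfying \textsc{(c1)} and \textsc{(c2)} but violating \textsc{(c3)}. Your extensions mirror the paper's logic-level proofs in \cref{prop:basic-proof-system}: to (2); to (3), by sorting $B^c$ into $\M$ or $\H$; and to (4), via a one-term sequence $\varnothing$ violating \textsc{(c4)}. Your handling of (3) is in fact more careful than the paper's sketch of its logical analogue, which cites only part (1) and leaves the case $B^c\in\H$ implicit.
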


For example, here is the argument for part (\ref{basicone}).
Assume that  $A\in\M$, $B\in\M$, and (towards a contradiction)
that $B\subseteq A^c$.
The sequence $A,B$ has ({\sc c1}) and ({\sc c2}) but not ({\sc c3}).

Coherence is also sufficient for measurability.

\begin{thm}
\label{theorem-measurable-coherent} Every coherent social decision frame $\structure=(W,\M)$ 
is measurable.\qedthm
\end{thm}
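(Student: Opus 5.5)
We reduce measurability to the feasibility of a finite system of strict and non-strict linear inequalities with rational coefficients, and then invoke a theorem of the alternative (Motzkin's transposition theorem, or equivalently Farkas' lemma in its strict form) over $\mathbb{Q}$. Identify a candidate measure with a vector $\mu\in\mathbb{Q}^W$, so that $\mu(A)=\langle \mu, A\rangle$, and set $f_A = 2A - 1\in\mathbb{Q}^W$. Then $\mu(A)-\frac12\mu(W) = \frac12\langle \mu, f_A\rangle$. Measurability asks for $\mu$ satisfying three families of conditions:
\begin{enumerate}[label=\textup{(\roman*)}]
\item $\langle \mu, f_A\rangle>0$ for $A\in\M$;
\item $\langle \mu, f_A\rangle = 0$ for $A\in\H$;
\item $\langle\mu, f_A\rangle<0$ for $A\notin\M\cup\H$ (equivalently, $A^c\in\M$, since $f_{A^c}=-f_A$);
\end{enumerate}
together with $\mu(w)>0$ for all $w$. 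Coherence gives, via \cref{prop:easy}, that $A\notin \M\cup\H$ iff $A^c\in\M$ (the "iff" uses (\ref{basicone}) for one direction and the definition of $\H$ for the other). So (iii) is just (i) applied to complements, and the system becomes: $\langle\mu,f_A\rangle>0$ for $A\in\M$, $\langle \mu,f_A\rangle\ge0$ and $\langle\mu,-f_A\rangle\ge 0$ for $A\in\H$ (note $\H$ is closed under complements), and $\langle\mu,\delta_w\rangle>0$. A rational solution can be scaled to take values in $\mathbb{N}^+$.

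By Motzkin's theorem the system is infeasible iff there are nonnegative rationals $\lambda_A$ ($A\in\M$), $\nu_A$ ($A\in\H$), and $\kappa_w$, with some $\lambda_A$ or $\kappa_w$ strictly positive, such that
\[
\sum_{A\in\M}\lambda_A f_A + \sum_{A\in\H}\nu_A f_A + \sum_{w}\kappa_w\delta_w \;=\;0 .
\]
Since $\H$ is closed under complements, the negative multiples of $f_A$ for $A\in\H$ can be rewritten as positive multiples of $f_{A^c}$. This is why only sets in $\M\cup\H$ appear with nonnegative coefficients.

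We then clear denominators so that all coefficients are natural numbers, and expand each coefficient as repetition of the corresponding set in a sequence $A_1,\dots,A_n$ of members of $\M\cup\H$. This is exactly where repeated sets, as in \cref{exa:threepoints-again}, are needed. The identity becomes
\[
\sum_i (2A_i - 1) = -\sum_w\kappa_w\delta_w\le 0,
\]
that is, $\sum_i A_i\le \frac n2$, which is condition \pt{c2}; and \pt{c1} holds by construction. Coherence then yields \pt{c3} and \pt{c4}: every $A_i\in\H$, so all $\lambda_A=0$, and $\sum_i A_i=\frac n2$, so all $\kappa_w=0$. This contradicts the requirement that some $\lambda_A$ or $\kappa_w$ be positive. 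One degenerate case needs separate care: if there are no strict constraints with positive multipliers, then $n$ might be $0$. But the alternative demands a positive multiplier on a strict inequality, and a positive $\kappa_w$ alone would force $\sum_i A_i\le \frac n2 - \kappa_w<\frac n2$ with $n\ge 1$ possible only if some sets are present; if no sets are present we would get $\kappa_w=0$ outright. So the degenerate case resolves itself.

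The main obstacle I expect is stating and applying the correct strict/non-strict theorem of the alternative over $\mathbb{Q}$, including the equality constraints coming from $\H$. I would first try Motzkin's transposition theorem in its standard form, with equalities split into two non-strict inequalities, noting that rational data yields rational multipliers because the feasible multiplier set is a rational polyhedral cone. A secondary point is checking the complement facts from \cref{prop:easy} carefully, so that $\M\cup\H\cup\{A: A^c\in\M\}$ is a genuine partition of $\tpow{W}$.
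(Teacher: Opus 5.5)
Your proposal is correct and follows essentially the paper's argument. The paper applies Goldman's cone-versus-convex-hull separation theorem over $\mathbb{Q}$ (an equivalent form of your Motzkin alternative) to the vectors $B-B^c$ for $B\in\H$ versus $A^c-A$ for $A\in\M$ and $-\{w\}$ for $w\in W$; it then turns a dual certificate into a sequence with repetitions satisfying (\textsc{c1})--(\textsc{c2}), and gets the same contradiction from (\textsc{c3}) when an $\M$-coefficient is positive or from (\textsc{c4}) when a point coefficient is positive. Your explicit handling of the empty-sequence case and of rescaling into $\mathbb{N}^+$ fills two small points the paper leaves implicit, while your use of \cref{prop:easy} is harmless but unnecessary, since the converse direction of representation follows directly from the definition of $\H$, as the paper checks.
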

Thus, the criterion of coherence characterizes majority spaces.

\subsection{A May-type characterization result of the strict majority space}
\label{section-Pacuit-Salame}

\cite{May52} advanced a set of conditions on individuals, preferences, and group decision functions which characterize 
the majority decision function among all others.
May's goal was ``to throw light on Arrow's interesting results'' which are now classics
in the field of social choice.

We work in the setting of coherent frames and also adopt
May's condition of \emph{symmetry}.  May notes that it is also called \emph{anonymity} or \emph{equality}, and we shall use another
name for it.

\begin{prop} \label{prop-May}
Let $\structure=(W,\M)$ be a (finite) coherent social decision frame.
Assume that $\M$ is \textup{invariant under bijections} in the sense that if $i: W\to W$ is a bijection, 
then for all $A\subseteq W$, $A\in\M$ iff $i[A]\in \M$.
Then $\structure$ is a majority space as in Example~\ref{exa:strict_majorities}.\qedthm
\end{prop}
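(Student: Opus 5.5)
By \cref{theorem-measurable-coherent}, the coherent frame $\structure$ is measurable; fix a representing $\mu:\tpow{W}\to\mathbb{N}^+$. The plan is to show that $\M$ depends only on cardinality, and then that the cardinality threshold is exactly $n/2$, where $n=|W|$.

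\textbf{Step 1 (reduction to cardinality).} Any two subsets $A,B\subseteq W$ with $|A|=|B|$ are related by a bijection $i:W\to W$ with $i[A]=B$: map $A$ onto $B$ and $A^c$ onto $B^c$. Invariance then gives $A\in\M$ iff $B\in\M$. So there is a set $K\subseteq\{0,\dots,n\}$ with $A\in\M$ iff $|A|\in K$. By \cref{prop:easy}(\ref{basicthree}), $K$ is upward closed.

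\textbf{Step 2 (averaging).} Rather than comparing the weights $\mu(w)$ directly, I average over all $k$-element subsets. Let $\mathcal{A}_k$ be the family of all $k$-subsets of $W$. Every $w\in W$ lies in exactly $\binom{n-1}{k-1}$ of them, so
\[
\sum_{A\in\mathcal{A}_k} A \;=\; \tfrac{k}{n}\,|\mathcal{A}_k| .
\]
Two cases follow.
\begin{itemize}
\item[(a)] Suppose $k\le n/2$ and $k\in K$. Then the sequence listing $\mathcal{A}_k$ satisfies \pt{c1}, and it satisfies \pt{c2} because $k/n\le 1/2$. Coherence, condition \pt{c3}, then demands that each member lie in $\H$, contradicting membership in $\M$. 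Hence $k\notin K$.
\item[(b)] Suppose $k>n/2$ and $k\notin K$. Then each complement $A^c$, which has size $n-k<n/2$, is in $\M\cup\H$: it is in $\M$ if $A\in\H$, and this fails only when $A\in\H$ as well. Either way, every complement lies in $\M\cup\H$. The complements form the family $\mathcal{A}_{n-k}$, whose sum is $\frac{n-k}{n}|\mathcal{A}_{n-k}|<\frac12|\mathcal{A}_{n-k}|$. So \pt{c2} holds while \pt{c4} fails, contradicting coherence.
\end{itemize}
Together, (a) and (b) give $K=\{k: k>n/2\}$.

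Alternatively, the numerical route also works: averaging $\mu$ over $\mathcal{A}_k$ gives average value $\frac{k}{n}\mu(W)$, and since $\mu$ is constant on $\mathcal{A}_k$ by Step 1, every $k$-set has measure exactly $\frac{k}{n}\mu(W)$. Representation \eqref{eq:measurability} then yields $A\in\M$ iff $|A|>n/2$ directly. This is cleaner, and I would present it as the main argument.

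\textbf{Main obstacle.} The delicate point is Step 1's use of invariance. It gives only that membership in $\M$ is constant on each cardinality class, not that $\mu$ is constant on it. The averaging argument must therefore be phrased via membership: all $k$-sets are in $\M$, all are in $\H$, or all are in neither. That trichotomy is enough, because the average $\frac{k}{n}\mu(W)$ sits on the correct side of $\frac12\mu(W)$. Concretely, if $k>n/2$, some $k$-set has measure greater than $\frac12\mu(W)$, so it is in $\M$, and by uniformity all $k$-sets are. If $k\le n/2$, some $k$-set has measure at most $\frac12\mu(W)$, so that set is not in $\M$, and hence no $k$-set is. Uniformity does the rest, and the conclusion is that $\structure$ coincides with \cref{exa:strict_majorities}.
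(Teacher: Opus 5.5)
Your Step~2(a)--(b) is a correct and complete proof, and it takes a different route from the paper's.

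The paper works with pairs of sets. If some $A\in\M$ has $|A|\le\frac12|W|$, it takes a disjoint $B$ of the same size (in $\M$ by invariance), and the pair $A,B$ violates (\textsc{c3}). If some $A$ with $|A|>\frac12|W|$ lies outside $\M$, it gets $A,A^c\in\H$ and uses the pair $A^c, A\setminus\{w\}$ to violate (\textsc{c4}).

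You instead feed coherence the whole family $\mathcal{A}_k$, or its complements $\mathcal{A}_{n-k}$. Their uniform coverage frequency $k/n$ settles (\textsc{c2}) at once. So you need no disjoint copies, no point removal (with the paper's implicit check that $A\setminus\{w\}\in\H$), and no upward closure.

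The trade-off is as follows.
\begin{itemize}
\item The paper's argument uses only instances of coherence of length at most two. So it shows that for bijection-invariant frames these short instances already force strict majority, which is of interest given the index questions of Section~\ref{section-combinatorial}.
\item Your averaging argument uses sequences of length $\binom{n}{k}$. In exchange, it goes through essentially unchanged when $\M$ is invariant only under a \emph{transitive} group $G$ of permutations: list $gA$ for $g\in G$, and each point is covered with frequency $|A|/n$.
\item In that transitive setting the paper's disjoint-copy step can fail. For example, no rotation of $\{0,1,3\}$ in $\mathbb{Z}_6$ is disjoint from it.
\end{itemize}

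One sentence should go. In the ``numerical route'' you assert that $\mu$ is constant on $\mathcal{A}_k$ by Step~1. That is false: invariance of $\M$ says nothing about invariance of a representing measure. On $W=\{1,2,3\}$ with strict majority, $\mu(1)=\mu(2)=2$, $\mu(3)=3$ represents $\M$, yet $\mu(\{1,2\})=4\neq5=\mu(\{1,3\})$.

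Your ``Main obstacle'' paragraph already gives the correct replacement, and that version is valid. Some $k$-set lies on the appropriate side of $\frac12\mu(W)$, and invariance spreads its membership status to all $k$-sets. But it depends on Theorem~\ref{theorem-measurable-coherent}, and hence on Goldman's theorem, while (a)--(b) use nothing beyond the definition of coherence. So present (a)--(b) as the main argument.

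Also repair the garbled justification in (b). Since $A\notin\M$, either $A^c\in\M$, or else $A^c\in\H$ by the definition of $\H$. In fact (a), applied to $n-k<n/2$, gives $A^c\notin\M$, so $A^c\in\H$.
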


\section{Minimal Logic for Strict Majorities}
\label{sec:first-logic}

This section presents a logical system to be interpreted on \emph{models} which are social decision frames with interpretations of atomic predicates.
The system is a \emph{term logic}:  the syntactic objects include ``terms'' $t$ which are interpreted by subsets $\sems{t}$ of a given model.
As befits a paper about ``most'' and using majority structures, 
the syntax includes a sentence former $\unarymost t$, intended 
to mean that $\sems{t} \in \M$.   The rest of the syntax has been chosen in order to express coherence.
For this, we need to somehow to say ``everything belongs to at least (or exactly) half'' of the sets in 
some sequence.  For this, in turn, we need to express ``everything.''   The challenge is that we want to do all this without quantifiers,
since that would make the logic more complicated than necessary (and also undecidable).  
To answer this challenge, what we do 
is to add $\unaryforall$ as an operator taking terms $t$ to sentences.  The semantics of $\unaryforall t$ is that
the interpretation of $t$ is the universe $W$ of a given model,
while the semantics of $\unarymost t$ is that
the interpretation of $t$ belongs to $\M$.

\begin{figure*}[t]
\begin{mathframe}

\begin{flushleft}
{\bf Axioms}\smallskip

\textsc{tautologies}\quad All substitution instances of tautologies of propositional logic.

\smallskip

\textsc{existential \& universal import}\quad $\unarysome \top \andd \unaryforall \top$

\smallskip

\textsc{distributivity}\quad All instances of the scheme
$
\unaryforall(t\iif u) \iif (\unaryforall t \iif \unaryforall u)
$.

\smallskip

\textsc{coherence}\quad
For all finite sequences of terms $u_1$, $\ldots$, $u_k$:

\[\everymath={\displaystyle}
\begin{array}{llcllll}
\Vastl( \bigwedge_{i\leq k}\unarymostorhalf u_{i}
 & \andd  & \unaryforall\vastl(\underset{\substack{S\subseteq[k]\\[2pt]|S|\geq  \frac{k}{2}}}{\scaleobj{1.5}{\bigvee}}  \bigwedge_{i\in S}\nott u_i )\vastr)\Vastr)&
\iif 
 &\Vastl(\bigwedge_{i\leq k}\unaryhalf u_i&\andd &
  \unaryforall\vastl(\underset{\substack{S\subseteq[k]\\[2pt]|S|= \frac{k}{2}}}{\scaleobj{1.5}{\bigvee}} (\bigwedge_{i\in S}u_i \andd   \bigwedge_{i\notin S}\nott u_i )\vastr)\Vastr)
  
\end{array}
\]

\bigskip

\textbf{Rule of inference}\quad
\textit{modus ponens}
\end{flushleft}
\end{mathframe}
\caption{Proof system.\label{proof-system}}
\end{figure*}

Again,
we express the criterion of  coherence in the logic. We use $\unaryforall$ and 
terms written Boolean connectives.  We also use the connectives of
classical propositional logic.
We need an infinite axiom scheme.
Later we address the question of 
whether the coherence criterion can be formulated as a single axiom.

The syntax starts
with \emph{atomic predicates} $X_1$, $X_2$, $\ldots$;  often we write these as simply $X$, $Y$, etc.
The set of \emph{terms} is the closure of the atomic predicates under the Boolean
operations of $\nott$ and $\andd$. 
So we have terms like $\nott X$ and $X\andd Y$.
We take $\bot$ to be a term $X\andd\nott X$, and then $\top$ is $\nott\bot$.
 
\emph{Atomic sentences} are expressions of the form
$\unarymost t$ and $\unaryforall t$.
\emph{Sentences}  are built from atomic sentences using 
the Boolean combinations $\nott$ and $\andd$ and others as abbreviations in the usual way.
 
\[
\begin{array}{llcc}
\mbox{term } t  & X   \, \mid \,    \nott t \, \mid \,   t \andd t \\
\mbox{sentence } \phi &  \unaryforall t \, \mid  \,   \unarymost t   \, \mid  \, \nott  \phi   \, \mid  \,   \phi \andd  \phi 
\end{array}
\]

We also introduce some abbreviations:
\[
\begin{array}{rlll}
\unaryhalf t & \mbox{ abbreviates }  \nott \unarymost t \andd \nott \unarymost \nott  t \\
\unarymostorhalf t & \mbox{ abbreviates }  \unarymost t \orr \unaryhalf  t \\
\end{array}
\qquad 
\begin{array}{rlll}
\unaryexists  t & \mbox{ abbreviates }  \nott \unaryforall \nott t \\
\phi \iif\psi &  \mbox{ abbreviates }  \nott(\phi\andd \nott\psi) \\
\end{array}
\]
$\unaryhalf t$ is read ``half of everything is a $t$''.

In the semantics, we use \emph{models} of the form
$\Model = (\structure, \semantics{\ })$, where 
$\structure= (W,\M)$ is a 
 (finite) social decision frame 
 and  $\sems{X}\subseteq W$ for all atomic terms $X$.
The models of interest are  \emph{measurable}  --- that is, models $\Model = (\structure, \semantics{\ })$ for which $\structure$ is a measurable ---  so our definitions of
soundness and completeness are for the class of such models.

The semantics uses $\sems{X}$ in the base case and then proceeds thus:
\[
\begin{array}{lcl}
\sems{\nott t} & = & W\setminus \sems{t}\\
\sems{t\andd u} & =  &  \sems{t}\cap  \sems{u}\\
\end{array}
\qquad
\begin{array}{lcl}
\Model\models \unaryforall t &\mbox{iff} & \sems{t} = W \\
\Model\models \unarymost t   &\mbox{iff} &  \sems{t} \in \M\\
\end{array}
\qquad
\begin{array}{lcl}
\Model\models\nott \phi  &\mbox{iff}& \Model\not\models \phi \\
\Model\models \phi \andd \psi & \mbox{iff}& \Model\models \phi \mbox{ and } \Model \models \psi
\end{array}
\]
Our abbreviations then 
translate to the following: 
$\Model\models \unaryexists t$ iff $ \semantics{$t$} \neq \varnothing $, and 
$\Model\models \unaryhalf t $ iff $\sems{t}\in\H$.
In addition, $\sems{\bot} = \emptyset$, and $\sems{\top} = W$.

The proof system for the logic is given in Figure~\ref{proof-system}.    In it, we use the notion of a \emph{Boolean tautology}.
This is a term $t$ in the language of Boolean algebras over our atomic terms which is equivalent to $\top$ using the equational axioms of
Boolean algebra.   For example, $X\orr \nott X$ is a Boolean tautology.    If $t\to u$ is a Boolean tautology, we may write $t\leq u$.
(But please note that unlike other uses of $\leq$ in this paper, this has nothing to do with natural numbers or real numbers.)
More generally, it will be useful to think of the terms in a finite set $X_1$, $\ldots$, $X_n$ modulo equivalence as a Boolean algebra.
It is finite and hence atomic, and the atoms of this algebra 
(disjunctive normal forms)
will play an important role in our completeness proof.

\begin{exa}\label[exa]{ex:gloss}
It might be useful to gloss the coherence axioms and thus mention the kinds of assertions that
are formalizable in the logic.  Here is one, based on our example in the Introduction.

\begin{quotation}
    If $A$, $B$, $\ldots$, $F$ are six sets, and if each contains at least half of everything,
and if each point in the space belongs to at most three of these sets,
then each of the sets $A$, $B$, $\ldots$, $F$ contains exactly half of everything,
and each point in the set belongs to exactly three of them.    
\end{quotation}
This English sentence translates directly to a coherence axiom in the logic.
\end{exa}

In the rest of this section, $\Gamma$ denotes a finite set of sentences in our logic.\footnote{The soundness result does not need the finiteness of $\Gamma$,
but the completeness result does.}

\begin{prop}[Soundness]
If $\Gamma\proves \phi$, then every measurable model of $\Gamma$ satisfies $\phi$. 
\end{prop}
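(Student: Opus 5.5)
The argument is the usual induction on the length of a derivation from $\Gamma$. Fix a measurable model $\Model=(\structure,\semantics{\ })$ with $\structure=(W,\M)$ satisfying every sentence of $\Gamma$. It suffices to show that every axiom in Figure~\ref{proof-system} is true in $\Model$ and that modus ponens preserves truth in $\Model$. Members of $\Gamma$ are true by assumption. Modus ponens is immediate from the clauses for $\nott$ and $\andd$, since $\phi\iif\psi$ abbreviates $\nott(\phi\andd\nott\psi)$. For \textsc{tautologies}, note that the semantics of sentences is classical over the atomic sentences $\unaryforall t$ and $\unarymost t$. Hence any substitution instance of a propositional tautology is true, by evaluating the atomic sentences to truth values.

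For \textsc{existential \& universal import}, we have $\sems{\top}=W$, so $\unaryforall\top$ holds. Also $\unaryexists\top$ holds because $W\neq\varnothing$ by the definition of a frame. For \textsc{distributivity}, suppose $\sems{t\iif u}=W$ and $\sems{t}=W$. Since $\sems{t\iif u}=(W\setminus\sems{t})\cup\sems{u}$, it follows that $\sems{u}=W$.

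The only substantive step is \textsc{coherence}. Here I would invoke \cref{lem:measurable_implies_coherent-nonconditional}: since $\structure$ is measurable, it is coherent. Fix terms $u_1,\ldots,u_k$ and put $A_i=\sems{u_i}$. The first antecedent conjunct, $\bigwedge_i\unarymostorhalf u_i$, says that each $A_i\in\M\cup\H$, which is (c1). For the second conjunct, $\unaryforall$ applied to the big disjunction holds iff every $w\in W$ lies in $\bigcap_{i\in S}(W\setminus A_i)$ for some $S\subseteq[k]$ with $|S|\geq k/2$. That is, every $w$ lies outside at least $k/2$ of the $A_i$, so $\sum_i A_i(w)\leq k/2$, which is (c2). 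The converse direction of this translation also holds: take $S=\{i: w\notin A_i\}$.

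Coherence then gives (c3) and (c4). Condition (c3) says each $A_i\in\H$, that is, $\Model\models\unaryhalf u_i$. Condition (c4) says each $w$ lies in exactly $k/2$ of the $A_i$. Taking $S=\{i: w\in A_i\}$ gives $|S|=k/2$, so $w\in\sems{\bigwedge_{i\in S}u_i\andd\bigwedge_{i\notin S}\nott u_i}$. This verifies the universal conjunct of the consequent. When $k$ is odd, (c4) is impossible, so the antecedent must fail; equivalently the consequent's disjunction is empty, and coherence still yields the implication. The only point needing care is this exact matching of the Boolean/quantifier encoding with (c1)–(c4), counting repetitions among the $u_i$ as separate indices. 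That matching works because sequences in \cref{defn:coherence} allow repeats.
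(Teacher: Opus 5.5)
Your proof is correct and is exactly the routine verification the paper leaves implicit. The paper states soundness without a written proof, remarking only that $W\neq\varnothing$ is needed for $\unaryexists\top$; your argument checks each axiom and modus ponens, and validates the coherence scheme through \cref{lem:measurable_implies_coherent-nonconditional} by translating the antecedent into (c1)--(c2) and (c3)--(c4) into the consequent. One small addition: the paper's own derivations (e.g., in the completeness proof) also use $\unaryforall t$ as an axiom for every Boolean tautology $t$, and the same one-line check covers these, since the interpretation of terms respects the Boolean algebra equations and so sends every Boolean tautology to $W$.
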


Notice that we require our models to have a non-empty universe.
The reason is that the logic has axioms like $\unaryexists\true$ which are 
not sound for the empty model. 

The result below implies \cref{prop:easy} but is proved here directly from the logic.

\begin{prop}\label{prop:basic-proof-system}
For all terms $t$ and $u$: 
\begin{enumerate}[label=\textup{(\arabic*)}, ref=\textup{\arabic*}]
    \item \label{basicone-proof-system}
$\proves \unarymost t \andd \unarymost u \iif \boldsymbol{\unaryexists}
(t\andd u)$.    
       \item \label{basictwo-proof-system} 
    $\proves \unarymost t \andd \unaryhalf u \iif \unaryexists(t\andd u)$.       
      \item \label{basicthree-proof-system} 
        $\proves (\unarymost t \andd \unaryforall(t\iif u))\iif \unarymost u$.
\item\label{basicthreehalf-proof-system}  $\proves\unarymost t\andd \unaryforall u \iif \unarymost(t\andd u)$.
 \item \label{basicfour-proof-system} $\proves \unarymost\true$.
\item \label{basicsix-proof-system} $\proves \unaryforall t\iif \unarymost t$.
\end{enumerate}    
\end{prop}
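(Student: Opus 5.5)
Throughout I treat terms modulo Boolean equivalence, as the text before the statement suggests: two terms provably equal in Boolean algebra are interchangeable inside $\unarymost$, $\unaryforall$, and inside instances of the coherence scheme. In particular $\nott\nott u$ is $u$, $\nott\top$ is $\bot$, and $\top\andd t$ is $t$. Without this convention I see no way to derive $\unaryforall s$ for a Boolean tautology $s$ from $\unaryforall\top$ alone. The only non-propositional tools are then \textsc{existential \& universal import}, \textsc{distributivity}, and small instances of \textsc{coherence}, mostly with $k=2$ and once with $k=1$. Two propositional facts are used repeatedly. First, $\unarymost t\iif\unarymostorhalf t$. Second, $\unaryhalf u\leftrightarrow\unaryhalf\nott u$, which holds because $\nott\nott u$ is $u$. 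Every item is argued by contraposition: assuming the hypotheses and the negated conclusion produces the antecedent of a coherence instance whose consequent contradicts a hypothesis.

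For (\ref{basicone-proof-system}) I take the coherence instance with $k=2$, $u_1=t$, $u_2=u$. The universal conjunct of its antecedent ranges over $|S|\geq 1$, so it is $\unaryforall(\nott t\orr\nott u)$, which is exactly $\nott\unaryexists(t\andd u)$. Given $\unarymost t\andd\unarymost u$ we have $\unarymostorhalf t\andd\unarymostorhalf u$. So if we also had $\nott\unaryexists(t\andd u)$, the instance would yield $\unaryhalf t$, contradicting $\unarymost t$. Item (\ref{basictwo-proof-system}) is verbatim the same argument, using $\unaryhalf u\iif\unarymostorhalf u$.

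For (\ref{basicthree-proof-system}) I first note propositionally that $\nott\unarymost u\iif\unarymostorhalf\nott u$. Indeed, either $\unarymost\nott u$ holds, or else $\nott\unarymost u\andd\nott\unarymost\nott u$ holds, which is $\unaryhalf u$ and hence $\unaryhalf\nott u$. I then apply coherence with $u_1=t$, $u_2=\nott u$. The universal conjunct becomes $\unaryforall(\nott t\orr u)=\unaryforall(t\iif u)$. So from $\unarymost t$, $\unaryforall(t\iif u)$ and $\nott\unarymost u$ we obtain $\unaryhalf t$, contradicting $\unarymost t$.

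Item (\ref{basicthreehalf-proof-system}) reduces to (\ref{basicthree-proof-system}) once we have $\unaryforall u\iif\unaryforall(t\iif(t\andd u))$. For this, the term $u\iif(t\iif(t\andd u))$ is Boolean-equivalent to $\top$. So \textsc{universal import} gives $\unaryforall$ of it, and \textsc{distributivity} with modus ponens delivers the implication.

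For (\ref{basicfour-proof-system}), the key step is $\unarymostorhalf\bot\iif\unaryforall\bot$. This comes from the $k=1$ coherence instance with $u_1=\bot$. Its universal antecedent is $\unaryforall\nott\bot=\unaryforall\top$, which is an axiom. Its consequent's disjunction over $|S|=\tfrac12$ is empty, i.e.\ $\bot$, so the consequent contains $\unaryforall\bot$. But $\unaryforall\bot$ is $\nott\unaryexists\top$, contradicting \textsc{existential import}. Hence $\nott\unarymostorhalf\bot$ is provable. On the other hand, $\nott\unarymost\top\iif\unarymostorhalf\nott\top=\unarymostorhalf\bot$ by the propositional fact used in (\ref{basicthree-proof-system}). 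Together these give $\unarymost\top$.

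Finally, (\ref{basicsix-proof-system}) follows from (\ref{basicthreehalf-proof-system}) applied with $\top$ in place of $t$ and $t$ in place of $u$, together with (\ref{basicfour-proof-system}), since $\top\andd t$ is $t$.

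I expect the main obstacle to be bookkeeping rather than ideas. It consists of checking that the big disjunctions in each coherence instance collapse to the claimed terms, especially the empty disjunction when $k$ is odd. It also consists of being explicit about the Boolean-equivalence convention, without which neither (\ref{basicthreehalf-proof-system}) nor the identifications used above go through.
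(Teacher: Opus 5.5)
Your proposal is correct and follows essentially the paper's route. Both use $k=2$ coherence instances for (1)--(3), the $k=1$ instance on $\bot$ together with existential import for (5), distributivity with a universally quantified Boolean tautology to reduce (4) to (3), and (5) combined with (3)/(4) for (6), all under the same implicit ``terms modulo Boolean equivalence'' convention the paper relies on. One point in your favor: in (3) you apply coherence directly to $t,\nott u$ via $\unarymostorhalf\nott u$, so your argument also covers the case $\unaryhalf u$, which the paper's appeal to part (1) alone leaves implicit (it needs part (2) as well).
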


\begin{thm}[Completeness]\label{completeness-first-logic}
If every measurable model of $\Gamma$ satisfies $\phi$, 
 then $\Gamma\proves \phi$.
\end{thm}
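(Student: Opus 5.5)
We argue contrapositively: assuming $\Gamma\not\proves\phi$, we build a finite measurable model of $\Gamma\cup\{\nott\phi\}$. Since $\Gamma$ is finite, only finitely many atomic predicates $X_1,\ldots,X_m$ occur in $\Gamma\cup\{\phi\}$. Let $\mathcal{A}$ be the finite set of atoms of the Boolean algebra of terms over them, i.e.\ the $2^m$ conjunctions $\alpha=\pm X_1\andd\cdots\andd\pm X_m$. Every term $t$ is provably (via \textsc{distributivity} and tautologies) equivalent under $\unaryforall$ and $\unarymost$ to the disjunction of the atoms below it. Thus, modulo provable equivalence, a sentence is a Boolean combination of the finitely many atomic sentences $\unaryforall t$ and $\unarymost t$, where $t$ ranges over terms up to Boolean equivalence. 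By propositional reasoning, $\Gamma\cup\{\nott\phi\}$ extends to a maximal consistent assignment $\Delta$ of truth values to this finite set of atomic sentences.

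The model is built from $\Delta$ in two stages. First, let $P=\{\alpha\in\mathcal{A}:\Delta\proves\unaryexists\alpha\}$ be the set of atoms to be realized. Using \textsc{distributivity} and \textsc{existential \& universal import}, we check that $\Delta\proves\unaryforall t$ iff every $\alpha\in P$ lies below $t$, and that $P\neq\varnothing$. Second, let $W=P$, interpret $\sems{X_j}$ as the set of atoms in $P$ containing $X_j$ positively (so that $\sems{t}=\{\alpha\in P:\alpha\le t\}$), and set
\[
\M=\{\sems{t}:\Delta\proves\unarymost t\}.
\]
Well-definedness requires that terms with the same trace on $P$ agree on $\unarymost$. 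If $\sems{t}=\sems{u}$, then $\Delta\proves\unaryforall(t\iif u)$, and \cref{prop:basic-proof-system}(\ref{basicthree-proof-system}) gives $\unarymost t\iif\unarymost u$. By construction, every atomic sentence, and hence every sentence of $\Gamma\cup\{\nott\phi\}$, has the same truth value in this model as $\Delta$ assigns it. Note that $\H$ in the model corresponds exactly to the terms $t$ with $\Delta\proves\unaryhalf t$.

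It remains to show that $(W,\M)$ is measurable, and by \cref{theorem-measurable-coherent} it suffices to show coherence. Take $A_1,\ldots,A_k\subseteq W$ satisfying (\textsc{c1}) and (\textsc{c2}). Each $A_i=\sems{u_i}$ for the term $u_i$ given by the disjunction of the atoms in $A_i$. Condition (\textsc{c1}) says $\Delta\proves\unarymostorhalf u_i$ for each $i$. Condition (\textsc{c2}) says each $\alpha\in P$ lies outside at least $k/2$ of the $u_i$, so $\alpha$ lies below $\bigvee_{|S|\ge k/2}\bigwedge_{i\in S}\nott u_i$; hence $\Delta$ proves $\unaryforall$ of that disjunction. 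The \textsc{coherence} axiom then yields $\Delta\proves\bigwedge_i\unaryhalf u_i$ together with the $\unaryforall$ of the exact-half disjunction. Reading these back in the model gives (\textsc{c3}) and (\textsc{c4}). Hence the frame is coherent, therefore measurable, and we have a measurable model of $\Gamma$ falsifying $\phi$.

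The main obstacle is the bookkeeping in the translation step: establishing that the syntactic notions ($\unaryforall$ of a term, truth in the model, and $\unarymost$ being invariant under provable universal equivalence) match exactly, so that the coherence scheme, applied to sequences chosen semantically, transfers faithfully. The key points are that each subset of $W$ is named by a term, which holds because $W$ consists of atoms, and that $\unaryforall$ behaves as a normal modality via \textsc{distributivity}. The heavy combinatorial lifting is delegated to \cref{theorem-measurable-coherent}. As a byproduct, $|W|\le 2^m$, which gives the finite model property and decidability.
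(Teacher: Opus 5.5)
Your proposal is correct and follows essentially the paper's argument. The paper likewise builds a canonical model whose universe is the set of state descriptions $\alpha$ with $\Gamma\proves\unaryexists\alpha$, defines $\M$ via provability of $\unarymost(\bigvee A)$, proves the truth lemma using \cref{prop:basic-proof-system}, and obtains (\textsc{c3}) and (\textsc{c4}) from the coherence axiom instantiated at $u_i=\bigvee A_i$ before invoking \cref{theorem-measurable-coherent}; your reduction to a complete consistent assignment on finitely many atomic sentences (modulo provable equivalence) is simply a more careful rendering of the paper's ``finite maximal consistent set.''
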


\begin{remark}  
The proof builds a finite model of a consistent sentence and thus
gives the decidability via the finite model property.
The logic is also complete for interpretations in measurable spaces
where the measure is the counting measure.
\end{remark}

\subsection{Comparison with other logical systems}

We close this section with a brief mention of a few logical systems which are in the same general area as the one which we studied.
Probably the closest is that of~\cite{Burgess2010}.  The syntax builds on propositional logic over a set of atoms for terms, interpreted as sets.
Translating to our setting, Burgess's sentences do not include the operators $\unaryforall t$ and $\unarymost t$.  Instead, Burgess uses \emph{comparisons} $t\leq u$.
The interpretation of this would be that $\mu(t)\leq \mu(u)$ in a measurable structure.   The logic we introduced in this paper is but a fragment of his, where we take
$\unarymost t$ to be $\nott( t \leq - t)$. In his more expressive logic,  Burgess does not need a construction like $\unaryforall t$ since he could write $-(u\andd -u) \leq t$ for this.
Instead of coherence as we study it, Burgess uses a statement derived from Kraft, Pratt, and Seidenberg~\cite{KPS}:

\begin{quotation}
For any $M$ and any $u_1, \ldots, u_M$ and $v_1,\ldots, v_M$, if each $x_i$ belongs to exactly as many $u_j$ as of the $v_j$
and if $u_j \leq v_j$ for $j = 2, \ldots, K$, then $v_1 \leq u_1$.
\end{quotation}
Related work includes \citet{Pauly07} on collective judgment sets arising from majority, consensus, and dictatorial aggregation in a minimal modal-style language, where the primitive notion is collective acceptance of formulas. Another nearby line is the syllogistic extension with “Most” due to \citet{EndrullisMoss2015}, which treats strict majority over finite models and proves soundness, completeness, and decidability. These variations form part of a wider systematic mapping of relational 
syllogistic systems, where the structural boundaries between polynomial-time decidability and 
higher complexity classes are explicitly determined by the choice of term and sentence 
constructors \citep{Moss2021exploring}. Both approaches differ from the present one in that they target richer or differently structured logics of majority reasoning, whereas our aim is to isolate the weakest term logic whose axioms are directly motivated by the coherence conditions for measurable frames. 

\cite{DingEtAl} also study a logic more expressive than ours.  It is intended to be 
the logic of cardinality comparisons on sets of all sizes, including infinite sets. Even in the finite setting it would
be more expressive than ours. Like \cite{Burgess2010}, \cite{DingEtAl} impose a \emph{polarization rule} inspired by \cite{KPS}.
We have not imposed a similar requirement of our models for the logical system in this paper.

More recently, \cite{FuZhao2024} propose a  modal system, so it has features our logic does not have.  The logic there is a modal logic both in syntax
and  semantics, so it is strictly speaking incomparable to ours.  Nonetheless, as we show in  \cref{subsec:FZ-equivalence-proof}, the main combinatorial property presented in that paper is equivalent to our criterion of coherence.

\section{Combinatorial connections of coherence}
\label{section-combinatorial}

This section briefly explores combinatorial properties of coherence.  In particular, we show that it cannot be simplified by dropping one of the four conditions
({\sc c1}) -- ({\sc c4}). We also formulate a conjecture stating it cannot be reduced to a finite set of its instances.

\subsection{We cannot simplify the definition of coherence}

Recall that coherence is the statement that ``({\sc c1}) and ({\sc c2})  imply ({\sc c3}) and ({\sc c4}).''
If we drop ({\sc c1}), we arrive at the statement ``({\sc c2}) implies ({\sc c3}) and ({\sc c4}).''   No frame satisfies this unless $\M$ is empty.
For if $A\in \M$, then $A, A^c$ has ({\sc c2})  but not ({\sc c3}).  It follows that there are coherent spaces which do not satisfy this new condition.
Similarly, consider the statement  ``({\sc c1}) implies ({\sc c3}) and ({\sc c4}).''  This also is false for every nonempty $\M$: let $A\in \M$ and consider 
the sequence $A$.
Example~\ref{example-threepoints-redux} shows that coherence is not equivalent to the condition
 ``({\sc c1}) and ({\sc c2}) implies ({\sc c3}).''  
 
 Example~\ref{ex-124} shows the same for the condition  ``({\sc c1}) and ({\sc c2}) implies ({\sc c4}).''

 \begin{exa}\label{example-threepoints-redux}
Let $W = \set{1,2,3}$, and 
$\M = \set{\set{1,2},\set{1,2,3}}$.
This space is incoherent:
every sequence 
 $\Abar = A_1, \ldots, A_k$  of sets with 
 ({\sc c1}) and  ({\sc c2}) has  ({\sc c3}).   \qedexa
\end{exa}

\begin{exa}\label{ex-124}
We saw a frame in \cref{exa:sixpoints}  
which was not measurable and hence is incoherent.   
In this space, $\H = \varnothing$.
It turns out that every sequence $\Abar = A_1, A_2, \ldots, A_n$  of sets from this space with 
({\sc c1}) and ({\sc c2}) has ({\sc c4}).   \qedexa 
\end{exa}

\subsection{Is coherence implied by a finite set of its instances?}

This paper has established that a finite social decision frame
$\structure=(W,\M)$ 
 is measurable if and only if it is coherent.  Let us restate coherence in a slightly different form than before.
 Let $\phi_k(A_1,\ldots, A_k)$ say: 
 if the sequence $A_1,\ldots, A_k$ satisfies ({\sc c1}) and ({\sc c2}), then it also satisfies ({\sc c3}) and ({\sc c4}). 
Coherence is equivalent to
 \begin{equation}\label{forallk}
 \mbox{ for all $k$ and  for all  $A_1$, $\ldots$, $A_k\subseteq W$, $\phi_k(A_1,\ldots, A_k)$.}
 \end{equation}

We ask if the \emph{infinite} scheme (\ref{forallk})
 (one sentence for each $k$)
 can be replaced by a finite subset of even a single instance.
 We conjecture that such a simplification is not possible.
 Although we have not succeeded to prove this, we do have a preliminary result in support.

 Let us be more concrete about what $\phi_k(A_1,\ldots, A_k)$ says.  Recall that $\H = \set{A : A, A^c \notin \M}$.
 
 $\phi_k(A_1, \ldots, A_k)$ says: 
 
 \begin{quotation}
 If $A_1, \ldots, A_k \in \M\cup \H$, and if every point in $W$ belongs to at most $\frac{k}{2}$ of  the   sets $A_1, \ldots, A_k$; 
 then $A_1, \ldots, A_k\in \H$, and every point in $W$ belongs to exactly $\frac{k}{2}$ of the   sets $A_1, \ldots, A_k$.
 \end{quotation}

\begin{defn}
If a social decision frame $\structure=(W,\M)$ is incoherent, then there is some $k$ and some sets $\Abar = A_1, \ldots, A_k$
such that $\nott\phi_k(\Abar)$.  We call the smallest $k$ the \term{(incoherence) index} of $\structure$.

It seems to be non-trivial to find incoherent spaces of index $> 4$.  That is, the only known examples were
found by searching rather than by systematic construction.

\end{defn}

\begin{exa}
\label{example-index-6}
Consider $W = [6]$ and $\M$ the family of sets of size $\geq 4$ together with
\[\F = \set{\underline{\set{1, 2, 3}}, \set{1, 2, 4}, \set{1, 2, 6} ,\underline{\set{1, 3, 5}},
\set{1, 4, 5},\underline{\set{1, 4, 6}}, \set{1, 5, 6},\underline{\set{2, 4, 5}}, \underline{\set{2, 5, 6}}, \underline{\set{3, 4, 6}}}.
\]
The underlined family $S$ has size $6$, and every $i\in[6]$ belongs
to exactly three elements of it.  In this space, $\H = \varnothing$.
So the index of $\M$ is at most $6$.
It is easy to see that $\F$ has no complementary pairs, so the index of $\M$  
is greater than $2$; it is at least $4$. 
To see that $\phi_4(A_1,\ldots, A_4)$ always holds it is useful to do some more general work that we present in the Appendix.
Concerning $n = 5$, for all $\Abar = A_1, \ldots, A_5$,
$\phi_5(\Abar)$ holds because it is a conditional whose antecedent is false.
\qedexa
\end{exa}

\begin{prop}
\label{index-six}
The index of the space in Example~\ref{example-index-6} is 6.    
\end{prop}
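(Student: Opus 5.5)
Since $\phi_6$ fails for the underlined family of \cref{example-index-6}, the index is at most $6$. The plan is to show that $\phi_k(\Abar)$ holds for every $k\le 5$ and every sequence $\Abar$. First I would pin down $\M$ and $\H$ exactly. There are $20$ three-element subsets of $[6]$, forming $10$ complementary pairs. $\F$ has $10$ members and contains no complementary pair, so it contains exactly one triple from each pair. Hence for every $A\subseteq[6]$ exactly one of $A$, $A^c$ lies in $\M$: for $|A|\neq 3$ this is decided by size, and for $|A|=3$ by $\F$. Thus $\H=\varnothing$. Consequently ({\sc c1}) says every $A_i\in\M$, and ({\sc c3}) can never hold. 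So $\phi_k$ holds for all sequences of length $k$ iff there is \emph{no} sequence $A_1,\ldots,A_k$ from $\M$ in which every point lies in at most $k/2$ of the sets.

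Next comes a counting step. If such a sequence existed, then $\sum_i |A_i| \le 6\cdot \frac{k}{2}=3k$. Since every member of $\M$ has size $\ge 3$, every $A_i$ must be a triple from $\F$, and every point must lie in \emph{exactly} $k/2$ of them. In particular $k$ is even, so $k=1,3,5$ are vacuous. For $k=2$ we would need two disjoint triples of $\F$, i.e.\ a complementary pair, which is excluded.

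The main case is $k=4$: four triples from $\F$ (repeats allowed) with every point in exactly two of them. I would proceed in three steps.
\begin{enumerate}
\item \emph{Repeats are impossible.} If $A_1=A_2$, the points of $A_1$ are already covered twice, so $A_3,A_4\subseteq A_1^c$. Being triples, both equal $A_1^c$, which would put a complementary pair in $\F$.
\item \emph{Pairwise intersections are single points.} Double counting gives $\sum_{i<j}|A_i\cap A_j|=\sum_{w}\binom{2}{2}=6$ over $6$ pairs. No two of the sets are disjoint, since disjoint triples are complementary. Hence every pairwise intersection has size exactly $1$.
\item \emph{The configuration is rigid.} Fix $T=A_1$ and write $T^c=\{a,b,c\}$. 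The other three sets each meet $T$ in one point and must then cover each of $a,b,c$ twice. So they have the form $\{a,b,x\},\{a,c,y\},\{b,c,z\}$ with $\{x,y,z\}=T$; this is the "stars of $K_4$" configuration.
\end{enumerate}

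The remaining obstacle is a finite inspection. For each of the $10$ choices of $T\in\F$, one checks that no assignment $\{x,y,z\}=T$ puts all three triples $\{a,b,x\},\{a,c,y\},\{b,c,z\}$ in $\F$. I would organize this by listing, for each pair in $T^c$, which completions by a point of $T$ lie in $\F$, and observing that no system of distinct representatives exists. The \emph{a priori} structure above should make this short, but it is the one step that genuinely depends on the specific choice of $\F$ (the underlined $6$-family shows something nontrivial is at stake), so I expect it to be where the real work, and any error, lies.
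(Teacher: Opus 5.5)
\textbf{The approach matches the paper's, but the final finite check fails, and in fact the proposition is false as stated.}

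Your reductions are sound and follow the paper's own route. You show $\mathcal{H}=\varnothing$, and that a violating sequence must consist of triples of $\mathcal{F}$ with each point in exactly $k/2$ of them. Odd $k$ is vacuous, $k=2$ is excluded because $\mathcal{F}$ has no complementary pair, and repeats are excluded for $k=4$. The paper packages these steps as its results on standard and maximal frames, which reduce $\phi_4$ to the assertion that $\mathcal{F}$ has no balanced subfamily of size $4$; that assertion is stated without proof. Your rigidity step is also correct: pairwise intersections have size one, and the other three sets are $\{a,b,x\},\{a,c,y\},\{b,c,z\}$ with $\{x,y,z\}=T$. The gap is exactly the finite inspection you deferred. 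It cannot be filled, because it comes out the other way.

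Take $T=\{1,2,3\}$, so $T^c=\{4,5,6\}$. Inside $\mathcal{F}$, the pair $\{4,5\}$ is completed by $1$ or $2$, the pair $\{4,6\}$ by $1$ or $3$, and the pair $\{5,6\}$ by $1$ or $2$. Choosing $1,3,2$ gives a system of distinct representatives. Thus
\[
\{1,2,3\},\quad \{1,4,5\},\quad \{3,4,6\},\quad \{2,5,6\}
\]
all lie in $\mathcal{F}\subseteq\mathcal{M}$, and every point of $[6]$ lies in exactly two of them. This sequence satisfies ({\sc c1}), ({\sc c2}) and ({\sc c4}), but violates ({\sc c3}) since $\mathcal{H}=\varnothing$, so $\neg\phi_4$ holds. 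Combined with your correct arguments for $k\le 3$, the index of this frame is $4$, not $6$.

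There are three more such quadruples, all containing $\{3,4,6\}$:
\begin{itemize}
\item $\{1,2,3\},\{1,5,6\},\{2,4,5\},\{3,4,6\}$;
\item $\{1,2,4\},\{1,3,5\},\{2,5,6\},\{3,4,6\}$;
\item $\{1,2,6\},\{1,3,5\},\{2,4,5\},\{3,4,6\}$.
\end{itemize}
So the proposition is false as stated. The paper's argument breaks at the same point: its claim that $\mathcal{F}$ has no balanced subset of size $4$ is wrong.

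Nor can the example be rescued by choosing a different $\mathcal{F}$ on six points. Let $G$ be the set of pairs $\{a,b\}\subseteq[5]$ with $\{6,a,b\}\notin\mathcal{F}$. A balanced quadruple exists if and only if some four vertices of $[5]$ induce a $2K_2$, $C_4$ or $P_4$ in $G$. Otherwise $G$ is a threshold graph, and a representing measure can be read off from shifted threshold weights. Hence every maximal standard frame on $[6]$ is either coherent or has index exactly $4$.
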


\begin{corollary}
 Coherence is not equivalent to the assertion that for all sequences 
 of length $4$, say $\Abar = A_1, A_2, A_3, A_4$,
 if $\Abar$ satisfies ({\sc c1}) and ({\sc c2}), then it satisfies ({\sc c3}) and ({\sc c4}).
\end{corollary}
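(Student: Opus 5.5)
Since the Example already exhibits an incoherent sequence of length $6$, the plan is to show that $\phi_k(\Abar)$ holds for every $k\leq 5$ and every $\Abar=A_1,\ldots,A_k$. The index is then exactly $6$.

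\emph{Step 1: reduction.} First verify that $\H=\varnothing$. For each of the $20$ three-element subsets $T\subseteq[6]$, exactly one of $T$, $T^c$ lies in $\F$, which is checked by pairing the ten listed triples with their complements. Sets of size $\neq 3$ are in $\M$ or have complement in $\M$. Since $\H=\varnothing$, the conclusion ({\sc c3}) fails for any nonempty sequence. Hence $\phi_k$ holds for all $\Abar$ exactly when no sequence $A_1,\ldots,A_k\in\M$ satisfies ({\sc c2}).

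Suppose such a sequence exists. Summing ({\sc c2}) over the six points gives $\sum_i|A_i|\leq 3k$. Every member of $\M$ has size $\geq 3$, so every $A_i$ is a triple in $\F$ and equality holds pointwise: each point lies in exactly $k/2$ of the $A_i$. In particular $k$ is even. This already disposes of $k=1,3,5$, and leaves $k=2$ and $k=4$.

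\emph{Step 2: $k=2$.} Here $A_1,A_2$ would be complementary triples both lying in $\F$. This is impossible, since $\F$ contains no complementary pair.

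\emph{Step 3: $k=4$ (the main case).} Suppose $A_1,\ldots,A_4\in\F$ cover each point exactly twice. No two of the $A_i$ can be disjoint, since disjoint triples in $[6]$ are complementary.

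Next count incidences of pairs. Each point lies in exactly two of the sets, so it contributes one intersecting pair. The total $\sum_{i<j}|A_i\cap A_j|$ is therefore $6=\binom42$. Each of the six summands is $\geq 1$, so each equals $1$. (The same argument shows the four triples are distinct, since a repeated triple would contribute $3$.) Thus the configuration is the ``$K_4$ configuration'': points correspond to the edges of $K_4$ and triples to its vertices.

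Such a configuration is rigid. Given $A_1,A_2$ meeting in a point $x$, let $y$ be the unique point outside $A_1\cup A_2$. Then $y\in A_3\cap A_4$, and $A_3=\{y,a,b\}$, $A_4=\{y,a',b'\}$, where $\{a,a'\}=A_1\setminus\{x\}$ and $\{b,b'\}=A_2\setminus\{x\}$. This leaves only two completions per pair.

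The remaining work is a finite check. For every pair of triples in $\F$ meeting in exactly one point, compute both candidate completions and verify that in each case at least one of $A_3,A_4$ is not in $\F$. Equivalently, one can fix the two triples containing the point $1$, which must come from the seven triples in $\F$ containing $1$, and run the same test. I expect this enumeration to be the only laborious step; the structural reduction keeps it to a few dozen cases.

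\emph{Step 4: $k=6$.} The underlined subfamily $S$ of $\F$ consists of six triples in $\M$ with each point in exactly three of them. It therefore satisfies ({\sc c1}), ({\sc c2}) and ({\sc c4}) but not ({\sc c3}), so $\phi_6$ fails.

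Combining Steps 1--4, the least $k$ with $\nott\phi_k(\Abar)$ for some $\Abar$ is $6$.
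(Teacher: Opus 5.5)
\textbf{The deferred finite check in Step 3 fails, so this frame does not witness the corollary.} Your Steps 1 and 2, and the structural reduction in Step 3, are correct. That reduction shows any violating length-4 sequence must consist of four distinct triples of $\mathcal{F}$, meeting pairwise in exactly one point, with the point $1$ in exactly two of them. The failure comes from the check itself.

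Take $A_1=\{1,2,3\}$ and $A_2=\{1,4,5\}$. Both are in $\mathcal{F}$ and they meet only in $1$, so $y=6$. The completion $A_3=\{2,5,6\}$, $A_4=\{3,4,6\}$ has both sets in $\mathcal{F}$. Every point of $[6]$ lies in exactly two of $\{1,2,3\},\{1,4,5\},\{2,5,6\},\{3,4,6\}$. Since $\mathcal{H}=\varnothing$, this sequence satisfies ({\sc c1}), ({\sc c2}) and ({\sc c4}) but not ({\sc c3}).

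Running your eight pairs through $1$ turns up three more such quadruples in $\mathcal{F}$:
\[
\{1,2,3\},\{1,5,6\},\{2,4,5\},\{3,4,6\};\qquad \{1,2,4\},\{1,3,5\},\{2,5,6\},\{3,4,6\};\qquad \{1,2,6\},\{1,3,5\},\{2,4,5\},\{3,4,6\}.
\]
So $\phi_4$ fails. Combined with your Steps 1--2, the index of this frame is $4$, not $6$. Your strategy therefore cannot establish the corollary with this example.

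The paper's own argument has the same defect. It derives the corollary from Proposition~\ref{index-six}, which asserts exactly this false index, and it takes the same route you do: the $6$-point frame together with the reduction of Proposition~\ref{proposition-maximal-2}.

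Nor is this a typo in one non-underlined triple. Four complementary pairs of triples are not used by the underlined family $S$. Each of the eight triples available for those pairs forms a balanced quadruple with three members of $S$; for example, $\{3,5,6\}$ does so with $\{1,2,3\},\{1,4,6\},\{2,4,5\}$. Hence every maximal frame on $[6]$ with $S\subseteq\mathcal{M}$ has index $4$.

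To prove the corollary you need a different witness of index $\geq 5$. One candidate is the $10$-point maximal frame of Proposition~\ref{prop-incoherence-index-six}. There your Steps 1--2 carry over: $\mathcal{H}=\varnothing$, every member of $\mathcal{M}$ has size $\geq 5$, so $k$ must be even, and there are no complementary pairs. By Proposition~\ref{proposition-maximal-2}, $\phi_4$ then reduces to the absence of balanced $4$-subfamilies among its $126$ five-sets. The paper establishes that absence only by computer search. The balanced six-family in the first row of that proposition supplies the incoherence.
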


Indeed, this follows from Propositions~\ref{index-six}
and Appendix \ref{proposition-maximal-2}.

\begin{prop} With $W = [10]$, there is also an incoherent space of index $6$.\footnote{With $n = 8$, the
matter is open.} In it, $\M$ contains all sets of size $\geq 6$ and exactly half of the sets of size $5$.
\end{prop}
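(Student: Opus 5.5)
The statement is existential, so I would prove it by exhibiting an explicit frame on $W=[10]$ and verifying its index, following the pattern of \cref{example-index-6} and \cref{index-six}.

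\emph{Construction.} Let $\M$ consist of all subsets of size $\geq 6$ together with a family $\F$ of $5$-sets chosen so that for every $5$-set $X$ exactly one of $X$, $X^c$ lies in $\F$. Then $|\F|=\binom{10}{5}/2=126$, which is exactly half of the $5$-sets. This choice forces $\H=\varnothing$ and makes $\M$ ``self-dual'' ($A\in\M$ iff $A^c\notin\M$), exactly as in the oddly even frames. To get incoherence at length $6$, I would make $\F$ contain six $5$-sets $A_1,\ldots,A_6$ with $\sum_i A_i=3$, i.e.\ a $3$-regular cover of $[10]$ by six $5$-sets. Such a configuration is easy to write down, for instance by doubling the underlined family of \cref{example-index-6} in a suitable way. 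The complementary pairs are then completed arbitrarily, subject to the constraints below. This sequence satisfies ({\sc c1}) and ({\sc c4}) but violates ({\sc c3}), so the index is at most $6$.

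\emph{Lower bound.} Because $\H=\varnothing$, any violating sequence $A_1,\ldots,A_k$ consists of sets in $\M$ with every point in at most $k/2$ of them. Summing sizes gives $\sum|A_i|\le 10\cdot k/2=5k$. Since each $|A_i|\ge5$, every $A_i$ must be a $5$-set in $\F$ and every point lies in exactly $k/2$ of them, so $k$ is even. The cases to rule out are therefore $k=2$ and $k=4$.

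For $k=2$ we would need $A_2=A_1^c$, which is excluded because $\F$ contains no complementary pair.

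For $k=4$, note that a $2$-regular family of four $5$-sets $A_1,\ldots,A_4$ has the property that $A_1+A_2$ and $A_3+A_4$ are complementary $\{0,1,2\}$-valued functions summing to $2$. The general structural work referred to in the Appendix (used for $\phi_4$ in \cref{example-index-6}) should reduce such configurations to a small number of types determined by the pairwise intersection sizes. So I would choose $\F$ to avoid every $2$-regular quadruple. Concretely, I would pick an orientation of each complementary pair, and then check, by a finite case analysis or a computer search over the $126$ choices, that no four members of $\F$ cover each point exactly twice while the designated six-set configuration remains inside $\F$.

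\emph{Main obstacle.} The hardest step is producing a self-dual $\F$ that contains a $3$-regular sextuple yet avoids every $2$-regular quadruple. The two requirements interact, since complements of sets in the sextuple are forced out of $\F$. My first attempt would be a greedy or SAT-style choice of representatives, with the quadruple-freeness certified by exhaustive check. Given the paper's remark that such examples were found by search, I expect the verification to be computational rather than conceptual.
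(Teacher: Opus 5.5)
Your reduction is sound and is the same route the paper takes. With $\H=\varnothing$, counting sizes forces every entry of a violating sequence to be a $5$-set of $\mathcal{F}$ hit exactly $k/2$ times at each point. So odd $k$ is impossible (Proposition~\ref{no_odd_index}), $k=2$ is excluded because $\mathcal{F}$ has no complementary pair (Proposition~\ref{prop-phis-standard}), and $k=4$ is excluded by the absence of balanced $4$-subfamilies (Proposition~\ref{proposition-maximal-2}). One small omission: for $k=4$ you must also treat sequences with repeated entries. This is immediate, since $A_1=A_2$ forces $A_3=A_4=A_1^c$, which reduces to a complementary pair.

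The genuine gap is that the proposition is an existence claim, and all its content lies in the one step you leave as a plan. You must actually produce a self-dual family $\mathcal{F}$ of $126$ five-subsets of $[10]$ that contains a balanced sextuple but no balanced quadruple. As written, your argument proves only the conditional ``if such an $\mathcal{F}$ exists, then the frame has index $6$.'' Nothing in it shows that the greedy or SAT-style search you describe will succeed. That is not a formality, for three reasons:
\begin{itemize}
\item the forbidden configurations are $4$-clauses on the $126$ orientation choices, and they interact with the six forced choices (whose complements are thereby excluded);
\item the paper reports that the two goals work against each other and that a fair amount of random searching was needed;
\item the analogous question on $[8]$ is still open.
\end{itemize}
The sextuple itself is easy: for instance, attach the six edges of $K_4$ on $\{7,8,9,10\}$ to the six underlined $3$-sets of Example~\ref{example-index-6}. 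But that is not where the difficulty lies. The paper closes the gap with an explicit witness, namely the listing in the proof of Proposition~\ref{prop-incoherence-index-six}, whose first six members are the balanced sextuple of Example~\ref{example-prob-notation}. It adds a computer check that this family has no complementary pair and that none of its $\binom{126}{4}$ four-element subfamilies is balanced. Your expectation that the verification is computational is right, but to turn the proposal into a proof you need such a concrete, verified family (or some nonconstructive existence argument).
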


\begin{conjecture}\label{conjecture-one}
There is no uniform bound on the  index of incoherent social decision frames.  
That is, for all $n$ there is an incoherent frame $\structure$ of index $\geq n$.
\end{conjecture}

If true, the conjecture would show that coherence is not a consequence of any finite number of its instances.
We take Example~\ref{example-index-6} to support this conjecture,
along with the space in Proposition~\ref{prop-incoherence-index-six}.

\section{Conclusion}\label{sec:conclusion}

This paper has
formulated a coherence criterion for social decision frames and has shown that it characterizes their measurability.
While coherence conditions for binary relations have appeared before   \citep[in, e.g.,][]{deFinetti:1931a,KPS,Scott1964}, we introduce a coherence criterion formulated for a more expressively austere setting.
We have construct a logical system whose main axiom scheme is derived from coherence, and have shown the logical system to be sound and complete.

Section~\ref{section-combinatorial} shows
that the definition of coherence cannot be simplified, at least not in any simple-minded way.
To do this, we explored several variations on our definition and showed that they fail to capture measurability.
Coherence is an infinite scheme, and
we raise the question of showing that it does not follow from finitely many instances.
 The results on this in Section~\ref{section-combinatorial} are new.  They recall a similar line of 
 work on coherence conditions in \citep{Fishburn:1996,Elgot1961TruthFR,TZ99}   and other papers.
 To the best of our knowledge, the results in these papers do not settle our conjecture.
  Our work suggests questions which might be related to other topics in combinatorics.
   Much remains to be done on this last point.
   
\newpage

\bibliographystyle{plainnat}
\bibliography{sizelogic}

\clearpage

\appendix
\label[appendix]{app:appendix}

\section{Technical Preliminaries}

We shall make use of
a hyperplane separation theorem  for subsets of a given vector space $\VV$ over the field $\Q$ of rational numbers.
Recall  that the (\emph{rational}) \emph{linear span} of a subset $S$ of $\VV$ is the set of finite linear combinations of the form $\sum \lambda_i s_i$ with 
$\lambda_i\in\Q$ and
$s_i\in S$.    The \emph{convex hull} of $S$, written $\conv(S)$, is the set of finite linear combinations whose coefficients are non-negative and
sum to $1$, and the (\emph{non-negative}) \emph{cone}, written $\cone(S)$, is the set of finite linear combinations with non-negative coefficients.

\begin{thm}[{\citealt[Theorem 2]{goldman:1956}}] 
\label{Goldman}
Let $\mathcal{S}$ and $\mathcal{T}$ be finite subsets of a common finite-dimensional vector space $\VV$ over $\Q$.
 Suppose that $\mathrm{cone}(\mathcal{S})\cap \mathrm{conv}(\mathcal{T})= \varnothing$. Then there exists a linear function $\ell:\VV\to\Q$
such that:\\
\begin{itemize}[topsep=0em]
    \item[] \qquad\qquad$\mathcal{S}\subseteq \biggl\{h\in\VV:\ell(h)\geq 0\biggr\}$\qquad and \qquad 
    $\mathcal{T}\subseteq \biggl\{h\in\VV:\ell(h)< 0\biggr\}$.
\end{itemize}\qedthm

\end{thm}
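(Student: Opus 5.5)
The plan is to reduce the statement to the solvability of a finite system of linear inequalities with rational coefficients. The separation over the reals is then supplied by the classical Farkas/Gordan alternative, and the passage from real to rational solutions by the fact that a rational polyhedron with a real point also has a rational point.

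First, fix a basis of $\VV$ to identify it with $\Q^d$, so that $\VV\subseteq\mathbb{R}^d$. Write $\mathcal{S}=\{s_1,\ldots,s_p\}$ and $\mathcal{T}=\{t_1,\ldots,t_q\}$. If $\mathcal{T}=\varnothing$ we may take $\ell=0$. Otherwise the desired $\ell$ is a vector $c\in\Q^d$ satisfying the finite system
\[
c\cdot s_j\;\geq\;0 \quad(j\leq p),\qquad c\cdot t_k\;\leq\;-1\quad(k\leq q).
\]
The second set of inequalities is equivalent to the strict ones after positive rescaling, because $\mathcal{T}$ is finite.

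The first key step is to show that this system has a \emph{real} solution. By the affine Farkas lemma (equivalently, Motzkin's transposition theorem), if the system is infeasible over $\mathbb{R}$, then there are nonnegative reals $\lambda_j,\mu_k$, not all of the $\mu_k$ zero, with
\[
\sum_j\lambda_j s_j\;=\;\sum_k\mu_k t_k .
\]
Dividing by $\sum_k\mu_k>0$ exhibits a point of $\mathrm{cone}(\mathcal{S})\cap\mathrm{conv}(\mathcal{T})$ taken over $\mathbb{R}$. Alternatively, one can argue geometrically: $\mathrm{conv}(\mathcal{T})$ is compact, the finitely generated cone $\mathrm{cone}(\mathcal{S})$ is closed, and disjoint sets of this kind are strictly separated by a hyperplane through the origin side of the cone.

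The remaining task is to rule out such a real intersection point using only the \emph{rational} hypothesis. Here I observe that the conditions
\[
\lambda\geq 0,\qquad \mu\geq 0,\qquad \sum_k\mu_k=1,\qquad \sum_j\lambda_j s_j-\sum_k\mu_k t_k=0
\]
form a linear system with rational coefficients.

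The second key step, which is the only genuinely field-sensitive point and which I expect to be the main obstacle, is the lemma that a linear system of equalities and non-strict inequalities with rational coefficients having a real solution also has a rational one. I would prove this by Fourier--Motzkin elimination: eliminating a variable from a rational system produces a rational system that is feasible iff the original is, and back-substitution can always choose rational values in the nonempty intervals that arise. An alternative is to note that a nonempty polyhedron has a point determined by a nonsingular rational subsystem and to apply Cramer's rule.

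Applying this lemma to the $(\lambda,\mu)$ system yields a rational point in $\mathrm{cone}(\mathcal{S})\cap\mathrm{conv}(\mathcal{T})$, contradicting the hypothesis. Hence the $c$-system is feasible over $\mathbb{R}$. Applying the same lemma once more to the $c$-system gives a rational $c$, and $\ell(h)=c\cdot h$ is the required linear function.
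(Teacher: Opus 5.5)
Your proof is correct. Its architecture (prove the real statement, then transfer to $\Q$) matches the paper's, but the transfer mechanism is different. The paper takes Goldman's real theorem as a black box and moves it to $\Q$ using the fact that $\langle\Q,+,0,1,<\rangle$ is an elementary substructure of $\langle\mathbb{R},+,0,1,<\rangle$, with an alternative sketch via a Hahn--H\"older embedding of an ordered $\Q$-vector space. You instead obtain the real statement from the Farkas/Motzkin alternative and do the transfer by hand. Both the hypothesis and the conclusion become feasibility questions for explicit rational linear systems (the $(\lambda,\mu)$-system and the $c$-system), and Fourier--Motzkin shows that such a system is feasible over $\Q$ if and only if it is feasible over $\mathbb{R}$.

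What your route buys is self-containment and precision about a point the paper glosses. The statement is first-order in $+,0,1,<$ only once $\mathcal{S},\mathcal{T}$ are fixed as rational parameters, so that $\lambda_j s_j$ and $c\cdot s_j$ involve only multiplication by constants. Fourier--Motzkin is exactly the quantifier elimination behind $\Q\preceq\mathbb{R}$ in this language. Since it works over any ordered field, it also yields the Farkas alternative directly over $\Q$, so your passage through $\mathbb{R}$ is optional.

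Your argument also avoids the weak points of the paper's Hahn-embedding sketch. In the application, $\mathcal{S}$ contains both $B-B^c$ and $B^c-B$ whenever $\mathcal{H}\neq\varnothing$, so the cone $K$ is not pointed; the condition actually equivalent to disjointness is $-u\notin K$. Moreover, $\preceq$ is only a preorder, to which Hahn's theorem does not directly apply. The paper's route is shorter if one accepts the cited results.

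One minor point: your Cramer's-rule alternative should be applied to a minimal face, which is an affine subspace cut out by a rational subsystem of equalities. A nonempty polyhedron containing a line has no vertex.
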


\subsection{More on Goldman's Theorem, including a short proof}\label{moreongoldman}
Goldman's theorem was originally proved for vector spaces over the real numbers, not over $\Q$.
 It nonetheless applies to vector spaces over $\Q$, since it may be formulated in first-order logic using $+,0,1,<$  (not needing multiplication),
and since
$\pair{\mathbb{Q},+,0,1,<}$ is an elementary substructure of 
$\pair{\mathbb{R},+,0,1,<}$; see \cite{marker:2002}, Cor.~3.1.17.
Alternatively, one can obtain Goldman's theorem by way of a purely algebraic reduction.
Given finite $S,T\subseteq V$, let $U:=\mathrm{span}_{\Q}(S\cup T)$, and adjoin to it a new symbol $u$ to obtain the ordered
$\Q$-vector space
$E:=U\oplus \Q u$ equipped with a naturally associated positive cone given by
\[
K:=\cone_{\mathbb Q_{\ge 0}}\big(S\ \cup\ \{u\}\ \cup\ \{-t-u:t\in T\}\big)\subseteq G,
\]
and define $x\preceq y$ iff $y-x\in K$.
The hypothesis $\cone(S)\cap\conv(T)=\varnothing$ is equivalent to the properness of this order.
Any suitable order representation of $E$ then yields, in a canonical way,
a linear functional $\ell$ separating $\cone(S)$ from $\conv(T)$.
Concretely, the positive cone of $E$ may be taken to be generated by the elements of $S$,
by the distinguished unit $u$, and by the elements $-t-u$ for $t\in T$.
The condition $\cone(S)\cap\conv(T)=\varnothing$ is equivalent to the properness of this cone,
since any relation witnessing $0$ in the cone would yield, after normalization,
a point in $\cone(S)\cap\conv(T)$.
By Hahn’s embedding theorem for ordered Abelian groups—reducing in the Archimedean case
to Hölder’s theorem—$E$ admits an order‑embedding into a lexicographically ordered real
function space.
Projection to the first nontrivial coordinate and normalization $u\mapsto 1$
then yields a linear functional $\ell$ with $\ell\ge 0$ on $\cone(S)$ and $\ell<0$
on $\conv(T)$.
 
We need the result over $\Q$ as we stated it.

\section{Proofs}

\subsection{No oddly even social decision frame is measurable}\label[appendix]{app:proof-proposition-six-points}
We prove that the six-point $\structure$ in \cref{exa:sixpoints} is not measurable.

For \emph{reductio ad absurdum}, assume that $\mu: \tpow{W}\to \mathbb{N}^+$ represents $\M$.
We are going to obtain a contradiction in two different ways.

Here is one way.
Let $i^*\in W$  be such that for all $j\in W$, $\mu(i^*) \leq \mu(j)$.
Take some $A\in\M$ of size $3$ such that $i^*\in A$.  
(Here is how we check that there is some such set $A$.
If $i^*$ is even, let $A= \set{2,4,6}$. 
If $i^*$ is $1$ or $5$,  let $A = \set{1,2, 5}$.  If $i^*$ is $3$ (or $5$), let $A = \set{2,3,5}$.
In all of these cases, the sum of the numbers in $A$ is even.)
No matter which set $A$ containing $i^*$ is chosen, $A$ will contain
two numbers besides $i^*$.  There are three elements of $W$ of opposite polarity to $i^*$.
Let $j\in W$ be some number of opposite polarity to $i^*$ which does not belong to $A$.  
Then $\mu(i^*)\leq \mu(j)$ by definition of $i^*$.
Replace $i^*$ by $j$ in $A$ to get a set of size $3$
which we'll call $B$.  Then $\mu(B)\geq  \mu(A) > \frac{1}{2}\mu(W)$, and so $B\in\M$.
But the sum of the elements of $B$ is odd, by choice of $j$. Hence $B\notin \M$.
So we have a contradiction.

Here is a second way to derive a contradiction.
The set $\M$ contains $\set{2,4,6}$, but it does not contain
$\set{1,2,4}$, $\set{3,4,6}$, or $\set{2,5,6}$.
Suppose towards a contradiction that  $\mu$ represents $\structure$.
Then
\renewcommand{\arraystretch}{1.5}
\[ \begin{array}{rcl}
\mu(\set{2,4,6}) + \mu(W) & > &\frac{3}{2} \mu(W)
\\
\mu(\set{1,2,4}) + \mu(\set{3,4,6}) + \mu(\set{2,5,6}) &\leq & \frac{3}{2} \mu(W).
\end{array}
\]
\renewcommand{\arraystretch}{1}
But 
\[
\begin{array}{cl}
& 
\mu(\set{2,4,6}) + \mu(W) \\
= & \mu(1) + 2\mu(2) + \mu(3) + 2\mu(4) + \mu(5) + 2\mu(6)\\
= & \mu(\set{1,2,4}) + \mu(\set{3,4,6}) + \mu(\set{2,5,6})
\end{array}
\]
So we have a contradiction this way also.

\subsection{Every measurable finite social decision frame $(W,\M)$ is coherent.}
We prove \cref{lem:measurable_implies_coherent-nonconditional}: Every measurable finite social decision frame $(W,\M)$ is coherent.

Let $\mu:\tpow{W}\to\Nset^{+}$ be a set-valued measure that represents $\structure$. 
 Consider a sequence $A_1,\ldots, A_n$ with   ({\sc c1}) and ({\sc c2}).
We are going to evaluate the sum below in two ways:
 \begin{equation}\label{interchange_1}
 \sum_{w\in W, 1\leq i\leq n}\mu(w)A_{i}(w).
 \end{equation}
First, by ({\sc c1}), $\mu(A_i) \geq \sfrac{1}{2}\mu(W)$.  And for all sets $C\subseteq W$,
 $\mu(C) = \sum_{w\in W}\mu(w)C(w)$.  So
 \begin{equation}\label{interchange_2} \sum_{i\leq n}\bigl(\sum_{w\in W}\mu(w)A_{i}(w)\bigr) \geq 
  \sum_{i\leq n} \sfrac{1}{2}\mu(W) 
 =  \sum_{i\leq n} \sfrac{1}{2}\mu(W) = \sfrac{n}{2}\mu(W).
 \end{equation}
 But by ({\sc c2}), we have $\sum_i A_i(w) \leq \frac{n}{2}$ for each $w\in W$.  
 We interchange the order of summation above and calculate:
 \begin{equation}\label{interchange_3}\sum_{w\in W}\sum_{i\leq n}\mu(w)A_{i}(w)
  = \sum_{w\in W} \mu(w) \sum_{i\leq n}A_{i}(w)  \leq \sum_{w\in W} \mu(w) \cdot \sfrac{n}{2}
  =  \sfrac{n}{2}\mu(W).
\end{equation}
The lower and upper estimates on the sum in (\ref{interchange_1}) thus agree. 
Hence the inequality $\geq$ in (\ref{interchange_2}) must be an equality.  
This shows
that for all $i$, $\mu(A_i) = \frac{1}{2}\mu(W)$.  So we have
 ({\sc c3}).

 Similarly, the inequality  $\leq$ in (\ref{interchange_3}) is an equality.
 We claim that for all $w\in W$, $\mu(w)\sum_i A_i(w) = \mu(w) \frac{n}{2}$.
 By ({\sc c2}), we have an inequality $\leq$ here.
 If we had a strict inequality for even one $w\in W$, then 
 \[
 \sum_{w\in W} \biggl(\mu(w)\sum_i A_i(w) \biggr) <  \sum_{w\in W}  \mu(w) \sfrac{n}{2} = \sfrac{n}{2}\mu(W) .
 \]
 This contradicts (\ref{interchange_3}).  Our claim is shown:
 for $w\in W$, $\mu(w) > 0$.  
 By this fact and our claim, $\sum_i A_i(w)  = \sfrac{n}{2}$.
 This is
 ({\sc c4}).

\subsection{Proof that coherence implies measurability (Theorem~\ref{theorem-measurable-coherent})}

Consider the vector space $\Q^W$.  Each $w\in W$ is a basis vector.  Each $A\subseteq W$ is identified
 with the vector sum $\sum_{w\in A} w$.  
 Define two binary relations $R$ and $S$ on $\tpow{W}$ (hence on $\Q^{W}$) by:
\begin{alignat*}{6}
R &\quad = \quad 
& \set{(B,B^c): B\in \H} & \\
S & \quad =\quad & \set{(A^c,A): A\in \M}  &  &  & \quad \cup\quad & \set{(\varnothing,\set{w}): w\in W} 
\end{alignat*}

Let $\DD_R =  \set{B - B^c : (B, B^c)\in R}$ be the set of \emph{differences according to $R$}, and let $\DD_S= \set{X - Y : (X, Y)\in S}$.  Both of these are finite subsets of $\Q^{W}$.

\begin{claim}
$\cone(\DD_R) \cap \conv(\DD_S) = \varnothing$.
\end{claim}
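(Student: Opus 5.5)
We argue by contradiction. Suppose some vector lies in both $\cone(\DD_R)$ and $\conv(\DD_S)$. From this common point we will build a finite sequence of sets satisfying \pt{c1} and \pt{c2} but violating \pt{c3} or \pt{c4}, contradicting coherence.

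The key identity is that for any $C\subseteq W$, viewed as an indicator vector in $\Q^W$,
\[
C - C^c \;=\; 2C - 1 ,
\]
where $1$ is the constant function.

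\textbf{Step 1: write the common point with integer multiplicities.} A common point gives rationals $\lambda_B\geq 0$ for $B\in\H$, and convex coefficients $\kappa_A\geq 0$ for $A\in\M$ and $\kappa_w\geq 0$ for $w\in W$ with $\sum\kappa_A+\sum\kappa_w=1$, such that
\[
\sum_{B}\lambda_B\,(B-B^c)\;=\;\sum_A\kappa_A\,(A-A^c)\;+\;\sum_w\kappa_w\,(\{w\}-\varnothing).
\]
Multiplying by a common denominator, we obtain nonnegative integers $m_B$, $p_A$, $q_w$ with $N:=\sum_A p_A+\sum_w q_w>0$ and
\[
\sum_B m_B(2B-1)\;=\;\sum_A p_A(2A-1)+\sum_w q_w\{w\}.
\]

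\textbf{Step 2: move the $R$-terms across.} Since $-(2B-1)=2B^c-1$, the last equation becomes
\[
\sum_A p_A(2A-1)\;+\;\sum_B m_B(2B^c-1)\;=\;-\sum_w q_w\{w\}\;\leq\;0 .
\]
Now form the sequence $A_1,\ldots,A_n$ in which each $A\in\M$ appears $p_A$ times and each $B^c$ (for $B\in\H$) appears $m_B$ times, so $n=\sum p_A+\sum m_B$. The displayed equation reads $2\sum_i A_i - n = -\sum_w q_w\{w\}$.

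\textbf{Step 3: check that $n>0$.} If $n=0$, then $\sum_w q_w\{w\}=0$, which forces every $q_w=0$. Together with every $p_A=0$, this gives $N=0$, a contradiction.

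\textbf{Step 4: verify the hypotheses of coherence.}
\begin{itemize}
\item \pt{c1} holds, since each $A_i$ is either in $\M$ or is $B^c$ with $B\in\H$. Note that $\H$ is closed under complement, directly from its symmetric definition.
\item \pt{c2} holds, because $\sum_i A_i = \frac n2 - \frac12\sum_w q_w\{w\}\leq \frac n2$ pointwise.
\end{itemize}

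\textbf{Step 5: derive the contradiction.} Coherence yields \pt{c3} and \pt{c4}.
\begin{itemize}
\item By \pt{c3}, every $A_i$ lies in $\H$. Since $\M\cap\H=\varnothing$, no set from $\M$ occurs, so all $p_A=0$.
\item By \pt{c4}, $\sum_i A_i=\frac n2$ exactly, so $\sum_w q_w\{w\}=0$ and all $q_w=0$.
\end{itemize}
Hence $N=0$, contradicting Step 1. Therefore $\cone(\DD_R)\cap\conv(\DD_S)=\varnothing$.

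The only delicate points are the bookkeeping in Step 2 and the non-degeneracy in Step 3. The rearrangement works only because $R$-pairs contribute terms $2B-1$, which flip sign to $2B^c-1$ with $B^c\in\H$. The non-degeneracy is guaranteed by the convex-combination coefficients summing to $1$. Once these are in place, the argument is a direct application of the definition of coherence.
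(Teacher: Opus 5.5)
Your argument is essentially the paper's: clear denominators, rewrite using $C^c = 1 - C$, form the sequence of $\H$-sets and $\M$-sets with their integer multiplicities, verify \pt{c1} and \pt{c2}, and then use \pt{c3} to kill the $\M$-coefficients and \pt{c4} to kill the singleton coefficients. Your Step 3 also justifies nonemptiness of the sequence more carefully than the paper's ``because $m>0$'', which does not cover the case where all $\alpha_i$ vanish.

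One sign slip: by definition $\DD_S$ consists of $A^c - A$ and $\varnothing - \{w\} = -\{w\}$, not $A - A^c$ and $\{w\}$, so as written you rule out a common point of $\cone(\DD_R)$ and $\conv(-\DD_S)$. This is equivalent to the claim, because $\H$ is closed under complements, hence $\DD_R = -\DD_R$ and $\cone(\DD_R) = -\cone(\DD_R)$. With the correct signs your Step 2 disappears, and the sequence is simply the copies of the $B$'s and $A$'s, exactly the paper's $\sigma$.
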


Assume for \emph{reductio ad absurdum} that $\cone(\DD_R) \cap \conv(\DD_S) \neq \varnothing$. Then
there are nonnegative integers $n$, $m$, and $p$ with  $m+p >0$ and non-negative rational numbers  $\beta_i$,  $\alpha_i$,  $\xi_i$ with $\sum^{m} \alpha_{i} + \sum^{p} \xi_{i} = 1$, and also
sets $B_i\in \H$ (for $1\leq i \leq n$), $A_i\in\M$ (for $1\leq i \leq m$), 
and finally points $x_i\in W$ (for $1\leq i \leq p$), such that
\begin{align}\label{usingGoldman0}
 \sum_{i =1}^n  \beta_i (B_i - B^c_i)  
\quad&=\quad 
\sum_{i =1}^{m} \alpha_i (A^c_i - A_i)  +\sum_{i =1}^{p}  \xi_i (-\set{x_i} )
\end{align}
So we get
\begin{align}\label{usingGoldman1}
\sum_{i =1}^n  \beta_i B_i   +\sum_{i =1}^{m} \alpha_i  A_i +  \sum_{i =1}^{p}  \xi_i (\set{x_i} ) 
\quad&= \quad
 \sum_{i =1}^n  \beta_i B^c_i +  \sum_{i =1}^{m} \alpha_i  A^c_i  
\end{align}
We can clear all of the denominators and so assume that all of the rational numbers mentioned above are in fact non-negative integers.  Furthermore,
since $\sum \alpha_i + \sum \xi_i = 1$, it follows that 
\begin{equation}\label{GoldmanImplies}
\mbox{either some $\alpha_{i^*}$ is non-zero (hence also $m>0$), 
or some $\xi_{j^*}$ is non-zero (hence also $p>0$).}
\end{equation}
Now consider the sequence $\sigma$ of subsets of $W$ consisting of  
$\beta_i$-many copies of  $B_i$ (for $i=1, \ldots, n$) followed by $\alpha_i$-many copies of  $A_i$ (for $i=1, \ldots, m$).
 (Nothing is done for the sets $\set{x_i}$.)
This sequence $\sigma$ is nonempty because $m>0$.
And $\sigma$ clearly satisfies ({\sc c1}).

Let us check that $\sigma$ also satisfies ({\sc c2}).  
For any indicator $C$, observe that  $C-C^c=2C-1$ (and $C^c-C=1-2C$) --- 
or just that $C^c=1-C$.
So we rewrite (\ref{usingGoldman1}):
\begin{equation}\label{usingGoldman1_point_5}
\begin{array}{lcl}
\displaystyle{\sum_{i =1}^n  \beta_i 2B_i   +\sum_{i =1}^{m} \alpha_i  2A_i +  \sum_{i =1}^{p}  \xi_i (\set{x_i} ) 
}
\quad&= \quad
\displaystyle{
 \sum_{i =1}^n  \beta_i  +  \sum_{i =1}^{m} \alpha_i  
}
\end{array}
\end{equation}
Since each $\xi_i \geq 0$, it follows that 
\begin{equation}\label{usingGoldman2}
\begin{array}{lcl}
\displaystyle{\sum_{i =1}^n  \beta_i B_i   +\sum_{i =1}^{m} \alpha_i  A_i}
\quad&\leq \quad
\displaystyle{\sfrac{1}{2}\bigl( \sum_{i =1}^n  \beta_i  +  \sum_{i =1}^{m} \alpha_i  \bigr)}
\end{array}
\end{equation}
This tells us that ({\sc C2}) holds for our sequence $\sigma$. 
In more detail, the number of sets in the list $\sigma$ is $ \sum_{i =1}^n  \beta_i  +  \sum_{i =1}^{m} \alpha_i$, and this is on the right along with a factor of $\frac{1}{2}$.
On the left we have the sum of 
all of the indicators corresponding to the sets in $\sigma$.
So (\ref{usingGoldman2}) is ({\sc C2}) for $\sigma$.
By coherence, $\sigma$ satisfies 
 ({\sc C3}) and  ({\sc C4}). 

 We now split into cases according to point (\ref{GoldmanImplies}).
 Let us first assume that  some $\alpha_{i^*}$ is strictly positive (hence also $m>0$).
This means that
the sequence $\sigma$ contains at least one copy of $A_{i^*}$.   Now  $A_{i^*}\in\M$,
but as we have seen, $\sigma$ satisfies  ({\sc c3}).   So we have a contradiction.

What remains is the case that some $\xi_{j^*}$ is non-zero.
This time, when we pass from (\ref{usingGoldman1_point_5}) to (\ref{usingGoldman2}),
the inequality becomes strict.
But now we contradict  ({\sc c4}). 

This concludes the proof of our claim.

\medskip

Our claim proved, we return to the main proof.
Since $\conv(\DD_R)\cap \conv(\DD_S) = \varnothing$,
it follows from Goldman's Theorem~\ref{Goldman}
 that there is a linear function $\ell:\mathbb{Q}^W\to\mathbb{Q}$ such $\ell(h)\geq 0$ for all
 $h\in \mathscr{C}_{1}$ while $\ell(h)< 0$ for all $h\in \mathscr{C}_{2}$. 
 The linearity implies that that $\ell$, or rather its restriction to $\tpow{W}$, is a finitely-additive measure.
  For all $x\in w$,
we have $\ell(-\set{x}) < 0$  and thus $\ell(x) > 0$.
For $A\in\M$, we similarly see that $\ell(A) > \ell(A^c)$.    For $B\in \H$, we see that $\ell(B - B^c) \geq 0$.
But $B^c$ also belongs to $\H$, and so $\ell(B^c - B) \geq 0$.
It follows that
 $\ell(B) = \ell(B')$.  
At this point we know that for $A\in\M$, $\ell(A) > \frac{1}{2}\ell(W)$, and for $B\in \H$, $\ell(B)= \frac{1}{2}\ell(W)$.
 To conclude the proof we check that if $\ell(A) > \frac{1}{2}\ell(W)$, then $A\in\M$: we cannot have $A^c\in\M$ 
 (as in that case we would have $\ell(A^c) \geq \frac{1}{2}\ell(W)$).   So $A^c\notin\M$. 
 and if we also had $A\notin \M$ we would have $A\in \H$.
Thus, $\ell$ represents $\M$.

\subsection{Fu and Zhao's condition}\label[appendix]{subsec:FZ-equivalence-proof}

We have seen the condition of coherence
for finite social decision frames
which is equivalent to measurability.
Before our work, Fu and Zhao characterized measurability by a different condition; see Proposition~28 in their paper.  Unimaginatively, we call their condition (FZ).
We also have reformulated it in 
inessential ways,
by moving from the language of coefficients of linear systems (they use this to connect to Farkas' Lemma) to the language 
of probabilities (as in (\ref{probability_notation})).
We show in Theorem~\ref{theorem-equivalence} below that coherence is 
equivalent (FZ).

The work of this section does \emph{not} assume either of the characterization of measurability.
That is, we are showing that coherence is 
equivalent to (FZ) from first principles.

\subsection{Probabilistic notation}
\label{subsection-probabilistic}
It is sometimes convenient to work with a variant of 
\cref{defn:coherence}
that is stated in probabilistic terms.
Given a sequence $\Abar = A_1, \ldots, A_n$ of subsets of $W$ and $w\in W$, let $p_w(\Abar)$ be the probability
that a randomly chosen term of $\Abar$ contains $w$:
\begin{equation}\label{probability_notation}
 p_w(\Abar) =
 p_w(A_1, \ldots, A_n)
= \frac{|\set{i \leq n: w\in A_i}|}{n}
= \frac{1}{n} \sum^{n}_{i=1}A_{i}(w).
\end{equation}

\begin{exa} \label{example-prob-notation}
\begin{enumerate}
    \item 
    Let $W = [10]$, and let $\Abar = A_1,A_{2}, A_{3}, A_4$ be the following sets:
\[ \set{3, 5, 6, 9, 10},  \set{1, 2, 3, 7, 10},  \set{1, 2, 4, 6, 9}, \set{4, 7, 8, 9, 10}.\]
Then  $p_5(\Abar) = 1/4$, since $5$ belongs to $A_1$ but to no other set.
Similarly,  $p_8(\Abar) = 1/4$.
Also, $p_2(\Abar) = 1/2 =p_3(\Abar) = p_4(\Abar) = p_6(\Abar) = p_7(\Abar)$.
Finally, $p_9(\Abar) = 3/4$.
\item\label{key-example}
Let $W = [10]$, and let $A_1, \ldots, A_6$ be the sets listed below:
\[ \set{3, 5, 6, 9, 10},  \set{1, 2, 3, 7, 10},  \set{1, 2, 4, 6, 9},  \set{4, 7, 8, 9, 10},  \set{2, 5, 6, 7, 8},  \set{1, 3, 4, 5, 8}.\]
Then $p_i(\Abar) = \frac{1}{2}$ for all $i\in W$.
\item With this notation, ({\sc c2})
says that $p_w(\Abar) \leq \frac{1}{2}$ for all $w\in W$,
and ({\sc c4}) says that  $p_w(\Abar) = \frac{1}{2}$ for all $w$.
\end{enumerate}
\qedexa 
\end{exa}

\begin{remark}   
 Let $\structure=(W,\M)$ be a finite social decision frame.
Throughout this section, we use $w$ as a variable ranging over $W$.
We adopt some notation for elements of $\tpow{W}$.
We use $Y$ for sets which \emph{are} in $\M$, and 
 $N$ for sets \emph{not} in $\M$.   The mnemonic here is that $Y$ is for \emph{yes, the set is
in $M$}, and $N$ for \emph{no}.
We use $P$ for sets in $\M\cup\H$.

We use the ``bar'' notation for sequences from $\tpow{W}$.  
 Sequences may have repeated elements.  
 Except where noted, sequences are of length $\geq 1$. 
(The exceptions are in Lemma~\ref{lemma-for-FZ-two} and Theorem~\ref{theorem-equivalence},
where we have a sequence $\Nbar = N_1, \ldots, N_d$ where $d$ might be $0$. 
In those results, we explicate what the statement at hand would mean when $d=0$.)

In a given sequence $\Abar = A_1, \ldots, A_k$, we use $i$ as a variable ranging over $\set{1,\ldots, k}$.

For a sequence $\Abar$ and some $w\in W$, 
recall our notation $p_w(\Abar)$ from (\ref{probability_notation}).
Notice that when we have $k \geq 1$ (as we almost always will),
$p_w(A_1, \ldots, A_k) = 1 - p_w(A^c_1, \ldots, A^c_k)$.
\end{remark}

\begin{lem}
\label{lemma-for-FZ-two}
Let $\structure=(W,\M)$ be a finite social decision frame.
The following are equivalent:
\begin{enumerate}
\item \label{part-bd}
Either there is a sequence $\Ybar = Y_1\ldots, Y_b$  such that 
for all $w$,  $p_w(\Ybar) \leq \frac{1}{2}$; or else,
there are sequences
$\Ybar = Y_1\ldots, Y_b$  and 
 $\Nbar = N_1, \ldots, N_d$   such that 
 for all $w\in W$,
 \[
b\cdot  p_w(\Ybar) + \frac{d-b}{2}  \leq  d \cdot p_w(\Nbar). \]
(We allow $d = 0$, and then we understand the line above to mean that $p_w(\Ybar)\leq \frac{1}{2}$.)
\item  \label{part-c3}
There is a sequence  $\Pbar = P_1, \ldots, P_k\in \M\cup\H$   so that  for all $w\in W$,  $p_w(\Pbar) \leq \frac{1}{2}$; and also,
for some $1\leq i\leq k$, $P_{i}\in\M$.  That is, $\Pbar$ satisfies ({\sc c1}) and  ({\sc c2}) but not  ({\sc c3}).
\end{enumerate}
\end{lem}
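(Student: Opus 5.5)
The plan is to show that the displayed inequality in (\ref{part-bd}) is just (\textsc{c2}) for a merged sequence, rewritten using complements. Two facts drive everything. First, $p_w(\Nbar) = 1 - p_w(N^c_1,\ldots,N^c_d)$ when $d\geq 1$. Second, the elementary membership fact: for any $N\subseteq W$, $N\notin\M$ iff $N^c\in\M\cup\H$. Indeed, if $N\notin\M$ then either $N^c\in\M$, or neither $N$ nor $N^c$ is in $\M$, so $N^c\in\H$. The converse holds when $N^c\in\H$, since $\H$ is closed under complement. It can fail when $N^c\in\M$, because nothing prevents both $N$ and $N^c$ from lying in $\M$. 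That is exactly why (\ref{part-bd}) has the first disjunct, and I will treat it as a separate case.

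First I multiply out the inequality in (\ref{part-bd}). Since $b\,p_w(\Ybar)=\sum_i Y_i(w)$ and $d\,p_w(\Nbar) = \sum_j N_j(w) = d - \sum_j N^c_j(w)$, the inequality is equivalent to
\[
\sum_{i=1}^b Y_i(w) + \sum_{j=1}^d N^c_j(w) \;\leq\; \frac{b+d}{2} \qquad\text{for all } w\in W .
\]
This says precisely that the concatenated sequence $\Pbar = Y_1,\ldots,Y_b,N^c_1,\ldots,N^c_d$, of length $b+d$, satisfies $p_w(\Pbar)\leq\frac12$. The case $d=0$ reduces to the stated convention $p_w(\Ybar)\le\frac12$.

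For (\ref{part-bd})$\Rightarrow$(\ref{part-c3}), there are two cases. If the first disjunct holds, take $\Pbar=\Ybar$: every term is in $\M$, and $b\geq1$ because sequences of $Y$'s are nonempty. Otherwise, take the concatenation $\Pbar$ above. Each $Y_i\in\M$, and each $N^c_j\in\M\cup\H$ by the membership fact. We have $p_w(\Pbar)\le\frac12$ by the rewriting, and $P_1=Y_1\in\M$ witnesses the failure of (\textsc{c3}).

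For (\ref{part-c3})$\Rightarrow$(\ref{part-bd}), let $\Pbar$ be as in (\ref{part-c3}) and split into cases.
\begin{itemize}
\item If some $P_i\in\M$ also has $P^c_i\in\M$, take $\Ybar = P_i,P^c_i$. Then $p_w(\Ybar)=\frac12$ for every $w$, so the first disjunct holds.
\item Otherwise, let $\Ybar$ list the terms of $\Pbar$ lying in $\M$; this is nonempty by hypothesis, so $b\ge1$. Let $\Nbar$ list the complements of the remaining terms, which lie in $\H$; here $d$ may be $0$. Each such $N_j = P^c$ with $P\in\H$, so $N_j\in\H$ and hence $N_j\notin\M$. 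The concatenation $Y_1,\ldots,Y_b,N^c_1,\ldots,N^c_d$ is a rearrangement of $\Pbar$, so it has the same $p_w\le\frac12$. Reversing the rewriting then gives the inequality in (\ref{part-bd}).
\end{itemize}

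The only delicate point is the asymmetry in the membership fact. Without coherence, a set in $\M$ may have its complement in $\M$ as well. This is exactly what the extra case split handles: such sets are never placed in $\Nbar$, and a complementary pair in $\M$ is sent to the first disjunct.
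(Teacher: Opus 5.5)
Your proof is correct and follows the paper's argument. You concatenate $\Ybar$ with $N^c_1,\ldots,N^c_d$, use $d\,p_w(\Nbar)=d-\sum_j N^c_j(w)$ to show the displayed inequality is exactly (\textsc{c2}) for the merged sequence, and for the converse split $\Pbar$ into its $\M$-terms and the complements of its $\H$-terms; you also state the check $N\notin\M\Rightarrow N^c\in\M\cup\H$, which the paper leaves tacit. Your extra case for a complementary pair $P_i,P^c_i\in\M$ is harmless but unnecessary: your ``otherwise'' construction never uses the assumption, since only complements of $\H$-terms enter $\Nbar$, and the first disjunct of (1) is just the $d=0$ instance of the second, not a device for handling that asymmetry.
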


\begin{proof}
 To begin, we show that (\ref{part-bd})$\Rightarrow$(\ref{part-c3}).
 If the first option in  (\ref{part-bd}) holds, then we take $\Pbar = \Ybar$ in (\ref{part-c3}).
 Since $b\geq 1$,  we have satisfied (\ref{part-c3}).
 We turn to the second option in (\ref{part-bd}).
Let $\Ybar$ and  $\Nbar $ be as in (\ref{part-bd}).
We have that for all $w\in W$:
\[
b\cdot  p_w(\Ybar) + \frac{d-b}{2}  \leq d\cdot p_w(\Nbar) = 
d \cdot (1- p_w(\Nbar^c)) = 
d -  d \cdot p_w(\Nbar^c).
\]
 Rearranging,
 \[
 b\cdot  p_w(\Ybar) + d \cdot p_w(\Nbar^c) \leq d - \frac{d-b}{2}  = \frac{b+d}{2}.
 \]
 Divide through by $b+d$ to get
  \begin{equation}\label{nearly}
\frac{b}{b+d}\cdot  p_w(\Ybar) + \frac{d}{b+d} \cdot p_w(\Nbar^c) \leq \frac{1}{2}.
 \end{equation}
 Now let $\Pbar = P_1,\ldots, P_b, P_{b+1}, \ldots, P_{b+d}$ be $\Ybar, \Nbar^c$.  On the left 
 in (\ref{nearly}), we have $p_w(\Ybar,\Nbar^c) = p_w(\Pbar)$.
 So $\Pbar$ satisfies the 
 condition that $p_w(\Pbar) \leq \frac{1}{2}$
 for all $w\in W$.
 In addition, since $b\geq 1$, there is at least one $i$ such that $P_{i}\in\M$. 
  
Going the other way, we show that (\ref{part-c3})$\Rightarrow$(\ref{part-bd}).
 Let $P_1,\ldots, P_k$ be as in (\ref{part-c3}).
 By permuting the sequence, we may write it as 
 \[
 Y_1, \ldots, Y_b, Z_1, \ldots, Z_d
 \]
where $\Ybar\in \M$ and $\Zbar \in \H$. 
If $d$, the length of the sequence $\Zbar$, is $0$, 
 we  take $\Ybar = \Pbar$
and check that the first option in (\ref{part-bd}) holds.  So we assume that $d\geq 1$ going forward.
By the assumption in  (\ref{part-c3}) that some element of the sequence $\Pbar$ is in $\M$,
we have   $b\geq 1$.
Let $N_j = Z^c_i$ for $1\leq j\leq d$.
By the statement in  (\ref{part-c3}), we see that for all $w\in W$,
\[
\frac{b}{b+d} \cdot p_w(\Ybar) + \frac{d}{b+d}  \cdot  p_w(\Zbar) = p_w(\Ybar,\Zbar)   = p_w(\Pbar) \leq \frac{1}{2}.
\]
Thus 
\[
\frac{b}{b+d}  \cdot  p_w(\Ybar) + \frac{d}{b+d} \cdot (1 - p_w(\Nbar)) \leq \frac{b+d}{2(b+d)}.
\] 
Rearranging, we get
\[
\frac{b}{b+d}  \cdot  p_w(\Ybar)  + \frac{d-b}{2(b+d)}\leq \frac{d}{b+d}  \cdot  p_w(\Nbar).
\] 
Multiply the line above by $b+d$ throughout to get the second option in (\ref{part-bd}).
\end{proof}

\begin{lem} 
\label{lemma-for-FZ} 
Let $\structure=(W,\M)$ be a finite social decision frame.
Assume that if $A\in\M$, then $A^c\notin \M$.
The following are equivalent:
\begin{enumerate}
\item There is a sequence $\Nbar = N_1, \ldots, N_d$  such that $p_w(\Nbar) \geq \frac{1}{2}$
for all $w\in W$, and for some $w^*\in V$, $p_{w^*}(\Nbar) > \frac{1}{2}$.
\item
There is a sequence $\Pbar = P_1, \ldots, P_d$ 
 so that for all $v\in W$,  $p_w(\Pbar) \leq \frac{1}{2}$, and  for some $w^*\in V$, $p_{w^*}(\Pbar) < \frac{1}{2}$.
 That is, ({\sc c1}) and ({\sc c2}) hold of $\Pbar$, but ({\sc c4}) fails.
\end{enumerate}
\end{lem}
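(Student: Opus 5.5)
The plan is to pass between the two conditions by taking complements term by term, in the same way the proof of \cref{lemma-for-FZ-two} turned $\Nbar$ into $\Nbar^c$. I read the statement with the notational conventions of the preceding Remark: in (1) the sets $N_j$ are \emph{not} in $\M$, and in (2) the sets $P_i$ lie in $\M\cup\H$ (this is what ({\sc c1}) requires). The proof has two steps: a set-level correspondence, and then a direct transfer of the probability inequalities.

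\emph{Step 1: complementation is a bijection between $\tpow{W}\setminus\M$ and $\M\cup\H$.} Suppose $N\notin\M$. If $N^c\in\M$ we are done. Otherwise neither $N$ nor $N^c$ is in $\M$, so $N^c\in\H$ by the definition of $\H$. Conversely, suppose $P\in\M\cup\H$; we must show $P^c\notin\M$. If $P\in\H$, then $P^c\notin\M$ by the definition of $\H$. If $P\in\M$, then $P^c\notin\M$ by the standing hypothesis that $A\in\M$ implies $A^c\notin\M$. This is the only place that hypothesis is used, and it is needed: without it, an $M$-member whose complement is also in $\M$ would have no counterpart on the $\Nbar$ side.

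\emph{Step 2: transfer the inequalities.} Given $\Nbar$ as in (1), set $P_i=N_i^c$. By Step 1 each $P_i\in\M\cup\H$, and the sequence has the same length $d\geq 1$. Using the identity $p_w(\Abar)=1-p_w(\Abar^c)$ from the Remark, we get $p_w(\Pbar)=1-p_w(\Nbar)$. Hence $p_w(\Nbar)\geq\frac12$ for all $w$ gives $p_w(\Pbar)\leq\frac12$ for all $w$, which is ({\sc c2}). Likewise $p_{w^*}(\Nbar)>\frac12$ gives $p_{w^*}(\Pbar)<\frac12$, which is the failure of ({\sc c4}). The converse direction is the same computation read backwards, with $N_i=P_i^c$; Step 1 guarantees $N_i\notin\M$.

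The argument is routine; the only point requiring care is the converse half of Step 1, which I expect to be the sole (mild) obstacle. I would also remark that the typo $w^*\in V$ should read $w^*\in W$.
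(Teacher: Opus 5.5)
Your proposal is correct and follows the paper's proof. Both complement termwise ($P_i=N_i^c$, and conversely $N_i=P_i^c$), use $p_w(\Pbar)=1-p_w(\Nbar)$, and invoke the hypothesis that $A\in\M$ implies $A^c\notin\M$ only in the direction (2)$\Rightarrow$(1). Your Step 1 also makes explicit a check that the paper's (1)$\Rightarrow$(2) leaves implicit: that $N\notin\M$ implies $N^c\in\M\cup\H$, so that $\Pbar$ really satisfies ({\sc c1}).
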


\begin{proof}
(1)$\Rightarrow$(2):  
Note first that (1) implies that $d \geq 1$.   For $1\leq i\leq d$, let 
$P_i  = N^c_i$. Note that for $w\in W$, 
$p_w(\Pbar) = 1 - p_w(\Nbar) \leq 1- \frac{1}{2} = \frac{1}{2}$.  And for  $w^*$ such that  $p_{w^*}(\Nbar) \geq \frac{1}{2}$
we have  $p_{w^*}(\Pbar) <  \frac{1}{2}$.

Going the other way, we show that  (2)$\Rightarrow$(1).  Let $\Pbar$ be as in (2).  
Each set $P_i$ is either in $\M$ (and hence $P^c_i\notin \M$, by assumption in this result), or else is in $\H$ (so again $P^c_i\notin \M$).
So we can take $N_i = P^c_i$.  The fact that $\Pbar$ satisfies (2) translates to the fact that $\Nbar$ satisfies (1), just as in the previous paragraph.
\end{proof}

\begin{lem}  \label{almost-equiv}
Let $\structure=(W,\M)$ be a finite social decision frame.
 Assume that if $A\in \M$, then $A^c\notin \M$.
The following are equivalent:
\begin{enumerate}
\item (the negation of Fu and Zhou's condition)
Either \begin{enumerate}
\item there are sequences $\Ybar = Y_1\ldots, Y_b$  and 
 $\Nbar = N_1, \ldots, N_d$  such that for all $v\in W$:
 \[
b\cdot  p_w(\Ybar) + \frac{d-b}{2}  \leq  d \cdot p_w(\Nbar). \]
(We allow $d = 0$, and then we understand the line above to mean that $p_v(\Ybar)\leq \frac{b}{2}$.)
\item or else there is 
 $\Nbar = N_1, \ldots, N_d$  such that 
 for all $v\in V$,
 $p_v(\Nbar) \geq \frac{1}{2}$, and also for some $v^*$, $p_{v^*}(\Nbar) > \frac{1}{2}$.
 \end{enumerate}
 \item  (incoherence)   There is a sequence $\Pbar = P_1, \ldots, P_k$ 
  such that for all $v\in W$,  $p_v(\Pbar) \leq \frac{1}{2}$, and 
 either
  \begin{enumerate}
\item $P_{i}\in \M$ for some $1\leq i\leq k$;
\item or else   for some $v^*\in W$, $p_{v^*}(\Pbar) < \frac{1}{2}$.
\end{enumerate}
\end{enumerate}
\end{lem}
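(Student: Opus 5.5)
The plan is to split the equivalence along the two disjuncts and reduce each half to one of the two preceding lemmas. Under the standing hypothesis that $A\in\M$ implies $A^c\notin\M$, disjunct (1)(a) is exactly the second option of (\ref{part-bd}) in \cref{lemma-for-FZ-two}. (When $d=0$, it is also the first option, after reading $p_w(\Ybar)\leq\frac{1}{2}$ correctly for the degenerate case.) Disjunct (1)(b) is exactly clause (1) of \cref{lemma-for-FZ}. On the other side, (2)(a) is clause (\ref{part-c3}) of \cref{lemma-for-FZ-two}, and (2)(b) is clause (2) of \cref{lemma-for-FZ}. Throughout, (2) is read with $\Pbar$ a sequence in $\M\cup\H$, so that (2) is precisely the failure of coherence: ({\sc c1}) and ({\sc c2}) hold, but ({\sc c3}) or ({\sc c4}) fails.

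For (1)$\Rightarrow$(2), I split into cases. If (1)(a) holds, \cref{lemma-for-FZ-two} yields a sequence $\Pbar$ in $\M\cup\H$ with $p_w(\Pbar)\leq\frac12$ for all $w$ and some term in $\M$; this is (2)(a). If (1)(b) holds, \cref{lemma-for-FZ} yields $\Pbar$ satisfying ({\sc c1}), ({\sc c2}) and some $w^*$ with $p_{w^*}(\Pbar)<\frac12$; this is (2)(b). The converse runs the same way. Given $\Pbar$ as in (2), if some $P_i\in\M$, apply \cref{lemma-for-FZ-two} (\ref{part-c3})$\Rightarrow$(\ref{part-bd}). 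Its first option, a sequence $\Ybar$ of members of $\M$ with $p_w(\Ybar)\leq\frac12$, is the $d=0$ instance of (1)(a), since multiplying by $b$ gives $b\,p_w(\Ybar)+\frac{0-b}{2}\leq 0$. Its second option is literally (1)(a). If instead no $P_i$ is in $\M$ but some $p_{v^*}(\Pbar)<\frac12$, apply \cref{lemma-for-FZ} (2)$\Rightarrow$(1) to get (1)(b).

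The main obstacle is bookkeeping rather than mathematics. First, the $d=0$ convention must be reconciled: the statement writes $p_v(\Ybar)\leq\frac b2$, whereas the computation gives $p_w(\Ybar)\leq\frac12$. I will interpret the convention as the literal instance of the displayed inequality, namely $b\,p_w(\Ybar)\leq \frac b2$, which agrees with the reading in \cref{lemma-for-FZ-two}. Second, I must verify that the sequences produced by the lemmas consist of sets in $\M\cup\H$ and are nonempty. For \cref{lemma-for-FZ}, the complements $N_i^c$ lie in $\M\cup\H$ only under the appropriate reading of $\Nbar$ (sets not in $\M$). I will therefore state explicitly that $\Nbar$ ranges over sets not in $\M$, so that $N_i^c\in\M\cup\H$: if $N_i\notin\M$, then either $N_i^c\in\M$ or both $N_i,N_i^c\notin\M$, in which case $N_i^c\in\H$. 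This is the one place where the hypothesis that $\M$ contains no complementary pair is used, to ensure that complements of members of $\M\cup\H$ are not in $\M$.
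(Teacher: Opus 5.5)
Your proposal is correct and matches the paper's proof. The paper also splits the equivalence disjunct by disjunct, using \cref{lemma-for-FZ-two} for (1)(a)$\Leftrightarrow$(2)(a) and \cref{lemma-for-FZ} for (1)(b)$\Leftrightarrow$(2)(b). Your extra bookkeeping makes explicit conventions the paper leaves implicit, and these are needed for the statement to be non-trivial: you read the $d=0$ case as $b\,p_w(\Ybar)\leq \frac{b}{2}$, and you take $\Pbar$ in $\M\cup\H$ and $\Nbar$ outside $\M$.
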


 Lemma~\ref{lemma-for-FZ-two} shows that (1a) is equivalent to 
the statement at the beginning of (2), with (2a) added.
Lemma~\ref{lemma-for-FZ} shows that (1b) is equivalent to 
the same statement at the beginning of (2), with (2b) added. 
(Notice that the extra hypothesis in Lemma~\ref{lemma-for-FZ} 
is a hypothesis in this lemma as well.)

\begin{thm} 
\label{theorem-equivalence}
 Let $\structure=(W,\M)$ be a finite social decision frame.
The following are equivalent:
\begin{enumerate}
\item (FZ)
Both (a) and (b) below hold: \begin{enumerate}
\item There are no sequences $\Ybar = Y_1\ldots, Y_b$  and 
 $\Nbar = N_1, \ldots, N_d$  such that for all $w\in W$,
 \[
b\cdot  p_w(\Ybar) + \frac{d-b}{2}  \leq  d \cdot p_w(\Nbar). \]
(We allow $d = 0$, and then we understand the line above to mean that $p_w(\Ybar)\leq \frac{1}{2}$.)
\item There is no sequence
 $\Nbar = N_1, \ldots, N_d$  such that $p_w(\Nbar) \geq \frac{1}{2}$ for all $w\in W$, 
 and also for some $w^*\in W$,
 $p_{w^*}(\Nbar) > \frac{1}{2}$.
 \end{enumerate}
 \item  (coherence)  For all sequences $\Pbar = P_1, \ldots, P_k$ 
 such that  $p_w(\Pbar) \leq \frac{1}{2}$  for all $w\in W$, we have: 
({\sc c3}) for all $1\leq i \leq k$, $P_i\in \H$, and  ({\sc c4})  for all $w\in W$, $p_{w}(\Pbar) = \frac{1}{2}$.
\end{enumerate}
\end{thm}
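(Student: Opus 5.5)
The plan is to reduce the theorem to \cref{almost-equiv} by contraposition. I would first split into two cases according to whether the frame satisfies the side hypothesis of \cref{lemma-for-FZ} and \cref{almost-equiv}, namely that $A\in\M$ implies $A^c\notin\M$. In the degenerate case, where some $A$ has both $A\in\M$ and $A^c\in\M$, I would show directly that (1) and (2) both fail, so they are trivially equivalent.
\begin{itemize}
\item Coherence fails: the two-term sequence $A,A^c$ satisfies ({\sc c1}) and has $p_w(A,A^c)=\frac12$ for every $w$, so ({\sc c2}) holds, yet $A\notin\H$, so ({\sc c3}) fails.
\item (FZ)(a) fails: take $\Ybar=A,A^c$ (so $b=2$) and $d=0$. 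Then $p_w(\Ybar)=\frac12\le\frac12$ for all $w$, which is exactly the $d=0$ reading of the forbidden inequality.
\end{itemize}

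In the main case the side hypothesis holds, and I would negate both sides. The negation of (FZ) is ``(a) fails or (b) fails.'' A failure of (a) is a pair $\Ybar,\Nbar$ as in the first disjunct of \cref{almost-equiv}(1); a failure of (b) is a sequence $\Nbar$ as in the second disjunct. Hence $\neg$(FZ) is precisely \cref{almost-equiv}(1).

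The negation of coherence is the existence of $\Pbar$ satisfying ({\sc c1}) and ({\sc c2}) for which ({\sc c3}) or ({\sc c4}) fails. Given ({\sc c1}), a failure of ({\sc c3}) means some $P_i\in\M$, which is case (2a); a failure of ({\sc c4}) under ({\sc c2}) means some $p_{w^*}(\Pbar)<\frac12$, which is case (2b). So incoherence is \cref{almost-equiv}(2), provided $\Pbar$ is read as ranging over sequences from $\M\cup\H$. That is how \cref{lemma-for-FZ-two}(\ref{part-c3}) is phrased, and the proof of \cref{lemma-for-FZ} uses it (to get $P_i^c\notin\M$). Then \cref{almost-equiv} gives $\neg$(FZ) $\Leftrightarrow$ incoherence, hence (FZ) $\Leftrightarrow$ coherence.

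To make this airtight I would re-derive \cref{almost-equiv} from the two preceding lemmas, as the remark after it indicates:
\begin{itemize}
\item \cref{lemma-for-FZ-two} matches (1a) with ``(2) together with (2a).''
\item \cref{lemma-for-FZ} matches (1b) with ``(2) together with (2b).''
\end{itemize}
Taking the disjunction of these two equivalences yields the lemma.

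The main obstacle is bookkeeping rather than mathematics. The statements use inconsistent conventions for $d=0$: \cref{almost-equiv} writes $p_v(\Ybar)\le\frac b2$, while the theorem writes $p_w(\Ybar)\le\frac12$. I would adopt the theorem's version throughout, which is the one \cref{lemma-for-FZ-two} actually uses. I would also check that all sequences are nonempty: $b\ge1$ in the relevant disjuncts, and in \cref{lemma-for-FZ} the existence of $w^*$ with $p_{w^*}(\Nbar)>\frac12$ forces $d\ge1$. Finally, I would check that the complements formed in \cref{lemma-for-FZ} land in the required families: $N_i=P_i^c$ satisfies $N_i\notin\M$ precisely because of the side hypothesis. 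This is why the degenerate case has to be peeled off first.
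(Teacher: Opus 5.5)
Your proof is correct and follows the paper's route: first dispose of the side condition that $A\in\M$ forces $A^c\notin\M$, then combine Lemmas~\ref{lemma-for-FZ-two} and~\ref{lemma-for-FZ} via Lemma~\ref{almost-equiv}. The paper instead shows that each of (FZ) and coherence implies the side condition, which is just the contrapositive of your case split. Your witness $\Ybar=A,A^c$ with $d=0$ is in fact better than the paper's choice $\Ybar=A,A^c$, $\Nbar=\varnothing,\varnothing$. That choice turns the inequality into $1\leq 0$, so it does not actually violate (FZ)(a).
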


(FZ)$\Rightarrow$ coherence: 
Let us first check that (FZ) implies
the hypothesis in  Lemma~\ref{lemma-for-FZ}  which is missing from this result: for all $A\in\M$, $A^c\notin \M$.
Before that, we check that $\varnothing\notin \M$.  For if $\varnothing\in\M$, then we take $\Pbar$ to be the one-term sequence $\varnothing$,
and $\Nbar$ to be the empty sequence.   Then $b = 1$ and $d = 0$, and 
$1\cdot 0 \leq  \frac{1}{2}$.  So we contradict (FZ)(a).  Using that $\varnothing\notin \M$, we   
turn to the missing hypothesis.  Suppose towards a contradiction that both $A$ and $A^c$ belong to $\M$.
Then we take $Y_1,Y_2 = A, A^c$, and $N_1,N_2 = \varnothing,\varnothing$.
For all $i$, (FZ)(a)  implies that  
\[
2 \cdot p_i(Y_1,Y_2)  + \frac{2-2}{2} = 1 \leq  2\cdot 0 = 0.
\]
Of course this is a contradiction.

Let $\Pbar$ be a sequence with ({\sc c1}) and  ({\sc c2}).
If  ({\sc c3}) is false of our sequence, then by Lemma~\ref{lemma-for-FZ-two} we have a failure of (FZ)(b).
If  ({\sc c4}) is false, we have a failure of  Lemma~\ref{lemma-for-FZ}.

Coherence
$\Rightarrow$(FZ).  Let $\structure$ be coherent.   By \cref{prop:easy} part (\ref{basicone}),
we have the condition needed in Lemma~\ref{lemma-for-FZ}: 
 for all $A\in\M$, $A^c\notin \M$.
We check the contrapositive.
If  (FZ)(a) fails, then by (1)$\Rightarrow$(2) in 
Lemma~\ref{lemma-for-FZ-two}, we have a failure of coherence.
If (FZ)(b) fails, then using
(1)$\Rightarrow$(2) in
 Lemma~\ref{lemma-for-FZ}, we get a different failure of coherence.

\subsection{Proof of Proposition~\ref{prop-May}, the May-type characterization
of coherent finite frames invariant under bijections} 
Assuming that  $\M$ is invariant under bijections, we show that 
\begin{equation}\label{Maytype}
\M = \set{A\subseteq W : |A| > \sfrac{1}{2}|W|}.
\end{equation}

By automorphism invariance, if $A\in \M$ and $|B| = |A|$, then $B\in \M$ as well.
Let $m$ be the smallest size of any set in $\M$.   By \cref{prop:easy}(\ref{basicthree}),
 $\M$ contains all sets of size $\geq m$.

Suppose first toward a contradiction that $\M$ contains some set $A$ of size $\leq  \sfrac{1}{2}|W|$.
Then $\M$ contains all sets of the same size as $A$.  It is easy to check our result when $|W| = 1$, so we may assume that
 $|W| \geq 2$.
 We can find $B$ of the same size as $A$ but disjoint from it.
Then the sequence $A, B$ has ({\sc c1}) and ({\sc c2}) but not ({\sc c3}), contradicting coherence.  
It follows that $\M$ is a subset of the set in (\ref{Maytype}).  In the other direction, let $A\subseteq W$ have  $|A| > \sfrac{1}{2}|W|$.
Assume towards a contradiction that $A\notin\M$.   Then $A^c$ has $|A^c| <  \sfrac{1}{2}|W| <|A|$.  
So $A^c \notin \M$ as well.  Thus, $A$ and $A^c$ belong to $\H$.  Moreover, let $w\in A$, and let $B = A\setminus\set{w}$.
Then the sequence $A^c, B$ has  ({\sc c1}) and ({\sc c2}) but not ({\sc c4}).  So we again contradict coherence.

\subsection{Proof of Proposition~\ref{prop:basic-proof-system}}
We present some formal proofs in the logical system of this paper.

 For part (\ref{basicone-proof-system}),
 suppose towards a contradiction that $\unarymost t$, $\unarymost u$, and $\unaryforall(\nott(t\andd u))$.
 Observe that we have a Boolean tautology
 \[
 \nott(t\andd u) \iif ((t\andd\nott u) \orr (\nott t\andd u)  \orr (\nott t\andd\nott u)) 
 \]
 Using one of the ($\iif$) axioms, we get
  \[
\proves \unaryforall \nott(t\andd u) \iif  \unaryforall((t\andd\nott u) \orr (\nott t\andd u)  \orr (\nott t\andd\nott u)) 
 \]
 By a coherence axiom, our assumptions imply that
 \[
 \unaryhalf t\andd \unaryhalf u \andd  \unaryforall((t\andd\nott u) \orr (\nott t\andd u) ) 
 \]
This is a contradiction, since $\unaryhalf t\iif \nott \unarymost t$ is a  substitution instance of a tautology of propositional logic.
 
 The same argument shows part (\ref{basictwo-proof-system}).
 
 We turn to part (\ref{basicthree-proof-system}).
By  part (\ref{basicone-proof-system}), $\proves \unarymost t \andd \unarymost\nott u \iif \unaryexists(t\andd\nott u)$. 
 Now $\proves \unaryexists(t\andd\nott u) \iif \nott \unaryforall(t\iif u)$.
 And so we easily have the statement in  part (\ref{basicthree-proof-system}).

Next, we show (\ref{basicthreehalf-proof-system}).
One of the axioms is $\unaryforall(u\iif(t\iif (t\andd u))$.
So we easily get $\proves \unaryforall u \iif \unaryforall(t\iif(t\iif u))$.
Thus, our hypotheses imply that 
\[ \proves \unarymost t\andd \unaryforall u \iif \unarymost  t \andd \unaryforall(t\iif(t\andd u)).
\]
Part (3) then implies that $\proves \unarymost t\andd \unaryforall u \iif \unarymost(t\andd u)$, as requested.

   For (\ref{basicfour-proof-system}), consider $\false$ as a sequence of length $ k =1$, and also the corresponding  coherence axiom.
 The empty set is the only set $S\subseteq [1]$ with $|S|\leq \frac{1}{2}$.
 There is no set $S\subseteq [1]$ with $|S| = \frac{1}{2}$.
The empty conjunction is $\true$ and the empty disjunction is $\false$.
 So the coherence axiom is 
 \begin{equation}\label{provesMtrue}
 \bigl[(\unarymost\false \orr \unaryhalf\false) \andd \unaryforall\true ] \iif  \bigl[\unaryhalf\false \andd \unaryforall\false\bigr] 
 \end{equation}
 The logic gives us $\proves \unaryforall\true$.
 Also, we have an axiom $\unaryexists\true$.  From these, we get $\nott\unaryforall\false$.
So one the antecedents of the conditional (\ref{provesMtrue}) must fail; it must be
$\unarymost\false \orr \unaryhalf\false$.  That is $\proves \nott\unarymost\false \andd \nott\unaryhalf\false$.
By the way our $\unarymost$ notation works as an abbreviation, we have $\proves \unarymost\nott\false$.
That is, $\proves \unarymost\true$.

We turn to part (\ref{basicsix-proof-system}). Assume $\unaryforall t$.  Since $t\iif(\true\iif t)$ is a Boolean tautology,
we have an axiom $\unaryforall t\iif \unaryforall(\true\iif t)$.  Thus, we get $\unaryforall(\true\iif t)$.
As a special case of part (\ref{basicthree-proof-system}), we have
 $\proves(\unarymost \true \andd \unaryforall(\true\iif t))\iif \unarymost t$.
We have $\proves \unarymost \true$ in part (\ref{basicfour-proof-system})
and $\unaryforall(\true\iif t)$ just above.  So we get $\unarymost t$.

\subsection{Proof of Theorem~\ref{completeness-first-logic}}

In this section, we prove Theorem~\ref{completeness-first-logic}, the completeness of the main new
logical system in this paper.
    
On the ``top level'', the logic is classical propositional logic.
So it is sufficient to show that every finite maximal consistent set has a maximal model.
In more detail, we show the contrapositive of the statement in our theorem.   Suppose that $\Gamma\not\proves \phi$.
Then $\Gamma\cup\set{\nott\phi}$ is consistent in the logic.   So our work below will show that this same set
has a measurable model.   This 
would then contradict the hypothesis that every measurable model of $\Gamma$ satisfies $\phi$.

Fix a finite maximal consistent set $\Gamma$.  Let $X_1, \ldots, X_n$ be the finite set of atomic terms in $\Gamma$.
Let $\SD$ be the set of \emph{state descriptions on $X_1,\ldots, X_n$}, the terms $\pm X_1 \andd\cdots \andd \pm X_n$.
(Here $\pm X_i$ means either $X_i$ or $\nott X_i$.)
We denote elements of $\SD$ by $\alpha$ in the rest of this discussion.

Let 
\begin{equation}  \label{preliminary space in completeness proof}
\begin{array}{lcl}
W & = &  \set{\alpha\in \SD : \Gamma\proves \unaryexists\alpha}\\
\M   & = & \set{\unaryforall\subseteq W : \Gamma\proves \unarymost(\bigvee A)}\\
\sems{X} & = &  \set{\alpha\in W :     \alpha\iif X \mbox{ is a Boolean tautology}}\\
\end{array}
\end{equation}
This gives us a finite frame $\structure = (W,\M)$, and also a model $(\structure, \sems{\ })$.

We need the following facts
\begin{equation}\label{facts}
\begin{array}{lcl}
\set{\alpha\in W : \alpha\leq \top } & =  & W\\
\set{\alpha\in W : \alpha\leq\nott t}   & =  & W\setminus \set{\alpha\in W : \alpha\leq t}\\
\set{\alpha\in W : \alpha\leq t\andd u}    & =  &  \set{\alpha\in W : \alpha\leq t \mbox{ and } \alpha\leq u}\\
\end{array}
\end{equation}

\begin{lem}\label{lemma-logic-1}
For all terms $t$  built from the atomic terms $X_1$, $\ldots$, $X_n$,
\begin{equation}\label{observePL}
\sems{t} = \set{\alpha\in W :   \alpha\leq t }.
\end{equation}
\end{lem}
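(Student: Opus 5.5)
The plan is to argue by structural induction on the term $t$, using the three facts recorded in (\ref{facts}) at each step. Throughout, $\alpha\leq t$ means that $\alpha\iif t$ is a Boolean tautology, and $\alpha$ ranges over $W\subseteq\SD$.

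\emph{Base case.} Take $t=X_j$ with $1\leq j\leq n$. Here (\ref{observePL}) is exactly the definition of $\sems{X_j}$ in (\ref{preliminary space in completeness proof}), so nothing needs to be shown.

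\emph{Negation.} Suppose $t=\nott u$. By the semantics, $\sems{\nott u}=W\setminus\sems{u}$. By the induction hypothesis this equals $W\setminus\set{\alpha\in W:\alpha\leq u}$. The second line of (\ref{facts}) then identifies it with $\set{\alpha\in W:\alpha\leq\nott u}$.

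\emph{Conjunction.} Suppose $t=u\andd v$. Then $\sems{u\andd v}=\sems{u}\cap\sems{v}$. Applying the induction hypothesis to $u$ and $v$, this is $\set{\alpha\in W:\alpha\leq u\mbox{ and }\alpha\leq v}$. The third line of (\ref{facts}) turns it into $\set{\alpha\in W:\alpha\leq u\andd v}$. The constants $\bot$ and $\top$ are abbreviations, so they are covered by these cases, and the first line of (\ref{facts}) agrees with $\sems{\top}=W$.

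\emph{Justifying (\ref{facts}).} The one real point is that the three facts hold, and this is the step I expect to need the most care. It rests on the fact that each state description $\alpha$ is an atom of the finite Boolean algebra of terms over $X_1,\ldots,X_n$, modulo Boolean equivalence. I would verify this by a truth-valuation argument: $\alpha$ is satisfied by exactly one assignment to $X_1,\ldots,X_n$. Hence $\alpha\leq t$ holds iff $t$ is true under that assignment, and therefore exactly one of $\alpha\leq t$ and $\alpha\leq\nott t$ holds. Two details need attention. First, $\alpha\leq t$ and $\alpha\leq\nott t$ cannot both hold, because $\alpha$ is satisfiable and so is not equivalent to $\bot$. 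Second, $\alpha\leq u\andd v$ iff $\alpha\leq u$ and $\alpha\leq v$, which is immediate from the lattice structure. The first line of (\ref{facts}) is trivial since every term is below $\top$. None of this uses $\Gamma$; the restriction to $W$ plays no role beyond being the ambient set.
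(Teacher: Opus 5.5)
Your proof is correct and takes the same approach as the paper, whose entire proof is ``a straightforward induction on the term $t$ using (\ref{facts})''; your base, negation, and conjunction cases are exactly that induction. Your truth-valuation argument for (\ref{facts}), namely that each state description is an atom so exactly one of $\alpha\leq t$ and $\alpha\leq\nott t$ holds, fills in a point the paper only asserts (it states the same fact in passing in the proof of Lemma~\ref{lemma-in-first-completeness}).
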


The proof is a straightforward induction on the term $t$ using (\ref{facts}).

\begin{lem}\label{lemma-in-first-completeness}
\begin{enumerate}
\item \label{AW}
\label{part-SD} $\Gamma\proves \unaryforall(\bigvee W)$.
\item \label{part-some}
$\Gamma\proves \unaryexists  t$ iff for some $\alpha\in W$, $\alpha\leq t$.
\item \label{part-all}
$\Gamma\proves \unaryforall t$ iff for all $\alpha\in W$, $\alpha\leq t$.

\item \label{part-last}
$\Gamma\proves \unaryforall t$ iff $\sems{t}=W$.
\end{enumerate}
\end{lem}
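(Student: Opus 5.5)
We prove the four parts in the order (\ref{AW}), (\ref{part-some}), (\ref{part-all}), (\ref{part-last}). Throughout we use that $\Gamma$ is maximal consistent, so for each sentence $\psi$ in the relevant finite fragment either $\Gamma\proves\psi$ or $\Gamma\proves\nott\psi$. From \textsc{distributivity} together with propositional reasoning we obtain the standard derived rules:
\[
\text{if } t\leq u \text{ then } \proves \unaryforall t\iif\unaryforall u,
\qquad
\proves \unaryforall t\andd\unaryforall u \iif \unaryforall(t\andd u).
\]
The first follows because $t\iif u$ is a Boolean tautology, hence $\unaryforall(t\iif u)$ is provable. We also use that Boolean-equivalent terms are interchangeable under $\unaryforall$. (This needs $\unaryforall$ applied to each Boolean tautology. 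We take it from the axiom $\unaryforall\top$ together with the first rule, since every tautology $\tau$ satisfies $\top\leq\tau$; the rule only needs $\unaryforall(\top\iif\tau)$, which we assume is provable for tautologies, as the paper's own proofs do.)

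For (\ref{AW}), consider each state description $\alpha\in\SD\setminus W$. By maximality, $\Gamma\proves\nott\unaryexists\alpha$, that is, $\Gamma\proves\unaryforall\nott\alpha$. Conjoining these finitely many sentences gives $\Gamma\proves\unaryforall\bigwedge_{\alpha\notin W}\nott\alpha$. Since $\bigvee\SD$ is a Boolean tautology, this conjunction is Boolean-equivalent to $\bigvee W$, and the derived rule yields $\Gamma\proves\unaryforall(\bigvee W)$.

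For (\ref{part-some}), the right-to-left direction holds because $\alpha\leq t$ gives $\unaryforall\nott t\iif\unaryforall\nott\alpha$, and contraposing turns $\Gamma\proves\unaryexists\alpha$ into $\Gamma\proves\unaryexists t$. For left-to-right, suppose no $\alpha\in W$ satisfies $\alpha\leq t$. Every state description is either below $t$ or below $\nott t$, so each $\alpha\in W$ satisfies $\alpha\leq\nott t$, hence $\bigvee W\leq\nott t$. By (\ref{AW}) we get $\Gamma\proves\unaryforall\nott t$, i.e.\ $\Gamma\proves\nott\unaryexists t$. This contradicts consistency when combined with $\Gamma\proves\unaryexists t$.

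Part (\ref{part-all}) follows from (\ref{part-some}) applied to $\nott t$, since $\unaryforall t$ is equivalent to $\nott\unaryexists\nott t$. Explicitly, $\Gamma\proves\unaryforall t$ holds iff $\Gamma\not\proves\unaryexists\nott t$, by maximal consistency. That holds iff no $\alpha\in W$ has $\alpha\leq\nott t$, which holds iff every $\alpha\in W$ has $\alpha\leq t$, by the dichotomy for state descriptions. Part (\ref{part-last}) is then immediate from Lemma~\ref{lemma-logic-1}, which identifies $\sems{t}$ with $\set{\alpha\in W:\alpha\leq t}$.

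The main obstacle is the bookkeeping in (\ref{AW}) and (\ref{part-some}). There we must justify passing from provable $\unaryforall$ of one term to provable $\unaryforall$ of a Boolean-equivalent term. We must also make sure maximality of $\Gamma$ covers the sentences $\unaryexists\alpha$, so the relevant fragment has to include them, or we close $\Gamma$ under the finitely many needed sentences. Everything else is routine propositional reasoning.
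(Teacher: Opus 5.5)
Your proposal is correct and follows essentially the same route as the paper. Part (1) is obtained by conjoining $\unaryforall\nott\alpha$ for $\alpha\notin W$ and passing to the Boolean-equivalent term $\bigvee W$. Part (2) uses monotonicity of $\unaryforall$ in one direction, and part (1) together with consistency in the other. Parts (3) and (4) come from the state-description dichotomy and Lemma~\ref{lemma-logic-1}. The two bookkeeping points you flag are $\unaryforall\tau$ for Boolean tautologies $\tau$, and maximality of $\Gamma$ with respect to the sentences $\unaryexists\alpha$. The paper's proof also relies on both tacitly, so making them explicit clarifies the argument rather than departing from it.
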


\begin{proof}
Here is the proof of the first part.   One of the axioms of the logic is $\unaryforall(\bigvee SD)$, since $\bigvee SD$ is a Boolean tautology.
For all $\alpha\notin W$, $\Gamma\proves \unaryforall(\nott\alpha)$.  So by the logic, $\Gamma\proves \unaryforall(\bigwedge_{\alpha\notin W}\nott \alpha)$.
We have a Boolean tautology
$\bigwedge_{\alpha\notin W} \nott\alpha \iiff \bigvee_{\alpha\in W} \alpha$.
Thus $\Gamma\proves \unaryforall(\bigvee W)$.

For part (\ref{part-some}), assume that for some $\alpha\in W$, $ \alpha\leq t$.
By the logic, $\Gamma\proves \unaryexists\alpha \iif \unaryexists t$.   Since $\alpha\in W$, $\Gamma\proves \unaryexists\alpha$.  Hence $\Gamma\proves \unaryexists  t$.
In the other direction, suppose that for all $\alpha\in W$, $\alpha\leq \nott t$.
Then so is $\bigvee W \iif \nott t$.   The logic proves that $\unaryforall(\bigvee W)\iif \unaryforall(\nott t)$.  Hence 
$\Gamma\proves \unaryforall(\nott t)$.  By consistency,  $\Gamma\notproves \nott \unaryforall(\nott t)$.  That is   $\Gamma\notproves \nott \unaryexists  t$.

Part (\ref{part-all}) follows from part (\ref{part-some}) by straightforward Boolean reasoning.  
We also use the fact that each state description $\alpha$ has the property that 
either $\alpha\leq t$ or $\alpha \leq \nott t$ 
 (and not both).

The last part follows from part (\ref{part-all}) and Lemma~\ref{lemma-logic-1}.
\end{proof}

For any set $B\subseteq W$, let $u_B = \bigvee B$.  Note that $u_B$ is a term.

\begin{lem}\label{lastlemma}
For all $B, C\subseteq W$:
\begin{enumerate}
\item $\sems{u_B} = B$. \label{part-lastlemma-4}
\item $\sems{\nott u_{B}} = W\setminus B$. \label{part-lastlemma-5}
\item $B\in\M$ iff $\Gamma\proves \unarymost u_B$.\label{part-lastlemma-5-5}
\item $B\in\H$ iff $\Gamma\proves \unaryhalf u_B$.\label{part-lastlemma-6}
\end{enumerate}
\end{lem}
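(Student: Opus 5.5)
Parts (1) and (2) come straight from \cref{lemma-logic-1}, and I would do them first. By that lemma, $\sems{u_B}=\set{\alpha\in W:\alpha\leq \bigvee B}$. For a state description $\alpha\in W$, $\alpha\leq\bigvee B$ holds iff $\alpha\in B$. This is because distinct state descriptions are mutually exclusive, and no state description is a Boolean contradiction, so $\alpha\leq\bigvee B$ forces $\alpha$ to be one of the disjuncts. Hence $\sems{u_B}=B$. Part (2) then follows from the clause $\sems{\nott t}=W\setminus\sems{t}$.

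For part (3), I read the definition of $\M$ in (\ref{preliminary space in completeness proof}) as $\M=\set{A\subseteq W:\Gamma\proves\unarymost(\bigvee A)}$, the ``$\unaryforall$'' there being a typo for $A$. With that reading, $B\in\M$ iff $\Gamma\proves\unarymost u_B$ is immediate.

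Part (4) is the only real step. We have $B\in\H$ iff $B\notin\M$ and $W\setminus B\notin\M$. By (3), this holds iff $\Gamma\not\proves\unarymost u_B$ and $\Gamma\not\proves\unarymost u_{W\setminus B}$. Since $\Gamma$ is maximal consistent, this is equivalent to $\Gamma\proves\nott\unarymost u_B\andd\nott\unarymost u_{W\setminus B}$. It remains to replace $u_{W\setminus B}$ by $\nott u_B$, i.e.\ to show $\Gamma\proves\unarymost u_{W\setminus B}\iiff\unarymost\nott u_B$. For this I would use \cref{lemma-in-first-completeness}(\ref{AW}), $\Gamma\proves\unaryforall(\bigvee W)$. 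Combined with the Boolean tautology $\bigvee W\iif(\nott u_B\iiff u_{W\setminus B})$ and the distributivity axiom, it gives $\Gamma\proves\unaryforall(\nott u_B\iif u_{W\setminus B})$ and $\Gamma\proves\unaryforall(u_{W\setminus B}\iif\nott u_B)$. \cref{prop:basic-proof-system}(\ref{basicthree-proof-system}) then transfers $\unarymost$ in both directions. Once this equivalence is in hand, $\Gamma\proves\unaryhalf u_B$ matches the abbreviation of $\unaryhalf$, and the biconditional follows.

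The main obstacle is this substitution step in (4). Provable equivalence relative to $\Gamma$ only holds on $W$, not outright, so the argument must route through the fact that $\Gamma$ proves $\unaryforall(\bigvee W)$ rather than appeal to a Boolean tautology directly. The other point needing care is the maximality of $\Gamma$: it is what converts ``not provable'' into ``negation provable.''
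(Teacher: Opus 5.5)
Your proof is correct. The paper states this lemma without proof, and your argument is the natural one: (1)--(3) follow from Lemma~\ref{lemma-logic-1}, the fact that state descriptions are atoms, and the typo-corrected definition of $\M$. For (4), you replace $\unarymost u_{W\setminus B}$ by $\unarymost\nott u_B$ using $\Gamma\proves\unaryforall(\bigvee W)$, distributivity, and Proposition~\ref{prop:basic-proof-system}(\ref{basicthree-proof-system}), with $\unaryforall$ of a Boolean tautology taken as an axiom, as the paper does throughout; this is the same device the paper uses later when proving $\Model\models\Gamma$.
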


\begin{lem} The finite majority structure $(W,\M)$ is coherent and thus measurable.  
\label{lemma: coherence inside completeness theorem}
\end{lem}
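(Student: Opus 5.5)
The plan is to verify coherence of $(W,\M)$ directly from the coherence axiom scheme of the logic, and then invoke Theorem~\ref{theorem-measurable-coherent} to get measurability. So let $A_1,\ldots,A_k\subseteq W$ satisfy ({\sc c1}) and ({\sc c2}). I consider the terms $u_i = u_{A_i} = \bigvee A_i$ and the corresponding instance of the coherence axiom. The work is to show that $\Gamma$ proves its antecedent, and then that the provability of its consequent yields ({\sc c3}) and ({\sc c4}) in the frame.

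\textbf{Antecedent.} By ({\sc c1}), each $A_i\in\M\cup\H$, so by Lemma~\ref{lastlemma}(\ref{part-lastlemma-5-5}),(\ref{part-lastlemma-6}) we get $\Gamma\proves\unarymostorhalf u_i$ for every $i$.

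For the universal conjunct, let $\theta$ be the term $\bigvee_{|S|\geq k/2}\bigwedge_{i\in S}\nott u_i$. I use Lemma~\ref{lemma-in-first-completeness}(\ref{part-all}): $\Gamma\proves\unaryforall\theta$ iff $\alpha\leq\theta$ for all $\alpha\in W$. Fix $\alpha\in W$ and put $S_\alpha=\set{i:\alpha\notin A_i}$. By ({\sc c2}), $\alpha$ lies in at most $k/2$ of the $A_i$, so $|S_\alpha|\geq k/2$. Since state descriptions are pairwise disjoint atoms, $\alpha\leq\nott u_{A_i}$ exactly when $\alpha\notin A_i$. Hence $\alpha\leq\bigwedge_{i\in S_\alpha}\nott u_i\leq\theta$, and so $\Gamma\proves\unaryforall\theta$.

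\textbf{Consequent.} Modus ponens now gives $\Gamma\proves\unaryhalf u_i$ for each $i$, and Lemma~\ref{lastlemma}(\ref{part-lastlemma-6}) turns this into $A_i\in\H$, which is ({\sc c3}). We also get $\Gamma\proves\unaryforall\theta'$, where $\theta'=\bigvee_{|S|=k/2}(\bigwedge_{i\in S}u_i\andd\bigwedge_{i\notin S}\nott u_i)$. By Lemma~\ref{lemma-in-first-completeness}(\ref{part-all}), every $\alpha\in W$ satisfies $\alpha\leq\theta'$. Because $\alpha$ is an atom, it lies below one of the disjuncts, say the one for $S$. Then $\alpha\in A_i$ exactly for $i\in S$, so $\alpha$ belongs to exactly $k/2$ of the sets, which is ({\sc c4}). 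When $k$ is odd the disjunction in $\theta'$ is empty, so $\theta'$ is $\bot$. Since $W\neq\varnothing$ (from $\Gamma\proves\unaryexists\top$ and part (\ref{part-some})), the statement $\forall\alpha\in W,\ \alpha\leq\bot$ fails, which contradicts consistency. Hence the odd case cannot arise and is harmless.

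\textbf{Main obstacle.} The delicate step is the atom-level bookkeeping, namely the claims $\alpha\leq\nott u_B\iff\alpha\notin B$ and $\alpha\leq\bigvee_j\beta_j\iff\alpha\leq\beta_j$ for some $j$. Both hold because distinct state descriptions are disjoint atoms of the finite Boolean algebra generated by $X_1,\ldots,X_n$. One caveat: the members of $W$ are only those state descriptions that $\Gamma$ proves nonempty, so every "for all $\alpha$" above must range over $W$ rather than over all of $\SD$. This is exactly what Lemma~\ref{lemma-in-first-completeness} supplies, because it lets $\Gamma$ treat $\bigvee W$ as $\top$ under $\unaryforall$.

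Once coherence is established, Theorem~\ref{theorem-measurable-coherent} gives measurability.
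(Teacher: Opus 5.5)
Your proposal is correct and follows the paper's proof. You instantiate the coherence axiom at $u_i=u_{A_i}$, derive its antecedent from ({\sc c1}) and ({\sc c2}) via Lemma~\ref{lastlemma} and Lemma~\ref{lemma-in-first-completeness}, read ({\sc c3}) and ({\sc c4}) off the consequent, and then invoke Theorem~\ref{theorem-measurable-coherent}. Your atom-by-atom use of part~(\ref{part-all}) is the syntactic counterpart of the paper's appeal to part~(\ref{part-last}) and Lemma~\ref{lemma-logic-1}. It also matches the axiom's antecedent term more exactly than the paper's displayed $t_{B_1,\ldots,B_k}$, and it treats the odd-$k$ case explicitly, which the paper leaves implicit.
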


\begin{proof} 
Let $B_1,\ldots, B_k$ be a sequence of subsets of $W$, and assume that ({\sc c1}) and  ({\sc c2}) hold for our sequence $B_1,\ldots, B_k$.
Since ({\sc c1}) holds, we have $\Gamma\proves\bigwedge_i \unarymostorhalf u_i$.
In this, we are using Lemma~\ref{lastlemma}, parts (\ref{part-lastlemma-5-5}) and (\ref{part-lastlemma-6}).

For each $i$, let $u_i$ be the term  $u_{B_i}$.
Notice that
\begin{equation}\label{in-out}
\begin{array}{lcll} 
W  & = & \bigcup_{S\subseteq[k], |S|\geq \frac{k}{2}}  \bigcap_{i\notin S} B_i^c  & \mbox{by ({\sc c2}) for $B_1, \ldots, B_k$}\\
& = & \bigcup_{S\subseteq[k], |S|\geq \frac{k}{2}}   \bigcap_{i\notin S} \sems{\nott u_i}   
& \mbox{by Lemma~\ref{lastlemma}, part (\ref{part-lastlemma-5})}\\
& = & \sems{\bigvee_{S\subseteq[k], |S|\geq \frac{k}{2}}   \bigwedge_{i\notin S}  \nott u_i} 
& \mbox{by the semantics of the Boolean connectives}\\
\end{array}
\end{equation}
By Lemma~\ref{lemma-in-first-completeness}(\ref{part-last}), $\Gamma\proves \unaryforall(t_{B_1,\ldots,B_k})$, where 
\[
t_{B_1,\ldots, B_k} = \displaystyle{\bigvee_{S\subseteq[k], |S|\leq \frac{k}{2}} (\bigwedge_{i\in S}u_i \andd   \bigwedge_{i\notin S}\nott u_i )}.
\]
Recall that our logic has a coherence axiom 
determined by the sequence of terms $u_1$, $\ldots$, $u_k$. 
We have just checked that the two antecedent conjuncts are provable from $\Gamma$, and using this axiom, the two consequent conjuncts are also provable.
The first of these is that $\bigwedge_i  \unaryhalf u_i$.  By Lemma~\ref{lastlemma}(\ref{part-lastlemma-6}), $B_i\in \H$ for all $i$.
This is ({\sc c3}).
The second conjunct is $\unaryforall t^*_{B_1,\ldots,B_k}$, where 
\[ t^*_{B_1,\ldots,B_k} =
\bigvee_{S\subseteq[k], |S| = \frac{k}{2}} (\bigwedge_{i\in S}u_i \andd   \bigwedge_{i\notin S}\nott u_i ),
\]
By Lemma~\ref{lemma-in-first-completeness}(\ref{part-last}),   $\sems{t^*_{B_1,\ldots,B_k}} = W$.
By calculations much like those which we saw in (\ref{in-out}), we see that 
\[ W = \bigcup_{S\subseteq[k], |S|= \frac{k}{2}} (\bigcap_{i\in S}B_i \andd   \bigcap_{i\notin S} B_i^c ).\]
This tells us that each $\alpha\in W$ belongs to exactly half of the terms in the original sequence $B_1,\ldots,B_k$.
That is, ({\sc c4}) holds.
\end{proof}

We continue with the proof of completeness.   Recall that we started with a maximal consistent set $\Gamma$ in our logic,
and we need to build a model of this set.  At this point, we have a model $\Model$.

\begin{lem} $\Model\models\Gamma$.
\end{lem}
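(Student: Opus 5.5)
We argue that $\Model\models\phi$ iff $\phi\in\Gamma$ for every sentence $\phi$ built from the atomic terms $X_1,\ldots,X_n$. Since $\Gamma$ is maximal consistent, it is closed under provability and contains exactly one of $\psi$ and $\nott\psi$ for each such $\psi$. The proof is an induction on sentences. The Boolean cases $\nott\phi$ and $\phi\andd\psi$ are immediate from maximal consistency together with the classical semantics of $\nott$ and $\andd$. So the work lies entirely in the atomic sentences $\unaryforall t$ and $\unarymost t$.

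For $\unaryforall t$, Lemma~\ref{lemma-in-first-completeness}(\ref{part-last}) gives $\Gamma\proves\unaryforall t$ iff $\sems{t}=W$. The latter is exactly $\Model\models\unaryforall t$, so this case is done.

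For $\unarymost t$, set $B=\sems{t}$. By Lemma~\ref{lemma-logic-1}, $B=\set{\alpha\in W:\alpha\leq t}$. The key intermediate claim is $\Gamma\proves\unaryforall(t\iiff u_B)$. We obtain it from Lemma~\ref{lemma-in-first-completeness}(\ref{part-last}): by Lemma~\ref{lastlemma}(\ref{part-lastlemma-4}), $\sems{u_B}=B=\sems{t}$, so $\sems{t\iiff u_B}=W$.

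Given this claim, Proposition~\ref{prop:basic-proof-system}(\ref{basicthree-proof-system}), applied in both directions, yields $\Gamma\proves\unarymost t$ iff $\Gamma\proves\unarymost u_B$. For the two directions we use the two implications $t\iif u_B$ and $u_B\iif t$, each extracted from $\unaryforall(t\iiff u_B)$ via distributivity and a Boolean tautology. By Lemma~\ref{lastlemma}(\ref{part-lastlemma-5-5}), $\Gamma\proves\unarymost u_B$ iff $B\in\M$, and $B\in\M$ is exactly $\Model\models\unarymost t$. This closes the atomic case.

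The main obstacle I anticipate is the bookkeeping in the $\unarymost$ case. The definition of $\M$ in (\ref{preliminary space in completeness proof}) is phrased via $\bigvee A$ for $A\subseteq W$, and we need the identification of $\M$-membership with provability of $\unarymost u_B$ to be insensitive to the choice of term denoting $B$. Proposition~\ref{prop:basic-proof-system}(\ref{basicthree-proof-system}), combined with the $\unaryforall$-equivalence, handles exactly this. I would also check that $\Model$ is a legitimate measurable model, i.e.\ $W\neq\varnothing$. This follows from the axiom $\unaryexists\top$ and Lemma~\ref{lemma-in-first-completeness}(\ref{part-some}). Measurability itself is supplied by Lemma~\ref{lemma: coherence inside completeness theorem}. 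Finally, since every $\phi\in\Gamma$ satisfies $\Model\models\phi$, we conclude $\Model\models\Gamma$.
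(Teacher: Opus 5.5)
Your proposal is correct and follows the paper's proof: an induction on sentences in which only the atomic cases need work, with $\unaryforall t$ handled by Lemma~\ref{lemma-in-first-completeness}(\ref{part-last}), and $\unarymost t$ handled by moving between $t$ and $\bigvee\sems{t}$ with Proposition~\ref{prop:basic-proof-system}(\ref{basicthree-proof-system}) and then applying the definition of $\M$. The one local difference is that you get $\Gamma\proves\unaryforall(t\iiff u_B)$ directly from Lemma~\ref{lemma-in-first-completeness}(\ref{part-last}), whereas the paper goes through $\unaryforall(\bigvee W)$, Proposition~\ref{prop:basic-proof-system}(\ref{basicthreehalf-proof-system}), and the Boolean equivalence of $t\andd\bigvee W$ with $\bigvee\sems{t}$.
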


\begin{proof}
The following are equivalent for all terms $t$: 
\begin{enumerate}[label={(\alph*)}] 
\item  $\unaryforall t$ belongs to $\Gamma$
\item $\Gamma\proves \unaryforall t$
\item $\sems{t} = W$    
\end{enumerate}
Indeed (a)$\Longleftrightarrow$(b) is by the maximality of $\Gamma$, (b)$\Longleftrightarrow$(c) is Lemma~\ref{lemma-in-first-completeness}(\ref{part-last}), and (c)$\Longleftrightarrow$(d) is by the general semantics of $\unaryforall t$ sentences in models of our logic.

We show a similar chain of  equivalences for the sentences $\unarymost t$:
\begin{enumerate}[label={(\alph*)}] 
\item $\unarymost t$ belongs to $\Gamma$
    \item $\unarymost(\bigvee\sems{t})$ belongs to $\Gamma$
    \item $\sems{t} \in \M$   
\item $\Model\models \unarymost t$.
\end{enumerate}
We begin with the  equivalence  (a)$\Longleftrightarrow$(b).
Assume that $\unarymost t\in \Gamma$.  We know that $\unaryforall(\bigvee W)\in \Gamma$ as well,
by Lemma~\ref{lemma-in-first-completeness}(\ref{AW}).
Using Lemma~\ref{prop:basic-proof-system}(\ref{basicthreehalf-proof-system}), 
the sentence $\unarymost(t\andd \bigvee W)$ belongs to $\Gamma$.
The terms $t\andd\bigvee W$ and $\bigvee\sems{t}$ are 
equivalent in Boolean algebra because they dominate the same set of atoms, namely 
$\set{\alpha\in W : \alpha \leq t}$.   Thus,
using  Lemma~\ref{prop:basic-proof-system}(\ref{basicthree}),
$\unarymost(\bigvee\sems{t})$ belongs to $\Gamma$.
In the other direction, if $\unarymost(\bigvee\sems{t})$ belongs to $\Gamma$,
then since $\bigvee\sems{t} \iif t$ is a Boolean tautology,
we
use similar reasoning to see 
that $\unarymost t\in \Gamma$. 

 The equivalence  (b)$\Longleftrightarrow$(c) is by the definition of $\M$.

The equivalence  (c)$\Longleftrightarrow$(f) is by the semantics of the sentence $\unarymost t$
in $\Model$.

\bigskip

At this point we know that for atomic sentences $\phi$ (those of the form $\unaryforall t$ or $\unarymost t$),
$\phi \in \Gamma$ iff $\Model\models \phi$.
Then an easy induction on the syntax proves the same fact for all sentences $\phi$.
We use again the maximal consistency of $\Gamma$.
It now follows that $\Model\models\phi$ for all $\phi\in \Gamma$.
\end{proof}

So $\Model$ is a measurable model of the maximal consistent set $\Gamma$ fixed at the start of this proof.
This concludes the proof of our completeness theorem.

We discuss Example~\ref{ex-124}, and we show that
every sequence $\Abar = A_1, A_2, \ldots, A_n$  of sets from this space with 
({\sc c1}) and ({\sc c2}) has ({\sc c4}). 

This is equivalent to the following assertion: 
Let $\Abar$ be a sequence of sets in $\M$ (allowing repeated entries).
If $p_i \leq \frac{1}{2}$ for all $i$, then $p_i =  \frac{1}{2}$ for all $i$.

Write $p_i$ for the uniform probability that a randomly chosen set in this sequence contains $i$:
\[p_i =
\displaystyle{\frac{|\set{j: A_j \mbox{ contains $i$}}|}{n}}   .
\]
In other words, we mean $\frac{1}{n} \sum_j A_j(i)$, where 
the summation sign in
this notation is for the indicator functions.
(Note that we have suppressed the sequence $\Abar$ from the notation.)

Fix a sequence $\Abar$ of sets in $\M$,
and assume that $p_i \leq \frac{1}{2}$ for all $i$.
So ({\sc c1}) and ({\sc c2}) hold.
Let $q$ be the proportion of entries in the sequence which are 
the set $\set{2,4,6}$:
\begin{equation}\label{eq-q}
q =\frac{|j: A_j = \set{2,4,6}|}{n}.
\end{equation}
 We divide into three cases depending on the value of $q$.
In the first two cases, we obtain contradictions.
In the last, we show the desired statement that $p_i = \frac{1}{2}$ for all $i$.

\paragraph{Case 1:}  $q > .25$.   
Split the sequence $\Abar$ into the sets which are $\set{2,4,6}$ and the ones which are not.
Each of the ones which are not $\set{2,4,6}$
contains exactly one even 
number, and so one of the three even numbers occurs in at least $\sfrac{1}{3}$ of those sets.
Without loss of generality, it is $2$.
So \[ p_2 \geq q + \sfrac{1}{3}(1-q) = \sfrac{1}{3} + \sfrac{2}{3}q >   \sfrac{1}{3} +  \sfrac{1}{6} = \sfrac{1}{2}.\]
In this case, 
 we contradict the 
the assumption
(``$p_i \leq \sfrac{1}{2}$ for all $i$'').

\paragraph{Case 2:} $q < .25$.  
Now let 
\begin{equation}\label{ponethree}
p_{1,3} =
\displaystyle{\frac{|\set{j: A_j \mbox{ contains $1$ and $3$}}|}{n}}   
\end{equation}
Define $p_{1,5}$ and $p_{3,5}$ similarly.
Then $p_{1,3} + p_{1,5} + p_{3,5} = 1 - q$.   So two of these numbers must sum to at least $\frac{2}{3}(1-q)$. 
Without loss of generality, it is the first two.
Then 
\[
p_1 = p_{1,3} + p_{1,5}  \geq \sfrac{2}{3}(1-q) > \sfrac{2}{3}(.75) = .5.
\]
So in this case also,  we contradict the 
the assumption.

\medskip

\paragraph{Case 3:} $q = .25$. 
Define $p_{1,3}$, $p_{1,5}$, and $p_{3,5}$ as in (\ref{ponethree}).  
In this case,  $p_{1,3} + p_{1,5} + p_{3,5} = .75$.
Without loss of generality, $p_{1,3}\geq p_{1,5} \geq p_{3,5}$.
And as in Case 2, the largest two of these, $p_{1,3}$ and $p_{1,5}$,
 sum to at least $\sfrac{2}{3}(.75) = .5$.
We claim that    $ p_{1,3} + p_{1,5}  = .5$.  For if  $ p_{1,3} + p_{1,5} > .5$, then $p_1 =  p_{1,3} + p_{1,5} > .5$;
contradicting our overall assumption in this result. 
It follows that $p_{3, 5} = .75 - .5  = .25$.  And since $ p_{1,3} + p_{1,5}  = .5$ with $p_{1,3}\geq p_{1,5}$, we see that 
$p_{1,3} = p_{1,5} = .25$.  
From these, it follows that $p_1 = p_3 = p_5 = .5$.

Without loss of generality $p_2 \geq p_4 \geq p_6$.  The argument in Case 1 shows that $p_2 \geq \sfrac{1}{2}$.
Since $p_i \leq \frac{1}{2}$ for all $i$, $p_2 = \frac{1}{2}$.    
Define $p_{2,4}$, $p_{2,6}$, and $p_{4,6}$ as in (\ref{ponethree}).   Note that $p_{2,4} = p_{2,6} = p_{4,6} = q$, 
since every $B\in \M$ which contains two members of $\set{2,4,6}$ contains all three.
Note also that every $B\in\M$ contains at least one member of $\set{2,4,6}$.  
By inclusion-exclusion:

\[ 1 =  p_2 + p_4 + p_6 - p_{2,4} - p_{2,6} - p_{4,6} +q = \sfrac{1}{2} +  p_4 + p_6  -\sfrac{1}{4} -\sfrac{1}{4} -\sfrac{1}{4} + \sfrac{1}{4}.
\]
Thus $p_4 + p_6 = 1$. 
 Since $p_4$ and $p_6$ are $\leq \frac{1}{2}$, we  have $p_4 = p_6 = 
\sfrac{1}{2}$.   So in this case, we have shown that $p_i = \sfrac{1}{2}$ for all $i$.
\qedexa

\subsection{More on the indices of incoherent frames}

\begin{defn}
A finite frame $(W,\M)$ is \term{standard} if there is an even number $2n$ such that (a) $W = [2n] = \set{1, \ldots, 2n}$, (b)
$\M$ contains all sets of size $> n$, and (c) every set in $\M$ has size $\geq n$.
In a standard space, we introduce the notation $\M_n$ for $\set{A\in \M : |A| =n}$.
\end{defn}

\begin{exa}
Let $W = [6]$, so that
$2n = 6$.  Let $\M$ be the sets of size $\geq 4$ and the sets of size $n = 3$ that sum to an even number.  Then $(W,\M)$ is standard.
We can also change $\M$ to change ``even'' to ``odd'' here.
\qedexa
\end{exa}

\begin{prop} \label{propH}
Let  $([2n],\M)$ be standard.  Then $\H \subseteq \M_n$.
\end{prop}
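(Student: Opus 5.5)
The plan is to compute the cardinality of an arbitrary $B\in\H$ directly from clauses (a)--(c) in the definition of a standard frame, and then read off membership in the size-$n$ level that the notation $\M_n$ singles out. No appeal to coherence or to Theorem~\ref{theorem-measurable-coherent} should be needed; the argument uses only the two size conditions on $\M$ and the fact that $|B|+|B^c| = 2n$.

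Fix $B\in\H$, so that $B\notin\M$ and $B^c\notin\M$.

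\textbf{Upper bound.} By clause (b), every set of size $>n$ belongs to $\M$. Since $B\notin\M$, this gives $|B|\le n$.

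\textbf{Lower bound.} Apply the same clause to the complement. Since $B^c\notin\M$, we get $|B^c|\le n$. Because $W=[2n]$, it follows that $|B| = 2n-|B^c| \ge n$.

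Combining the two bounds gives $|B| = n$. By the symmetric argument $|B^c| = n$ as well, which is consistent with $\H$ being closed under complements. So every member of $\H$ lies on the middle level $\{A\subseteq[2n] : |A|=n\}$, which is exactly the level on which $\M_n=\set{A\in\M : |A|=n}$ lives. Clause (c) of standardness is what makes this level the only place where $\M$ can differ from the strict-majority family: $\M$ contains every set above size $n$ and nothing below size $n$. Hence the size-$n$ sets fall into two kinds. Those in $\M_n$ are the ``yes'' half-sized sets. Those outside $\M$ are either complements of members of $\M_n$, or sets $A$ with $A,A^c\notin\M$, that is, members of $\H$.

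\textbf{Main obstacle.} The only delicate point is bookkeeping rather than mathematics. By its definition $\H$ is disjoint from $\M$, so the inclusion $\H\subseteq\M_n$ can only be obtained through the size-$n$ layer that $\M_n$ indexes. I would make this explicit by writing the middle layer as the disjoint union of $\M_n$, $\set{A^c : A\in\M_n}$, and $\H$.

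I expect this to be the only subtle step, and I would check it against the two examples. In the oddly-even frame, the middle layer is exhausted by $\M_n$ and its complements, so $\H=\varnothing$. In the frame of Example~\ref{exa:strict_majorities} on $[2n]$, $\M_n=\varnothing$ and $\H$ is the entire middle layer.
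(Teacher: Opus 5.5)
Your core argument is the paper's proof: $B\notin\M$ gives $|B|\le n$ by clause (b), $B^c\notin\M$ gives $|B^c|\le n$, and $|B|+|B^c|=2n$ forces $|B|=n$. Clause (c) is never actually used. Like the paper's proof, this establishes that $\H$ is contained in the middle layer $\{A\subseteq[2n] : |A|=n\}$. It does not establish the literal inclusion $\H\subseteq\M_n$. Since $\M_n\subseteq\M$ and $\H\cap\M=\varnothing$, the literal inclusion would force $\H=\varnothing$. Your own strict-majority check, where $\M_n=\varnothing$ and $\H$ is the whole middle layer, is a counterexample to it. Say this outright rather than calling it bookkeeping: what is proved, and what is used later, is that every member of $\H$ has size $n$.

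Also drop the claimed \emph{disjoint} decomposition of the middle layer. The pieces $\M_n$, $\{A^c : A\in\M_n\}$ and $\H$ do cover it, and $\H$ is disjoint from the other two. But the first two overlap whenever $\M_n$ contains a set together with its complement, which standardness allows. For example, with $\M=\{A : |A|\ge n\}$ they coincide. Their disjointness is exactly condition (a) in the paper's definition of a maximal frame, so the decomposition is only available there.
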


\begin{proof}
    Suppose that $A$ and $A^c$ are not in $\M$.  Then their cardinalities are $\leq n$.   But  we always have $|A^c| = 2n - |A|$, and so in this case we have $|A| = n  = |A^c|$.
\end{proof}

\begin{prop} \label{prop-phis-standard}
 Let  $([2n], \M)$ be standard.
\begin{enumerate}
\item Then $\phi_2(A_1,A_2)$ iff  the following holds:
\begin{quotation}  ($*$) If $A_1, A_2\in \M\cup \H$ and $A_1\cap A_2 = \varnothing$, then  $A_1, A_2\in \H$  and $A_1 \cup A_2 = W$.
\end{quotation}
\item ``For all $A_1$, $A_2$, $\phi_2(A_1,A_2)$'' if and only if 
\begin{quotation} ($**$)
$\M_n$ contains no set and also its complement.
\end{quotation}

\end{enumerate}
\end{prop}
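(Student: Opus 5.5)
Part (1) is a direct unwinding of $\phi_2$. For a two-term sequence $A_1,A_2$, condition ({\sc c2}) says $A_1(w)+A_2(w)\leq 1$ for every $w$, which is exactly $A_1\cap A_2=\varnothing$. Condition ({\sc c4}) says $A_1(w)+A_2(w)=1$ for every $w$. Given disjointness, this is exactly $A_1\cup A_2=W$, i.e.\ $A_2=A_1^c$. Conditions ({\sc c1}) and ({\sc c3}) carry over verbatim. So $\phi_2(A_1,A_2)$ and ($*$) are literally the same statement, and nothing about standardness is needed here.

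For part (2), I would prove both directions using (1) and \cref{propH}.

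Suppose ($**$) fails. Then some $A$ with $A,A^c\in\M_n$ exists. The pair $A,A^c$ is disjoint and lies in $\M\cup\H$. However, $A\in\M$, so $A\notin\H$, and ($*$) fails.

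Conversely, assume ($**$) and let $A_1,A_2\in\M\cup\H$ be disjoint. Since the frame is standard, every set in $\M$ has size $\geq n$. By \cref{propH}, every set in $\H$ lies in $\M_n$ and so has size $n$. Hence $|A_1|,|A_2|\geq n$. Disjointness in $[2n]$ then forces $|A_1|=|A_2|=n$ and $A_2=A_1^c$, which gives $A_1\cup A_2=W$.

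It remains to show $A_1,A_2\in\H$. If, say, $A_1\in\M$, then $A_1\in\M_n$. For $A_2=A_1^c$ there are two possibilities. If $A_2\in\M$, then $A_2\in\M_n$ and ($**$) is violated. If $A_2\in\H$, then by definition of $\H$ its complement $A_1$ is not in $\M$, a contradiction. Thus neither set is in $\M$, both lie in $\H$, and ($*$) holds.

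The only delicate step is this last case analysis, where $A_2\in\H$ must be excluded using the definition of $\H$ rather than ($**$). Everything else is a size count in $[2n]$.
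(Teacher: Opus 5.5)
Your proof is correct, and part (2) is essentially the paper's argument. The forward direction applies ($*$) to a pair $B,B^c\in\M_n$. The converse forces $|A_1|=|A_2|=n$ and $A_2=A_1^c$ by a size count, then excludes $A_1\in\M$ with the same case split the paper uses: if $A_2\in\M$ then ($**$) fails, and if $A_2\in\H$ then $A_1=A_2^c\notin\M$.

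The genuine difference is in part (1). You observe that for a two-term sequence ({\sc c2}) is literally $A_1\cap A_2=\varnothing$, and under that hypothesis ({\sc c4}) is literally $A_1\cup A_2=W$. So $\phi_2(A_1,A_2)$ and ($*$) are the same sentence in every frame. The paper instead argues (1) through cardinalities, invoking standardness and Proposition~\ref{propH} to get $|A_1|=|A_2|=n$ and hence $A_1\cup A_2=W$. Your unwinding is shorter and shows that (1) needs no hypothesis on the frame. It also confines the size count to part (2), the only place it is actually needed.

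One precision point concerns Proposition~\ref{propH}. It is worded as $\H\subseteq\M_n$, which cannot hold literally unless $\H=\varnothing$, because members of $\M_n$ lie in $\M$ and members of $\H$ do not. What its proof gives, and all you use, is that every member of $\H$ has exactly $n$ elements. Cite it in that form rather than saying that sets in $\H$ ``lie in $\M_n$''.
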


\begin{proof}
Assume (1), and let  $A_1, A_2$ be disjoint sets in $\M\cup \H$.  By Proposition~\ref{propH}, $A_1$ and $A_2$ must belong to $\H$.
Again by Proposition~\ref{propH}, their union must have size $2n$.  Thus, it  $A_1 \cup A_2 = W$.
This is the condition in (1).

In the other direction, assume that ($*$) holds.  Let $A_1,A_2\in \M\cup\H$ so that  each has size $\geq n$, and assume that every $w\in W$ belongs to at most one of these sets.
We must show that $A_1,A_2\in \H$ and that every  $w\in W$ belongs to exactly one of these sets.
If $A_1\cap A_2 = \varnothing$, then ($*$) implies that
 $|A_1| + |A_2| \leq |W|$.  In this case,  $|A_i| = n$ for $i = 1, 2$.  By disjointness, $A_1 \cup A_2 = W$.
So every  $w\in W$ belongs to exactly one $A_i$.

We turn to (2).  Assume first that for all $A_1$, $A_2$, $\phi_2(A_1,A_2)$.
Suppose that $\M_n$ contains both $B$ and $B^c$.   In particular, $B, B^c\in \M$.  By ($*$), they belong to $\H$, 
and this is a contradiction.
Second, assume ($**$).   We verify that  $\phi_2(A_1,A_2)$ for all $A_1, A_2$.
Fix $A_1, A_2$;  we show ($*$) for them. 
Suppose that $A_1, A_2\in \M\cup \H$ and $A_1\cap A_2 = \varnothing$.
By standardness, we have $|A_1| = n = |A_2|$.     It follows that 
$A_1\cup A_2 =W$.
Indeed, we have $A_2 = A_1^c$.
If $A_1 \in\M_n$, then by definition of $\H$, $A_2 \notin\H$. But then $A_2\in \M_n$ as well.  
So $\M_n$ contains some set and its complement, and we have a contradiction.   Thus, $A_1\notin \M_n$.
Similarly, $A_2\notin \M_n$.  And since $A_1, A_2\notin \M_n$, we see that they belong to $\H$.
This concludes the proof of ($*$) for $A_1, A_2$.
\end{proof}

The main point of Proposition~\ref{prop-phis-standard} 
is in the last part.  We need only verify a condition for all $A_1,\ldots,  A_4\in \M_n$  rather then 
for all $A_1,\ldots,  A_4\in \M$.    In the spaces of interest, $|\M_n|$ is much smaller than $|\M|$.

\begin{prop}\label{no_odd_index}
 Let $(W, \M)$ be a standard space  with $|W| = 2n$.  For an odd number $2k+1$, 
 $\phi_{2k+1}(A_1,\ldots, A_{2k+1})$ for all sequences $A_1,\ldots, A_{2k+1}$.
 \end{prop}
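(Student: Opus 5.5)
Fix a standard space $([2n],\M)$ and a sequence $A_1,\ldots,A_{2k+1}$ satisfying ({\sc c1}) and ({\sc c2}). The plan is to show by a counting argument that ({\sc c2}) can never hold in this situation, so the antecedent of $\phi_{2k+1}$ fails and $\phi_{2k+1}$ is vacuously true.

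The first step is to bound the total incidence from below. By standardness every set in $\M$ has size $\geq n$. By \cref{propH} every set in $\H$ has size exactly $n$. So each $A_i\in\M\cup\H$ satisfies $|A_i|\geq n$, and double counting incidences gives
\[
\sum_{w\in W}\sum_{i=1}^{2k+1} A_i(w)\;=\;\sum_{i=1}^{2k+1}|A_i|\;\geq\;(2k+1)n .
\]

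The second step is the upper bound, and this is where oddness matters. Condition ({\sc c2}) says that for each $w$ the integer $\sum_i A_i(w)$ is at most $\frac{2k+1}{2}=k+\frac12$. Since this sum is an integer, it is in fact at most $k$. Summing over the $2n$ points gives
\[
\sum_{w}\sum_i A_i(w)\leq 2nk .
\]
Combining the two bounds yields $(2k+1)n\leq 2nk$, that is, $n\leq 0$. This contradicts $n\geq 1$, which holds because $W=[2n]$ is nonempty. Hence no sequence of odd length satisfies ({\sc c1}) and ({\sc c2}), and $\phi_{2k+1}$ holds for every such sequence.

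The only point needing care is the integrality step: we must use that ({\sc c2}) is a pointwise bound on an integer-valued count, so that it rounds down to $k$. The size bound from standardness together with \cref{propH} handles the rest.
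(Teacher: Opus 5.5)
Your proof is correct and uses the same double-counting argument as the paper. The incidence total is at least $(2k+1)n$ because every set in $\M\cup\H$ has at least $n$ elements, and at most $2nk$ because ({\sc c2}) forces each integer count down to $k$, so the antecedent of $\phi_{2k+1}$ can never hold. You also spell out two points the paper leaves tacit: the integrality rounding, and the size bound for members of $\H$, which you rightly take from the proof of \cref{propH} (sets in $\H$ have exactly $n$ elements).
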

 
\begin{proof}
     Consider the disjoint union $\bigsqcup A_i$.  Since each $A_i$ contains at least $n$ elements, its size is $\geq n(2k+1)$.
 Since each $i\in[2n]$ belongs to at most half, it belongs to at most $k$ of our sets.  So the size of the disjoint union is $\leq 2n k$.
 This is a contradiction.  So  $\phi_{2k+1}(A_1,\ldots, A_{2k+1})$ holds trivially: it is a conditional whose antecedent is false.
\end{proof}

\begin{defn}
A standard frame is \emph{maximal} if (a)  for all $A$, $\M_n$ does not contain both $A$ and $A^c$; and (b) $\H = \varnothing$.
\end{defn}

Maximality has several equivalent formulations.

\begin{lem}\label{lemma-maximal}
Let $([2n], \M)$ be standard, and assume that for all $A$, $\M_n$ does not contain both $A$ and $A^c$.
The following are equivalent:
\begin{enumerate}
\item  $\H = \varnothing$.
\item  $|\M| = 2^{2n-1}$.
\item $|\M_n| = \frac{1}{2}{{2n}\choose{n}}$.
\end{enumerate}
\end{lem}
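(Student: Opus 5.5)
The plan is to exploit the pairing of each subset $A\subseteq[2n]$ with its complement $A^c$, and to count how many members of each complementary pair lie in $\M$, treating the pairs with $|A|=n$ separately from the rest. Throughout, standardness gives: every set of size $>n$ is in $\M$, every set of size $<n$ is not in $\M$, and by \cref{propH}, $\H\subseteq\M_n$ (more precisely, $\H$ consists of sets of size exactly $n$ not in $\M$ whose complements are not in $\M$).

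\emph{Step 1: pairs with $|A|\neq n$.} If $|A|>n$ then $A\in\M$ and $A^c\notin\M$, since $|A^c|<n$. So exactly one member of each such pair lies in $\M$. There are $(2^{2n}-\binom{2n}{n})/2$ such pairs, contributing exactly $2^{2n-1}-\frac12\binom{2n}{n}$ elements to $\M$. Hence
\[
|\M| \;=\; 2^{2n-1}-\tfrac12\tbinom{2n}{n}+|\M_n|,
\]
and (2)$\iff$(3) is immediate.

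\emph{Step 2: pairs with $|A|=n$.} There are $\frac12\binom{2n}{n}$ complementary pairs $\{A,A^c\}$ with both sets of size $n$. The standing hypothesis that $\M_n$ never contains both $A$ and $A^c$ means each such pair meets $\M_n$ in at most one set. Therefore $|\M_n|\le\frac12\binom{2n}{n}$, with equality iff every such pair contains exactly one member of $\M$. A pair containing no member of $\M$ is exactly a pair with $A,A^c\in\H$. Moreover, every element of $\H$ arises this way, because $\H\subseteq\M_n$'s size class: by \cref{propH} and its proof, any $A\in\H$ has $|A|=n$. Hence $\H=\varnothing$ iff no size-$n$ pair misses $\M$ iff $|\M_n|=\frac12\binom{2n}{n}$, which gives (1)$\iff$(3).

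There is no real obstacle here. The only point needing care is that \cref{propH} literally says $\H\subseteq\M_n$, which reads oddly since members of $\H$ are not in $\M$. What we actually use is the size statement from its proof: $A\in\H$ implies $|A|=n=|A^c|$. I would cite the proof of \cref{propH} for exactly this fact.
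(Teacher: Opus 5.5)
Your proof is correct and uses essentially the paper's argument, pairing each set with its complement. The paper gets (1)$\Rightarrow$(2),(3) from the bijection $A\mapsto A^c$ between $\M_n$ and the non-members of size $n$, extended to all of $\M$; it gets (3)$\Rightarrow$(1) from the fact that half of the $n$-sets with no complementary pair forces exactly one member of each pair. Your identity $|\M| = 2^{2n-1}-\frac{1}{2}\binom{2n}{n}+|\M_n|$ and your reading of \cref{propH} as the size statement $|A|=n$ for $A\in\H$ make explicit what the paper's terser proof uses implicitly.
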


\begin{proof}
In this proof, we write 
$\M^c$ for the set of subsets of $[2n]$ which are not in $\M$.  Similarly, we write
$\M^c_n$ for the set of $n$-subsets of $[2n]$ which are not in $\M_n$.

Assuming (1), we see have a bijection $c: \M_n \to \M^c_n$ given by $c(A)=A^c$.  This extends to 
a bijection $c: \M\to \M^c$.  So (2) and (3) hold.  It is easy to see by counting that (2) and (3) are equivalent.

To complete the proof, we check that (3)$\Rightarrow$(1).  By (3), $\M_n$ contains exactly half of the $n$-subsets of $[2n]$.
And by assumption, it does not contain any set and its complement.  So $\M^c_n$ must have the same property: 
 it does not contain any set and its complement.  In view of standardness, this amounts
 to saying that  $\H = \varnothing$.
\end{proof}

\begin{exa} 
\cref{exa:majority_tie_breaker} (majority voting with a tie-breaker)
gives a standard frame which is maximal and measurable.
The space in \cref{exa:sixpoints}
is maximal but not measurable.
We do not know a characterization of the spaces which are standard, maximal, and measurable.
\qedexa
\end{exa}

\begin{defn} A family $\F$ of subsets of $[2n]$ is \emph{balanced} if every $i\in [2n]$ belongs to the same number of elements of $\F$.
We are usually interested in the situation when $\F$ is a set of $n$-subsets of $[2n]$.
In order that $\F$ be balanced, say with each $i\in [2n]$ in $k$ elements of $\F$, 
we must have $|\F| = 2k$.
\end{defn}

\begin{exa} \label{tensets} With $n = 10$, the family $\F$ below  of six sets is balanced:
\[ \set{3, 5, 6, 9, 10},  \set{1, 2, 3, 7, 10},  \set{1, 2, 4, 6, 9},  \set{4, 7, 8, 9, 10},  \set{2, 5, 6, 7, 8},  \set{1, 3, 4, 5, 8}\]
Every $i\leq 10$ belongs to exactly three of these sets.
\qedexa
\end{exa}

\begin{prop}
\label{proposition-maximal-2}
 Let  $\structure$ be maximal (and hence also standard).   The following are equivalent:
\begin{enumerate}
\item
 For all $A_1,\ldots, A_4\in \M$,
$\phi_4(A_1,A_2, A_3, A_4)$.

\item There are no balanced subsets of $\M_n$ of size $4$.
\end{enumerate}

\end{prop}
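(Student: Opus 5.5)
The plan is to exploit the fact that in a maximal frame $\H=\varnothing$, so that $\M\cup\H=\M$ and the consequent ({\sc c3}) of $\phi_4$ can never hold. Hence $\phi_4(A_1,\ldots,A_4)$ holds exactly when its antecedent fails. Both directions then reduce to a statement about sequences of four sets from $\M$ in which every point lies in at most two of the sets.

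\emph{(2)$\Rightarrow$(1), contrapositive.} Suppose $\phi_4(A_1,\ldots,A_4)$ fails for some $A_1,\ldots,A_4\in\M$. Then ({\sc c1}) and ({\sc c2}) hold, so every $w\in[2n]$ lies in at most $2$ of the $A_i$. I will use the double counting from the proof of Proposition~\ref{no_odd_index}. By standardness each $|A_i|\geq n$, so $\sum_i|A_i|\geq 4n$. On the other hand, counting by points gives $\sum_i|A_i|\leq 2\cdot 2n=4n$. Hence equality holds throughout: every $A_i\in\M_n$, and every point lies in exactly two of the $A_i$. So the sequence is balanced, counting multiplicity.

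The one real step is to show that the four entries are pairwise distinct, so that they form a genuine balanced \emph{subset} of $\M_n$ of size $4$. I will argue by cases.
\begin{itemize}
\item If some set occurs three or more times, its points lie in at least three of the sets. This is impossible, since $n\geq 1$.
\item If a set $A$ occurs exactly twice, every point of $A$ is already covered twice. So the other two entries $B$ and $C$ are disjoint from $A$, that is, $B,C\subseteq A^c$.
\item In that case, since $|B|=|C|=n=|A^c|$, we get $B=C=A^c$. Then $\M_n$ contains both $A$ and $A^c$, contradicting clause (a) of maximality.
\end{itemize}
Thus $\{A_1,\ldots,A_4\}$ is a balanced subset of $\M_n$ of size $4$, which contradicts (2).

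\emph{(1)$\Rightarrow$(2), contrapositive.} Suppose $\F\subseteq\M_n$ is balanced with $|\F|=4$. By the remark after the definition of balanced families, each point then lies in exactly $k=2$ members of $\F$. Listing $\F$ as $A_1,\ldots,A_4$, we have $A_i\in\M$ and $\sum_iA_i=2=\frac{4}{2}$, so ({\sc c1}) and ({\sc c2}) hold. But $\H=\varnothing$, so ({\sc c3}) fails, and $\phi_4(A_1,\ldots,A_4)$ fails, contradicting (1).

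The only nontrivial point is the distinctness argument in the first direction. It is exactly where clause (a) of maximality is needed; everything else follows from standardness and $\H=\varnothing$.
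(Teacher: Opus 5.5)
Your proof is correct and follows the paper's argument. For (1)$\Rightarrow$(2), a balanced $4$-subset of $\M_n$ satisfies ({\sc c1}) and ({\sc c2}) but fails ({\sc c3}) since $\H=\varnothing$; for (2)$\Rightarrow$(1), double counting forces all four sets into $\M_n$ with every point covered exactly twice, and repeated entries are ruled out by clause (a) of maximality. Your explicit double count and your case split on multiplicities spell out steps the paper's written proof only sketches.
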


\begin{proof}
We first check that (1)$\Rightarrow$(2).
Assuming that for all $A_1,\ldots, A_4\in \M$,
$\phi_4(A_1,A_2, A_3, A_4)$, we trivially have this for all  $A_1,\ldots, A_4\in \M_n$; we also have this for all sequences of
\emph{distinct}  $A_1,\ldots, A_4\in \M_n$.
So now fix a set of four such sets, say $A_1,\ldots, A_4\in \M_n$.
If $\set{A_1,\ldots, A_4}$ were not balanced, then every point $i\in [2n]$ would belong to exactly two of these.  Hence the sequence $A_1,\ldots, A_4$
would violate $\phi_4$.

Here is the proof of (2)$\Rightarrow$(1).
Suppose that there no balanced subsets of $\M_n$ of size $4$.
We claim that for  all $A_1$, $A_2$, $A_3$, $A_4$, 
$\phi_4(A_1,\ldots, A_4)$.  Suppose  first that $A_1, \ldots, A_4$ are distinct,
and they belong to $\M\cup \H$ and that every $i\in[2n]$ belongs to at most $2$ of them.
Since $\H= \varnothing$, they belong to $\M$.  Since  every $i\in[2n]$ belongs to at most $2$ of these sets,
they must have size $n$ (not larger).  So $A_1, \ldots, A_4\in \M_n$.  Being distinct, we have a set $\set{A_1, \ldots, A_4}$ of size $4$.
By hypothesis, this set is balanced.  So we have a contradiction.
It remains to consider the case when  $A_1, \ldots, A_4$ are not distinct. Suppose that $A_1 = A_2$.  Then each $i\in A_1$ belongs to two sets
in the list $A_1, \ldots, A_4$.   Hence each $i\notin A_1$ must belong to both $A_3$ and $A_4$.  Thus $A_3 = A_4$, and this is $A_1^c = A^c_2$.
But now $\M$ contains a set and its complement, and this contradicts maximality.

\end{proof}

\begin{exa}
A relevant example for the last part is the following sequence of six $5$-subsets of a $10$-set:
\[
\set{1,2,3,4,5},
\set{1,2,3,4,5},
\set{1,2,8,9,10},
\set{3,6,7,8,9},
\set{4,6,7,8,10},
\set{5,6,7,9,10}
\] 
Every $1\leq i\leq 10$ is in exactly three of the sets in this sequence (counting the first two sets as different in the sequence).
\qedexa
\end{exa}

\begin{prop}
Let $\F$ be the family of $3$-subsets of $[6]$ from  
Example~\ref{example-index-6}:
\[ \set{\set{1, 2, 3}, \set{1, 2, 4}, \set{1, 2, 6} ,\set{1, 3, 5}, \set{1, 4, 5},\set{1, 4, 6}, \set{1, 5, 6},\set{2, 4, 5}, \set{2, 5, 6}, \set{3, 4, 6}}.
\]
\begin{enumerate}
\item $\F$ has no balanced subsets of size $4$.
\item $([6],\M)$ is a non-measurable frame of index $6$, where $\M$ is 
the family of subsets $[6]$ of size at least $4$ together with $\F$.
\end{enumerate}
\end{prop}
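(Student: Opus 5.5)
The plan is to check part (1) by a short finite case analysis. Part (2) then follows by assembling the standard-frame results proved above.

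\textbf{Part (1).} First I will record two facts about $\F$.
\begin{itemize}
\item $\F$ contains no complementary pair $B, B^c$. This is a direct check over the ten sets.
\item A balanced subfamily of size $4$ consisting of $3$-subsets of $[6]$ must have every point in exactly two of its members, since $4\cdot 3 = 12 = 2\cdot 6$.
\end{itemize}
In particular, exactly two members of such a subfamily avoid $1$. The only sets of $\F$ avoiding $1$ are $\set{2,4,5}$, $\set{2,5,6}$ and $\set{3,4,6}$, so there are only three choices for this pair. For each choice I will tally how often each point is already covered and ask for two sets of the form $\set{1,x,y}\in\F$ that fill in the remaining multiplicities.
\begin{itemize}
\item For $\set{2,4,5},\set{2,5,6}$: the point $3$ is uncovered, so both remaining sets must contain $1$ and $3$. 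They must also avoid $2$ and $5$, which forces them to be $\set{1,3,4}$ and $\set{1,3,6}$. Neither is in $\F$.
\item The other two choices, $\set{2,4,5},\set{3,4,6}$ and $\set{2,5,6},\set{3,4,6}$, are handled the same way.
\end{itemize}
This is the only genuinely computational step. I expect it to be routine, but it is where care is needed.

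\textbf{Part (2).} First I will check that $([6],\M)$ is maximal.
\begin{itemize}
\item It is standard with $n=3$ by construction.
\item $\M_3=\F$ contains no set together with its complement.
\item $|\F| = 10 = \frac12\binom{6}{3}$, so Lemma~\ref{lemma-maximal} gives $\H=\varnothing$.
\end{itemize}
Next I will bound the index from above. The underlined sequence in Example~\ref{example-index-6} satisfies ({\sc c1}) and ({\sc c4}), but it fails ({\sc c3}) because $\H=\varnothing$. So the frame is incoherent with index at most $6$. By Lemma~\ref{lem:measurable_implies_coherent-nonconditional}, it is therefore non-measurable.

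Finally I will rule out every index below $6$.
\begin{itemize}
\item \emph{Odd lengths.} Proposition~\ref{no_odd_index} gives $\phi_1,\phi_3,\phi_5$ for every sequence. (For $k=1$ one can also argue directly: ({\sc c2}) forces $A_1=\varnothing\notin\M\cup\H$.)
\item \emph{Length $2$.} Proposition~\ref{prop-phis-standard}(2), together with the absence of complementary pairs in $\M_3$, gives $\phi_2$ for every pair.
\item \emph{Length $4$.} Part (1) and Proposition~\ref{proposition-maximal-2} give $\phi_4$ for all $A_1,\ldots,A_4\in\M$. Since $\H=\varnothing$, every sequence satisfying ({\sc c1}) lies in $\M$, so this covers all relevant sequences of length $4$.
\end{itemize}
Hence the index is exactly $6$.
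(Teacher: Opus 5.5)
The gap is the step you left unwritten, and it cannot be filled: part (1) is false. Your first case, $\{2,4,5\},\{2,5,6\}$, is handled correctly. The other two cases do not go ``the same way''.

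For $\{2,4,5\},\{3,4,6\}$, the point $4$ is already covered twice and $2,3,5,6$ once each. So you need two sets $\{1,x,y\},\{1,z,v\}\in\mathcal{F}$ with $\{x,y,z,v\}=\{2,3,5,6\}$. Both $\{1,2,3\},\{1,5,6\}$ and $\{1,2,6\},\{1,3,5\}$ lie in $\mathcal{F}$. For $\{2,5,6\},\{3,4,6\}$ you need such a partition of $\{2,3,4,5\}$, and both $\{1,2,3\},\{1,4,5\}$ and $\{1,2,4\},\{1,3,5\}$ work.

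For instance, every point of $[6]$ lies in exactly two of the sets $\{1,2,3\},\{1,5,6\},\{2,4,5\},\{3,4,6\}$. So this is a balanced subset of $\mathcal{F}$ of size $4$. The paper gives no argument beyond the remark in Example~\ref{example-index-6} and the reduction in Proposition~\ref{proposition-maximal-2}, which is the same reduction you use. The error is therefore already in the claim you were asked to prove.

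Consequently the index claim in part (2) is also false. That quadruple satisfies (\textsc{c1}) and (\textsc{c4}) but violates (\textsc{c3}), because $\mathcal{H}=\varnothing$; hence $\neg\phi_4$ holds for it. Your arguments for maximality, incoherence via the underlined six sets, non-measurability, and $\phi_1,\phi_2,\phi_3$ are all correct. What they actually establish is that $([6],\mathcal{M})$ is a maximal, non-measurable frame of index exactly $4$.

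Nor can the example be repaired by changing $\mathcal{F}$ within maximal frames on $[6]$. Encode a maximal family by colouring each edge $ab$ of $K_5$ on $\{2,\dots,6\}$ red if $\{1,a,b\}\in\mathcal{F}$ and blue otherwise. Balanced $4$-subsets then correspond exactly to $4$-cycles alternating red and blue. If there are none, the red graph is a threshold graph: $ab$ is red iff $w(a)+w(b)>t$. Replace $w$ by $w+c$ and $t$ by $t+2c$ for a large constant $c$, and set $\mu(x)=w(x)$ for $x\geq 2$ and $\mu(1)=\sum_{x\geq 2}w(x)-2t$. This $\mu$ represents the frame. So every maximal frame on $[6]$ satisfying $\phi_4$ is measurable, and none has index $6$.
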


\begin{prop} \label{prop-incoherence-index-six} \;
\begin{enumerate}
\item There is a family $\F$ of $5$-subsets of $[10]$ such that $|\F| =  \frac{1}{2}{{10}\choose{5}} = 126$,
 $\F$
has no balanced subsets of size $2$ or $4$, but $\F$ does have a balanced subset of size $6$.

\item
There is a maximal frame $([10], \M)$ 
which is incoherent, and with index $6$.
\end{enumerate}
\end{prop}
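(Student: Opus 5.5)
The first observation is that a balanced subset of size $2$ of a family of $n$-subsets of $[2n]$ is exactly a complementary pair $\set{A,A^c}$. So the condition in part (1) that $\F$ has no balanced subset of size $2$, together with $|\F| = \frac{1}{2}\binom{10}{5}=126$, says that $\F$ selects \emph{exactly one} set from each of the $126$ complementary pairs of $5$-subsets of $[10]$. I would therefore build $\F$ as such a selection, seeded with the balanced six-set family of Example~\ref{tensets}:
\[ \set{3, 5, 6, 9, 10},\ \set{1, 2, 3, 7, 10},\ \set{1, 2, 4, 6, 9},\ \set{4, 7, 8, 9, 10},\ \set{2, 5, 6, 7, 8},\ \set{1, 3, 4, 5, 8}. \]
I would first check that these six contain no complementary pair, so they can all be chosen. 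Including them guarantees a balanced subset of size $6$.

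It then remains to choose one member from each of the other $120$ pairs so that no four chosen sets form a balanced family, i.e.\ so that no four of them cover every point exactly twice. This is a finite constraint-satisfaction problem:
\begin{itemize}
\item one Boolean variable per complementary pair;
\item one clause forbidding each $4$-element balanced family of $5$-sets. Such families are easily enumerated, e.g.\ as two disjoint pairs $A\cup B$ and $C\cup D$ with prescribed overlaps, or more directly by brute force.
\end{itemize}
I would solve it by backtracking search (or a SAT solver), starting from the forced seed. Once a candidate $\F$ is produced, part (1) is certified mechanically by checking all $\binom{126}{4}$ quadruples for balance. This search is the main obstacle: there is no systematic construction, consistent with the paper's remark that known examples of index $>4$ were found by searching. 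If the seeded search fails, I would try a different balanced sextuple as the seed, or impose a symmetry group on $[10]$ to shrink the search space.

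For part (2), let $\M$ consist of all subsets of $[10]$ of size $\geq 6$ together with $\F$. Then $([10],\M)$ is standard with $n=5$ and $\M_5=\F$.
\begin{itemize}
\item Since $\F$ contains no set together with its complement and $|\M_5| = \frac12\binom{10}{5}$, Lemma~\ref{lemma-maximal} gives $\H=\varnothing$, so the frame is maximal.
\item Odd lengths: by Proposition~\ref{no_odd_index}, $\phi_{2k+1}$ holds for all sequences.
\item Length $2$: by Proposition~\ref{prop-phis-standard}(2) and condition ($**$), $\phi_2$ holds for all sequences.
\item Length $4$: by Proposition~\ref{proposition-maximal-2} and part (1), $\phi_4$ holds for all sequences.
\end{itemize}
Hence the index is at least $6$. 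Finally, the six seed sets lie in $\M$ and every point lies in exactly three of them. So ({\sc c1}) and ({\sc c2}) (indeed ({\sc c4})) hold, but ({\sc c3}) fails because $\H=\varnothing$. Thus $\phi_6$ fails, the frame is incoherent, and its index is exactly $6$.
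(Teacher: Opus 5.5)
Your derivation of part (2) from part (1) is correct and is the paper's argument. It uses standardness with $\mathcal{M}_5=\mathcal{F}$, gets $\mathcal{H}=\varnothing$ from Lemma~\ref{lemma-maximal}, handles odd lengths by Proposition~\ref{no_odd_index} and lengths $2$ and $4$ by Propositions~\ref{prop-phis-standard} and~\ref{proposition-maximal-2}, and uses the six balanced sets to violate ({\sc c3}). Your observation that part (1) amounts to choosing exactly one set from each of the $126$ complementary pairs is also right.

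\textbf{The gap is part (1) itself.} It is a bare existence claim, and you never establish existence. You describe a backtracking/SAT search on the remaining $120$ pairs that avoids all balanced quadruples. But you neither exhibit an output nor give any reason the constraint problem is satisfiable. Your fallback (``if the seeded search fails, try a different seed or impose a symmetry'') concedes that nothing in the argument guarantees success. There is also no cheap non-constructive substitute. There are $40{,}950$ balanced quadruples of $5$-subsets of $[10]$ with no complementary pair, and a uniformly random selection contains each with probability $1/16$. So such a selection contains roughly $2{,}560$ of them on average, and the first-moment method fails. The deletion trick is unavailable because the family must have size exactly $126$.

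The paper closes this step by writing down an explicit family of $126$ sets, whose first line is exactly your seed. It then certifies by computer that the family contains no complementary pair and that none of its $\binom{126}{4}$ quadruples is balanced. So your plan would in fact succeed if carried out, since the paper's family solves your seeded constraint problem. But the proof of part (1), and hence of part (2), consists precisely in producing that witness and verifying it. A description of how one might search for it does not prove that it exists.
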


\begin{proof} One family $\F$ meeting these conditions is shown below
\[
{\tiny
\begin{array}{l}
\set{3, 5, 6, 9, 10},  \set{1, 2, 3, 7, 10},  \set{1, 2, 4, 6, 9},  \set{4, 7, 8, 9, 10},  \set{2, 5, 6, 7, 8},  \set{1, 3, 4, 5, 8},\\
 \set{4, 6, 7, 9, 10},
 \set{2, 3, 4, 6, 9},
 \set{1, 2, 3, 7, 9},
 \set{2, 3, 5, 6, 9},
 \set{1, 2, 4, 7, 9}, 
 \set{1, 3, 4, 6, 7},
 \set{1, 5, 7, 9, 10},
 \set{4, 5, 8, 9, 10},
 \set{2, 3, 5, 9, 10},
 \set{2, 4, 5, 6, 7},  \\
 \set{1, 2, 3, 4, 6},
 \set{4, 5, 6, 9, 10},
 \set{1, 4, 6, 7, 9},
 \set{1, 4, 5, 9, 10},
 \set{1, 4, 5, 7, 10},
 \set{2, 4, 6, 9, 10},
 \set{2, 3, 4, 8, 9},
 \set{3, 4, 5, 6, 8},
 \set{2, 5, 7, 8, 10},
 \set{1, 3, 4, 7, 9},  \\
 \set{1, 3, 5, 7, 9},
 \set{2, 5, 7, 8, 9},
 \set{2, 4, 7, 8, 10},
 \set{1, 3, 4, 5, 9},
 \set{3, 5, 7, 8, 9}, 
 \set{3, 4, 7, 8, 10},
 \set{2, 4, 5, 6, 10},
 \set{2, 3, 4, 7, 9},
 \set{1, 2, 3, 5, 7},
 \set{2, 4, 7, 9, 10}, \\
 \set{1, 2, 4, 9, 10},
 \set{2, 3, 4, 6, 7},
 \set{2, 4, 5, 8, 9},
 \set{2, 4, 5, 7, 9}, 
 \set{2, 4, 5, 6, 9}, 
 \set{3, 4, 5, 7, 10},
 \set{1, 2, 5, 7, 9},
 \set{1, 3, 4, 5, 7},
 \set{3, 4, 6, 7, 9},
 \set{3, 4, 5, 9, 10}, \\
 \set{3, 4, 6, 9, 10},
 \set{3, 5, 6, 7, 10},
 \set{2, 4, 5, 6, 8},
 \set{4, 5, 7, 8, 10},
 \set{2, 5, 6, 9, 10},
 \set{1, 2, 5, 6, 7},
 \set{3, 5, 7, 9, 10},
 \set{2, 3, 4, 5, 10},
 \set{3, 5, 6, 7, 9},
 \set{1, 2, 4, 5, 8},\\
 \set{2, 4, 6, 7, 10},
 \set{2, 5, 6, 7, 10},
 \set{1, 2, 3, 5, 10},
 \set{2, 3, 4, 7, 8},
 \set{1, 3, 5, 7, 10}, 
 \set{2, 3, 4, 9, 10},
 \set{1, 3, 4, 5, 6},
 \set{3, 4, 7, 9, 10},
 \set{1, 2, 5, 9, 10},
 \set{1, 2, 3, 4, 7},\\
 \set{4, 5, 7, 9, 10},
 \set{2, 3, 5, 6, 7},
 \set{3, 4, 5, 6, 7},
 \set{2, 4, 5, 9, 10},
 \set{2, 5, 7, 9, 10}, 
 \set{1, 2, 4, 5, 6},
 \set{1, 2, 3, 4, 10},
 \set{1, 2, 4, 5, 7},
 \set{2, 4, 5, 7, 8},
 \set{1, 2, 3, 4, 9},\\
 \set{2, 3, 4, 5, 9},
 \set{2, 3, 5, 8, 9},
 \set{1, 4, 5, 6, 9},
 \set{2, 4, 6, 7, 9},
 \set{3, 4, 5, 7, 9}, 
 \set{1, 3, 4, 5, 10},
 \set{2, 4, 7, 8, 9},
 \set{2, 3, 4, 5, 8},
 \set{1, 4, 5, 7, 8},
 \set{2, 3, 6, 7, 9},\\
 \set{1, 2, 4, 5, 10},
 \set{2, 3, 4, 5, 6},
 \set{1, 2, 5, 7, 10},
 \set{2, 3, 6, 7, 10},
 \set{2, 5, 6, 7, 9},
 \set{2, 3, 4, 7, 10},
 \set{2, 3, 5, 6, 10},
 \set{1, 2, 4, 6, 7},
 \set{1, 4, 7, 9, 10},
 \set{5, 6, 7, 9, 10},\\
 \set{3, 4, 7, 8, 9},
 \set{1, 2, 4, 7, 10},
 \set{3, 4, 5, 7, 8},
 \set{3, 4, 5, 6, 9},
 \set{1, 2, 4, 5, 9},
 \set{2, 4, 5, 8, 10},
 \set{4, 5, 6, 7, 9},
 \set{1, 5, 6, 7, 9},
 \set{1, 4, 5, 6, 7},
 \set{1, 2, 3, 5, 9},\\
 \set{3, 4, 5, 8, 9},
 \set{1, 2, 3, 4, 5},
 \set{2, 3, 4, 6, 10},
 \set{3, 4, 6, 7, 10},
 \set{2, 3, 4, 5, 7},
 \set{2, 3, 5, 7, 10},
 \set{1, 4, 5, 7, 9},
 \set{1, 4, 5, 6, 10},
 \set{1, 3, 4, 7, 10},
 \set{2, 4, 6, 7, 8},\\
 \set{3, 4, 5, 6, 10},
 \set{3, 4, 5, 8, 10},
 \set{2, 3, 5, 7, 8},
 \set{2, 3, 7, 9, 10},
 \set{4, 5, 6, 7, 10},
 \set{4, 5, 7, 8, 9},
 \set{2, 3, 5, 7, 9},
 \set{1, 3, 5, 6, 7},
 \set{2, 4, 5, 7, 10},
 \set{4, 5, 6, 7, 8}
 \end{array}
 }
 \]
The top line above shows the six sets in
Example~\ref{example-prob-notation}(\ref{key-example}).
Every $i\leq 10$ belongs to exactly three of these six sets.
Turning to $\F$ as a whole, there are no complementary pairs.
 There are ${{126}\choose{4}} = 10,009,125$ subsets of $\F$ of size $4$.  
One can check that none of those sets are balanced with an easily-written computer program.  
(Indeed, we constructed $\F$ by a guided search whereby one 
iteratively builds $\F$, checking along the way that there are no balanced subsets of 
size $4$.  But at six of the $126$ steps, one has to add the elements of a
balanced set of size six.  These two goals seem to work against each other,
and the overall work required a fair amount of random searching.)

 For the second assertion, we take the $126$ sets listed above and add them to the subsets of $[10]$ of size at least $6$.
 This gives a standard frame.
 Since $|\M| =  \frac{1}{2}{{10}\choose{5}} = 126$, we use
 Proposition~\ref{prop-phis-standard}
 to see that $\M$ is maximal.   
 Also,
$\M$ has the following properties:
 $\phi_2(A_1, A_2)$ for all $A_1$, $A_2$;   $\phi_4(A_1,\ldots, A_4)$ for all $A_1$, $\ldots$ $A_4$;
and $\nott\phi_6(A_1,\ldots, A_6)$, where   
 $A_1$, $\ldots$, $A_6$  are the sets in Example~\ref{example-prob-notation}(\ref{key-example}).  
  This last point implies that $\structure$ is not coherent.

By Proposition~\ref{proposition-maximal-2}, the index of $\structure$ is $>4$, and since $\M$ has a balanced subset of size $6$,
the index is $\leq 6$.   However, the index cannot be $5$, in view of Proposition~\ref{no_odd_index}.  So it is $6$. 
\end{proof}

Incidentally, we have not been able to construct a frame on $8$ points
which is incoherent but of index $6$.

 \begin{conjecture}\label{conjecture-two}
For each $k$, and all sufficiently large $n$,  there is a set $\F$
of $n$-subsets of $[2n]$ with the following properties:
\begin{enumerate}
\item $\F$ has no balanced subsets of size $2, 4, \ldots, 2k$.

\item $\F$ does have a balanced subset of size $2k + 2$.
\item $\F$ is as big as possible with these properties, hence of size $\frac{1}{2} {{2n}\choose{n}}$.
\end{enumerate}
 \end{conjecture}

In other words,
we conjecture that  Proposition~\ref{prop-incoherence-index-six} generalizes from $6$ to all larger numbers,
provided we allow $n$ to be a larger number.   If one were to try to get the next result in this line of work,
it would be to raise $n$ from $10$ to $14$ and to try for a family $\F$ of $7$-subsets of $[14]$
of size $\frac{1}{2}\binom{14}{7} = 3,432$
with no balanced subsets of size $2$, $4$, or $6$
but with a balanced subset of size $8$.

 Showing Conjecture~\ref{conjecture-two} would be one way to show Conjecture~\ref{conjecture-one}.
 But there may well be other ways to attack Conjecture~\ref{conjecture-one}.  We were led to maximal frames
 because it seemed easy to work with them.  However, we know little about incoherent frames.
There may be other classes of such spaces that might help in proving the ultimate goal here, the
 statement that coherence does not follow from any finite subset of its instances.

\end{document}